\newtheorem{theorem}{Theorem}
\newtheorem{observation}{Observation}
\newtheorem{claim}{Claim}
\newtheorem{lemma}{Lemma}
\newtheorem{definition}{Definition}
\newtheorem{example}{Example}
\newtheorem*{example*}{Example}
\newtheorem{proposition}{Proposition}
\newtheorem{note}{Note}
\newtheorem{construction}{Construction}
\definecolor{RED}{rgb}{1,0,0}\definecolor{BLUE}{rgb}{0,0,1} 
\newenvironment{restate}[1]{\begin{trivlist} \item {\bf #1 (restated)}. 
  \em} {\end{trivlist}}
\newcommand{\lemmaDeterToBool}{
  Let~$M_\deter$ be a deterministic credential-based mechanism and let the function~$f$ be as constructed above.
  Then~$f$ is monotonic.
  }
\newcommand{\partialBooleanfunction}{
  Let~$M_1$ and~$M_2$ be two mechanisms such that~${\emptyset \neq \Prof{M_1} \subseteq \Prof{M_2}}$.
  Let~$T_1$ and~$T_2$ be the sets of all user availability vectors and~$F_1$ and~$F_2$ be the sets of all attacker availability vectors in~$\Prof{M_1}$ and~$\Prof{M_2}$ respectively.
  Then,~$T_1 \subseteq T_2$ and~$F_1 \subseteq F_2$.
}
\newcommand{\profilePartialFnc}{
  For all mechanisms, there exists an equivalent monotonic partial Boolean mechanism.
}
\newcommand{\numViableScenarios}{    
  The number of viable scenarios for~$n$ credentials is~$4^n-3^n$.
}
\newcommand{\profileSizeBound}{
  Let~$(T,F)$ be a partial Boolean function of~$n$ credentials. 
  Let~$s = max(|T|,|F|)$.
  The~maximum number of scenarios that can be added to the profile of the mechanism~$M_{(T,F)}$ without contradicting~it is
  ${s \cdot (2^n-s)-|\Prof{M_{(T,F)}}|}$ if~$s \geq 2^{n-1}$ and~${4^n-3^n - |\Prof{M_{(T,F)}}|}$ otherwise.
}
\newcommand{\negspace}{\vspace{-0.50\baselineskip}}
\newcommand{\snegspace}{\vspace{-0.25\baselineskip}}
\newcommand{\Msgs}{\ensuremath{\textit{Pending}}\xspace} 
\newcommand{\msg}{\ensuremath{\textit{msg}}\xspace} 
\newcommand{\ord}{\ensuremath{\gamma_\textit{ord}}\xspace} 
\newcommand{\id}{\ensuremath{\gamma_\textit{ID}}\xspace} 
\newcommand{\prof}{\ensuremath{\pi}\xspace} 
\newcommand{\Prof}[1]{\ensuremath{\Pi({#1})}\xspace}
\newcommand{\os}{\ensuremath{\textit{OS}\xspace}} 
\newcommand{\cre}{\ensuremath{\textit{cred}\xspace}} 
\newcommand{\deter}{\ensuremath{\textit{det}\xspace}} 
\newcommand{\rand}{\ensuremath{\textit{v}\xspace}} 
\newcommand{\gen}[2]{\ensuremath{\textit{gen}_{#1}({#2})\xspace}} 
\newcommand{\genl}[3]{\ensuremath{\textit{gen}_{#1}^{#2}({#3})\xspace}} 
\newcommand{\step}[2]{\ensuremath{\textit{step}_{#1}({#2})\xspace}} 
\newcommand{\processing}{\ensuremath{\textit{processing}\xspace}}
\newcommand{\exec}{\ensuremath{E}\xspace} 
\newcommand{\mloop}{\ensuremath{\textit{Execute}}\xspace} 
\newcommand{\setup}{\ensuremath{\textit{Setup}}\xspace}
\newcommand{\mech}{\text{mechanism}}
\newcommand{\cred}{\text{credential}}
\newcommand{\omga}{\ensuremath{M}\xspace}
\newcommand{\T}{\ensuremath{\textit{T}}\xspace} 
\newcommand{\F}{\ensuremath{\textit{F}}\xspace} 
\newcommand{\suc}{\ensuremath{\textit{suc}}\xspace}
\newcommand{\safe}{\ensuremath{\textit{safe}}\xspace} 
\newcommand{\loss}{\ensuremath{\textit{loss}}\xspace} 
\newcommand{\leak}{\ensuremath{\textit{leak}}\xspace} 
\newcommand{\theft}{\ensuremath{\textit{theft}}\xspace}
\newcommand{\viableScenarios}{\ensuremath{\textit{viableScenarios}}\xspace} 
\newcommand{\indexx}{\ensuremath{\textit{idx}}\xspace}
\newcommand{\maxSuccessProb}{\ensuremath{\textit{maxSuccessProb}}\xspace} 
\newcommand{\successProb}{\ensuremath{\textit{successProb}}\xspace} 
\newcommand{\maxTable}{\ensuremath{\textit{maxTable}}\xspace} 
\newcommand{\potentialSuccessProb}{\ensuremath{\textit{potentialSuccessProb}}\xspace} 
\newcommand{\truthTable}{\ensuremath{\textit{truthTable}}\xspace} 
\newcommand{\maxPossibleAdditions}{\ensuremath{\textit{maxAdditionalProb}}\xspace} 
\newcommand{\currProfile}{\ensuremath{\textit{currProfile}}\xspace} 
\newcommand{\possibleScenariosProbabilitiesSorted}{\ensuremath{\textit{possibleScenariosProbabilitiesSorted}}\xspace} 
\newcommand{\numPossibleAdditions}{\ensuremath{\textit{numPossibleAdditions}}\xspace} 
\newcommand{\dontCare}{\ensuremath{\bot}\xspace} 
\newcommand{\prevUserVal}{\ensuremath{\textit{prevUserVal}}\xspace} 
\newcommand{\prevAttVal}{\ensuremath{\textit{prevAttVal}}\xspace} 
\newcommand{\tableWithScenario}{\ensuremath{\textit{newTable}}\xspace} 
\newcommand{\scenarioBasedSearch}{\ensuremath{\textit{scenarioBasedSearch}\xspace}}
\begin{document}

\title{Asynchronous Authentication}

\author{Marwa Mouallem}
\email{marwamouallem@cs.technion.ac.il}
\orcid{0009-0008-6334-1183}
\affiliation{%
  \institution{Technion}
  \city{Haifa}
  \country{Israel}
}

\author{Ittay Eyal}
\email{ittay@technion.ac.il}
\orcid{0000-0001-7595-2258}
\affiliation{%
  \institution{Technion}
  \city{Haifa}
  \country{Israel}
}

\begin{abstract}
A myriad of authentication mechanisms embody a continuous evolution from verbal passwords in ancient~times to~contemporary multi-factor authentication: 
Cryptocurrency wallets~advanced from a single signing key to using a handful of well-kept credentials, and for online services, the infamous~``security questions'' were all but abandoned. 
Nevertheless, digital asset heists and numerous identity theft cases illustrate the urgent need to revisit the fundamentals of user authentication. 

We abstract away credential details and formalize the general, common case of \emph{asynchronous authentication}, with unbounded message propagation time. 
Given credentials' fault probabilities (e.g., loss or leak), we seek mechanisms with maximal success probability. 
Such analysis was not possible before due to the large number of possible mechanisms.
We show that every mechanism is dominated by some \emph{Boolean mechanism}---defined by a monotonic Boolean function on presented credentials. 
We present an algorithm for finding approximately optimal mechanisms by leveraging the problem structure to reduce complexity by orders of magnitude.

The algorithm immediately revealed two surprising results: 
Accurately incorporating easily-lost credentials improves cryptocurrency wallet security by orders of magnitude. 
And novel usage of (easily-leaked) security questions improves authentication security for online services. 
\end{abstract}

\maketitle

\section{Introduction}


Authentication is a cornerstone of access control and security~\cite{Auth_from_passwords_to_public_keys, computer_security_principles_and_practice}. 
From ancient times~\cite{eaton_2011} to the early days of the Internet, passwords were often sufficient~\cite{rejecting_the_death_of_passwords}. 
As stakes grow higher with the adoption of online services, companies
deploy advanced \emph{authentication mechanisms}~\cite{MFA,SmartCustody}, and abandon tools like the easily guessable ``security questions'' like ``name of first pet.'' 
%
%
%
Nascent technologies raise the bar further---in cryptocurrencies, in the event of credential loss or leak, there is no central authority that can assist in asset recovery. 
Users thus choose advanced solutions (e.g.,~\cite{Argent}), utilizing multiple well-kept cryptographic keys. 
And even some web services (e.g., the U.S. government website~\cite{us_gov}) require multi-factor authentication, while allowing users to combine various credentials like passwords, fingerprints, face recognition, text messages, security keys and more;
thus, users choose their own authentication mechanism.
%
%
%
%
%
%
%
%
%
%
At times, the number of credentials is larger than what the user is aware of. 
For example, 
online services (e.g., Google~\cite{google_location,google_cookies}) use a device's location, IP address or cookies as additional credentials.
Nevertheless, identity theft incidents are frequent~\cite{ftc2022consumer}, 
so is loss of Bitcoin~\cite{chainalysis2020btcUse, btcloss}, as well as digital asset theft, including a recent heist valued at~\$600M due to a single company's key mismanagement~\cite{cointelegraph2022aftermath}. 
Those illustrate the urgent need to revisit the fundamentals of user authentication. 

Previous work (\S\ref{sec:prev_work}) proposed and deployed various authentication solutions.
These encompass numerous \emph{credential} types, like passwords, biometrics, and smart cards~\cite{SoK_The_Quest_to_Replace_Passwords}. 
But no single credential can be completely relied on~\cite{keep_on_rating,Review_Reports_on_User_Auth_in_Cyber_Security}, leading to the development of advanced mechanisms such as multi-factor authentication~(MFA) and interactive protocols~\cite{MFA,SmartCustody,Argent}. 
However, these works leave open the question of how to model and compare mechanisms' security. 
Eyal~\cite{walleddesign} analyzes a specific type of mechanism, defined by monotonic Boolean functions on presented \emph{credentials} for a small number of credentials.
Maram et al.~\cite{cryptoeprint:2022/1682} study the authentication problem, but with a bounded-delay communication channel between the mechanism and the user---which is often not available.
E.g., when logging in to a web service or issuing small cryptocurrency transactions.

In this work, we study general authentication mechanisms in the common asynchronous setting, with the goal of finding optimally secure mechanisms.
First, we define the \emph{asynchronous authentication problem}~(\S\ref{sec:model}), 
where an authentication mechanism interacts with a \emph{user} and an \emph{attacker} (representing all adversarial entities~\cite{walleddesign, cryptoeprint:2022/1682}) and tries to identify the user.
The mechanism utilizes credentials.
Each credential can be known to the user, the attacker, both, or neither.
The states of all credentials define a \emph{scenario}.
A mechanism is \emph{successful} in a scenario if it correctly identifies the user, for all attacker behaviors and message delivery times. 
The \emph{profile} of a mechanism is the set of all scenarios in which it is successful.


We seek to reason about a general mechanism's success when~message delivery time is unbounded, yet cryptographic assumptions bound running time according to a security parameter.
For simplicity, we first consider ideal credentials that cannot~be forged, following similar abstractions in distributed-systems literature~\cite{van2017distributed,tanenbaum2007distributed, gorilla, safe_permissionless_consensus, Monoxide}. 
We defer a model that considers cryptographic assumptions to the appendix. 
Roughly, we require success when the security parameter~is~large enough, otherwise non-decision is acceptable.
This~approach may be of independent interest for analyzing distributed protocols that rely on cryptography in asynchronous~networks.

To design a mechanism, one must estimate the probabilities of credential faults, e.g., loss or leak~\cite{walleddesign}. 
An optimal mechanism has the maximum probability of success given those estimates. 
To find optimal mechanisms we turn our attention to \emph{Boolean mechanisms}: mechanisms that decide according to a Boolean function of the credentials presented to them. 
For any authentication mechanism, there exists a Boolean mechanism that dominates it---one that succeeds in the same scenarios or more~(\S\ref{sec:mech_to_bool:bool_are_better}).
The proof uses a series of reductions. 
First, interactive message exchanges do not enhance mechanism security---every interactive mechanism is dominated by a non-interactive mechanism 
(assuming as a theoretical construct\footnote{In practice interactivity and randomness are useful, e.g., if multiple steps are required to prove credential availability 
(e.g., challenge-response~\cite{rhee2005challenge} and interactive proofs of knowledge~\cite{feigenbaum1992overview}), 
and for mechanisms usability and deployability~\cite{SoK_The_Quest_to_Replace_Passwords} 
(e.g., a mechanism sending an SMS with a verification code only after verifying a password).} that credential availability can be proven with a single message). 
Second, decisions can be made based solely on credential availability proofs.
Third, even if randomness were utilized, there exists a dominating deterministic one-shot mechanism.
The proofs are constructive and rely on a mechanism's ability to simulate an execution of another. 
Finally, for any deterministic mechanism, there exists a dominating mechanism defined by a monotonic Boolean function of the credentials' availability---a \emph{monotonic Boolean mechanism}. 

Let us thus focus on such mechanisms. 
We show that any monotonic Boolean mechanism defined by a non-constant function is not dominated (maximal), and any maximal mechanism is equivalent (same profile) to a monotonic Boolean mechanism~(\S\ref{sec:mech_to_bool:maximal_mechs}). 
This result shows that studying such mechanisms suffices in the asynchronous setting and reveals a somewhat surprising fact: 
There is no probability-agnostic hierarchy of Boolean mechanisms in the asynchronous setting unlike under synchrony~\cite{cryptoeprint:2022/1682}.
For example, with two credentials~$c_1$ and~$c_2$, consider the mechanism where both~$c_1$ and~$c_2$ are required to authenticate and the mechanism where only~$c_1$ is required.
One can easily verify that neither mechanism dominates the other. 
In fact, this is true in general, for any two non-trivial~$n$-credential Boolean mechanisms.

Since there is no hierarchy of Boolean mechanisms, in general, any such mechanism could be optimal. 
But evaluating all possible mechanisms, as done by Eyal~\cite{walleddesign}, quickly becomes prohibitively complex, as their number grows super-exponentially with the number of credentials~\cite{Dedekind1897}.

Instead, we present a scenario-based algorithm to find approximately optimal mechanisms~(\S\ref{sec:prob}). 
Given~$\delta > 0$, the algorithm finds a \emph{$\delta$\nobreakdash-optimal mechanism}, i.e., a mechanism whose success probability is at most~$\delta$ lower than that of an optimal mechanism. 
The algorithm works greedily by building a mechanism that succeeds in scenarios with the highest probability. 
Once no more scenarios can be added, it removes existing scenarios to add new ones. 
An analysis of different probability distributions shows that it is sufficient to focus on a small number of high-probability scenarios. 
Rather than searching all mechanisms, the algorithm stops when the scenarios that might be added have additional probability lower than~$\delta$. 
This is calculated by bounding the number of scenarios in which a mechanism can succeed. 

Our algorithm allows for finding approximately-optimal mechanisms with many credentials.
This is directly useful for mechanism design. 
For example, when designing a cryptocurrency wallet, users tend to use a handful of high-quality credentials~\cite{Multisig,cold_wallet}, like hardware keys and long memorized mnemonics.
However, we show that adding a few easy-to-lose credentials (e.g., passwords that the user remembers but does not back up) reduces wallet failure probability by orders of magnitude.
Similarly, incorporating easy-to-leak credentials (e.g., the name of the user's pet) with high-quality credentials achieves a similar effect. (Though not in the way such credentials were typically used.) 

In summary~(\S\ref{sec:conclusion}), our contributions are: 
(1) formalization of the asynchronous authentication problem;
(2) proof that every maximal mechanism is equivalent to a monotonic Boolean function;
(3) proof that any two non-trivial monotonic Boolean mechanisms are either equivalent or neither is better, 
(4) an approximation algorithm of optimal mechanisms given the credentials fault probabilities, and 
(5) evaluation of realistic mechanisms and concrete lessons for their improvement. 

\snegspace
\section{Related Work}
\label{sec:prev_work}

Authentication is a fundamental aspect of security~\cite{Auth_from_passwords_to_public_keys, Applied_crypto, security_in_computing, cryptography_and_network_security, computer_security_principles_and_practice} 
that continues to be an active area of research~\cite{fi15040146, sym14040821}. 
Whether explicitly or implicitly, previous work mostly addressed two perspectives: credentials and protocols.
Credentials are the information used to authenticate a user such as passwords~\cite{Password_Security_A_Case_History}, one-time passwords (OTPs)~\cite{OTP}, biometrics~\cite{Biometric}, and physical devices~\cite{FIDO2}.
Protocols define the procedures for authenticating a user based on her credentials. 
Examples include Kerberos~\cite{Kerberos} for devices across a network and cryptocurrency wallets~\cite{coinbase,Ledger,Trezor}.
Our focus is on protocols, particularly the abstract mechanisms behind the protocols, which have received only scant attention. 

Vashi et al.~\cite{Review_Reports_on_User_Auth_in_Cyber_Security} and Bonneau et al.~\cite{SoK_The_Quest_to_Replace_Passwords} analyzed various credential types. 
However, both surveys ultimately concluded that no single credential type offers perfect security, underscoring the importance of protocols that incorporate multiple credentials.
Velásquez et al.~\cite{auth_schemes_and_methods_lit_review_1-2FA} survey single and multi-factor authentication schemes, examining different credentials combined into multi-factor methods.
Zimmermann et al.~\cite{keep_on_rating} rate dozens of deployed authentication methods, focusing on user challenges in credential protection rather than the security of the authentication mechanism. 

Burrows et al.~\cite{a_logic_of_authentication} describe the beliefs of parties involved in authentication protocols as a consequence of communication to realize who has which credentials.
Delegation Logic~\cite{deligation_logic} is a logic-based approach to distributed authorization, 
offering a formal framework for reasoning about delegation relationships.
Neither work compares mechanism security, which is the goal of this work.

The importance of authentication mechanisms has grown significantly due to the increasing value of digital assets, such as cryptocurrencies~\cite{crypto_market} and NFTs~\cite{NFT_market}. 
This is especially true~in decentralized systems, where users are responsible for securing their assets and, unlike centralized systems, no fallback authority exists.
Consequently, numerous cryptocurrency wallets have been developed.
The wide variety of authentication mechanisms used in cryptocurrency wallets highlights the need for a formal way to compare them.
Nevertheless, previous work focused on credential management and {implementation~\cite{SoK_Research_Perspectives_and_Challenges_for_Bitcoin_and_Cryptocurrencies, Bitter_to_Better_How_to_Make_Bitcoin_a_Better_Currency}}. 

Hammann et al.~\cite{access_graphs} uncover vulnerabilities arising from the links between different user credentials and accounts, where one can be used to log into the other.  
The paper focuses on the connection between different credentials and accounts.
However, it does not address the security of different mechanisms, which we do in this work.

Eyal~\cite{walleddesign} analyzes mechanisms that are Boolean functions of credential availability. 
We analyze general mechanisms in an asynchronous network and show that all maximally secure mechanisms in this setting can be reduced to mechanisms that are Boolean functions of credentials availability. 
Eyal finds optimal mechanisms given the probabilities of credential faults for up to~$5$ credentials using a brute force search~and 
bounds the failure probabilities of optimal mechanisms using a heuristic approach.
Although the brute-force search ensures optionality, it is not computationally feasible for more credentials.
We provide an approximation method for near-optimal mechanisms with a large number of credentials~based on our observations on the structure of maximally secure mechanisms. 

Maram et al.~\cite{cryptoeprint:2022/1682} study interactive authentication, where the user and the attacker interact with the mechanism over a synchronous network.
They also introduce security profiles of mechanisms, which we use in our analysis.
However, unlike our work, the paper assumes a synchronous communication model, which is not practical in many cases,
e.g., when processing small payments or logging in to a web service.
In such cases, waiting for the true user to receive a message (e.g., sent by email) is unacceptable.

In contrast to the synchronous case, we show that in the asynchronous case, mechanism security does not benefit from interactivity.
Our results do not imply that interactive mechanisms are not useful. 
On the contrary, are widely used in practice, e.g., in multi-factor authentication~\cite{MFA} and challenge-response protocols~\cite{rhee2005challenge}.
Consider for example Google's security alerts~\cite{google_security} that notify the user of suspicious activity in her account, 
allowing her to confirm or deny the activity. 
This is an interactive asynchronous two-factor authentication mechanism, where the user can respond to the mechanism's request at any time.
An equivalent non-interactive mechanism can be one that requires both the user's password and her fingerprint to authenticate. 
However, in this case, the non-interactive mechanism is less convenient for the user.
For every interactive mechanism, we show a theoretical reduction to a non-interactive mechanism that succeeds in the same scenarios or more.
All maximally secure mechanisms~have~equivalent mechanisms that are Boolean functions of credentials availability, which is not the case in the synchronous model.

Although our analysis uses methods similar to those used in distributed systems' theory like asynchronous communication and simulations, the authentication problem is distinct from classical problems like consensus or broadcast~\cite{algorithmic_foundations_of_differential_privacy}:
Success is defined by the decision of a single party and not multiple ones, and credentials take a central role.

\negspace
\section{Model}
\label{sec:model}
We formalize the asynchronous authentication problem~(\S\ref{sec:model:asynchronous_authentication}), specifying an execution, its participants, and their behavior.
Then we define mechanism success and the relation between different mechanisms~(\S\ref{sec:model:mechanism_success}). 

\negspace
\subsection{Asynchronous Authentication}
\label{sec:model:asynchronous_authentication}
The system comprises three entities, an \emph{authentication mechanism}~$M$ and two \emph{players}: a \emph{user}~$U$ and an \emph{attacker}~$A$. 
All three are finite deterministic automata that can draw randomness from a \emph{random tape}~$v$, an infinite stream of random bits. 
We specify the automata as computer programs~\cite{state_machine_rep}.

An execution is orchestrated by a \emph{scheduler},
whose pseudocode is given in Appendix~\ref{app:execution}.
Each of the players interacts with the mechanism by sending and receiving messages.
The scheduler, parametrized by~$\id\in\{0,1\}$, assigns the user identifier~$\id$ and the attacker identifier~${(1-\id)}$.
E.g., if~$\id = 0$, then the user is player~$0$ and the attacker is player~$1$.
The identifiers serve a similar purpose to that of cookies that identify a website visitor during a single session. 
They allow the mechanism to identify and distinguish between the players during an execution without revealing which is the user. 

To facilitate authentication, the players take advantage of private information to convince the mechanism that they are the user.
This information is a set of credentials that might be data (e.g., passwords), biometric properties (e.g., fingerprints), and physical objects (e.g., phones or smart cards).
Each credential has two parts: a public part, known to the mechanism, and a secret part. 
For example, when using a password to authenticate to a website, the user remembers the password (the secret part) and the website stores its hash (the public part);
or when using public key authentication, a blockchain has the public key (the public part) and the user has the private key (the secret part).

\snegspace
\begin{definition}[Credential]
    \label{credential}
    A \emph{credential}~$c$ is a tuple ${c=(c^P,c^S)}$ where~$c^P$ is the \emph{public part} of the credential and~$c^S$ is the \emph{secret part} of the credential. 
\end{definition}

\snegspace  
The system uses a set of~$n$ \cred s,~$\{c_1,...,c_n\}$.
A \cred~can be \emph{available} to a player. 
An \emph{availability vector} represents the availability of all credentials to a player. 
Each entry~$i$ in the vector indicates the availability of credential~$i$.
Denote by~$\sigma_U$, ~$\sigma_A$ the availability vector of the user and the attacker respectively.
A scenario~$\sigma$ is a tuple of availability vectors~$(\sigma_U,\sigma_A)$.
Denote by~$C_U^{\sigma}$, ~$C_A^\sigma$ the set of \cred s available to the user and attacker in scenario~$\sigma$ respectively.
The attacker is polynomially bounded~\cite{poly_adv1,poly_adv2} and cannot guess the secret parts of the credentials better than a random guess in a polynomial number of steps, except with negligible probability.

The system uses a credential generation function~$\genl{}{}{n}$ that generates a set of~$n$ credentials~$\{(c_i^P,c_i^S)\}_{i=1}^n$. 
Anyone with the secret part of a credential can prove its availability to the mechanism, and anyone without the secret part cannot forge a proof of its availability. 
In practice, cryptographic assumptions require bounding the execution time based on a security parameter~\cite{RSA}. 
However, in the asynchronous system, message delivery time is unbounded, and so is execution time---the mechanism is only required to eventually decide. 

For simplicity, we first assume that credentials are \emph{ideal}, that is, that their cryptographic guarantees are maintained indefinitely. 
This is a common assumption in distributed-systems literature, e.g., \cite{van2017distributed, tanenbaum2007distributed, gorilla, safe_permissionless_consensus, Monoxide}. 
We properly reconcile the tension between the models in Appendix~\ref{app:full_model}; intuitively, the mechanism is only required to succeed when the security parameter is large enough.
The results hold for this model as well, as shown in Appendix~\ref{app:one_shot_dominance_full_proof}.

An \emph{authentication mechanism} exchanges messages with the players to decide which is the user.
\begin{definition}[Authentication mechanism]
    An \emph{authentication mechanism}~$M$ is a finite deterministic automaton specified by two functions,~$\gen{M}{\cdot}$ and~$\step{M}{\cdot}$: 
    \label{Authentication mechanism}
    \begin{itemize}
        \item $\gen{M}{n}$, where~$n$ is the number of credentials, is an ideal credential generator.  And
        \item $\step{M}{\msg, i}$, where~$\msg$ is the message the mechanism received (perhaps no message, represented by~${\msg=\bot}$), sent from player~$i$ (if no message was received,~$i=\bot$ ), 
        is a function that updates the state of the mechanism and returns a pair~$(\msg s_0,\msg s_1)$ of sets of messages (maybe empty) to send to the players by their identifiers.
        It can access and update the mechanism's state and the set of credentials public part~$\{c_i^P\}_{i=1}^n$.
    \end{itemize}
    The mechanism uses a variable~$\textit{decide}_{M}$ that marks which identifier it decides belongs to the user.
    The variable's initial value is~$\bot$.
    When~$M$ reaches a decision, it updates the variable to an identifier value. 
\end{definition}

\snegspace
A \emph{player} is defined by its \emph{strategy}, a function that defines its behavior. Formally,
\begin{definition}[Strategy]
    \label{strategy}
     A \emph{strategy}~$S$ of a player~$p\in\{U,A\}$ is a function that takes a message from the mechanism as input
    (maybe empty, denoted by~$\bot$), and has access to the player's state and credentials~$C_p^\sigma$. 
    It updates the player's state and returns a set of messages to be sent back to the mechanism.
\end{definition}

We now describe the execution of the system. 
Full details including pseudocode are in Appendix~\ref{app:execution}.
The execution consists of two parts: \emph{setup} (\setup function) and a \emph{main loop} (\mloop function).
The setup generates a set of~$n$ credentials using the function{~$\gen{M}{n}$}, and assigns each player credentials according to the scenario~$\sigma$.
Then it assigns an identifier to each player based on~$\gamma_\textit{ID} \in \{0,1\}$.
Whenever~a~random coin is used by~$M$ or the players, the result is the next bit~of~$v$. 

Once the setup is complete, the main loop begins and both players~$p\in\{U,A\}$ can send messages to the mechanism, each based on her strategy~$S_p$ and set of credentials~$C_p^\sigma$. 
Time progresses in discrete steps.
The communication network is reliable but asynchronous.
Messages arrive eventually, but there is no bound on the time after they were sent and no constraints on the arrival order. 

This is implemented as follows.
The scheduler maintains three sets of pending messages, one for each entity.
The scheduler is parametrized by an ordering function~$\ord$ and a random tape~$v_\gamma$.
In each step, it uses~$\ord$ to choose a subset of messages (maybe empty) from each of the pending message sets, and returns them as an ordered list.
The scheduler's random coins used in \ord are taken from~$v_\gamma$.
The scheduler removes the chosen messages from the pending message sets.

Each player receives the messages chosen by the ordering function~$\ord$ and sends messages to the mechanism according to her strategy. 
These messages are added to the mechanism's pending messages set.
Similarly, the mechanism receives the messages chosen by~$\ord$. 
After every message~$\msg$ it receives from the player with identifier~$i\in\{0,1\}$, the mechanism runs its function~$\step{M}{\msg, i}$ and sends messages to the players.

In each step, the scheduler checks if~$\textit{decide}_{M} \neq \bot$, meaning that the mechanism has reached a decision. 
Once it does, the execution ends and the player with the matching index (either~$0$ or~$1$) wins.
The tuple~$(\id,\ord, v_\gamma)$ thus defines the \emph{scheduler}'s behavior. 
And an execution~$\exec$ is thus defined by its parameter tuple~$(M, \sigma, S_U, S_A, \gamma, v)$; by slight abuse of notation we write~${\exec = (M, \sigma, S_U, S_A, \gamma, v)}$.



We define the \emph{winner} of an execution as the player with the identifier that the mechanism decides in that execution.

\let\oldnl\nl
\newcommand{\nonl}{\renewcommand{\nl}{\let\nl\oldnl}}


\negspace
\subsection{Mechanism Success}
\label{sec:model:mechanism_success}

A mechanism is \emph{successful} in a scenario~$\sigma$ if the user wins against all attacker strategies and schedulers. 
Formally, 
\begin{definition}[Mechanism success]
    \label{success}
    A mechanism~${M}$ is \emph{successful} in a scenario~$\sigma$ if there exists a user strategy~$S_U$ such that for all attacker strategies~$S_A$, 
    schedulers~$\gamma$, and random tapes~$v$, the user wins the execution~$\exec = (M, \sigma, S_U,S_A,\gamma, v)$. 
    Such a user strategy~$S_U$ is a \emph{winning user strategy} in~$\sigma$ with~$M$.
    Otherwise, the mechanism \emph{fails}. 
\end{definition}

The set of scenarios in which the mechanism succeeds~is~its \emph{profile} (as Maram et al.~\cite{cryptoeprint:2022/1682} defined for a synchronous network).

\begin{definition}[Profile~\cite{cryptoeprint:2022/1682}]
    \label{profile} 
    A \emph{profile} is a set of scenarios. 
    The \emph{profile of a mechanism}~$M$, denoted~$\Prof{M}$, is the set of all scenarios in which~$M$ succeeds. 
\end{definition}

We are now ready to define the \emph{asynchronous authentication problem}.

\snegspace
\begin{definition}[Asynchronous authentication]
    \label{async_auth_problem}
    Given a profile~$\prof$, a mechanism~$M$ solves the \emph{$\prof$ asynchronous authentication problem} if~$M$ is successful in all scenarios in~$\prof$ (maybe more), i.e.,~${\prof \subseteq \Prof{M}}$.
\end{definition}

\snegspace
In the asynchronous setting, the profile defines a relation between any two mechanisms~$M_1$ and~$M_2$ with the same number of credentials.
The following definition is similar to that given by Maram et al.~\cite{cryptoeprint:2022/1682} for a synchronous network, but as we will consider individual credential probabilities, our definition is permutation sensitive. 

\snegspace
\begin{definition}[Mechanisms order]
    \label{partial ordering of mechanisms}
    A mechanism~$M_1$ \emph{dominates} (resp., \emph{strictly dominates}) a mechanism~$M_2$ with the same number of credentials
    if the profile of~$M_1$ is a superset (resp., strict superset) of the profile of~$M_2$: $\Prof{M_2}\subseteq \Prof{M_1}$ (resp.,~$\Prof{M_2}\subset \Prof{M_1}$). 
    
    Two mechanisms~$M_1$ and~$M_2$ are \emph{equivalent} if they have the same profile~$\Prof{M_1}=\Prof{M_2}$.
    And \emph{incomparable} if neither dominates the other~$\Prof{M_1}\nsubseteq \Prof{M_2}$ and~${\Prof{M_2}\nsubseteq \Prof{M_1}}$.
\end{definition}

\negspace
\section{Maximal Mechanisms}
\label{sec:mech_to_bool}

Having defined partial ordering on mechanisms, we proceed to identify \emph{maximal mechanisms}, i.e., mechanisms that are not strictly dominated. 
We show that for any mechanism there exists a dominating mechanism that is a Boolean function of the credentials' availability~(\S\ref{sec:mech_to_bool:bool_are_better}).
Then we show that mechanisms defined by monotonic Boolean functions are maximal, and all maximal mechanisms are equivalent to a Boolean mechanism~(\S\ref{sec:mech_to_bool:maximal_mechs}).  

\snegspace
\subsection{Domination by Boolean Mechanisms}
\label{sec:mech_to_bool:bool_are_better}

We show that any mechanism is dominated by a mechanism that is a Boolean function of the credentials' availability.
We take 4 similar steps.
First, we show that for all mechanisms there exists a dominating \emph{one-shot} mechanism that decides based on up to a single message from each player~(\S\ref{sec:mech_to_bool:bool_are_better:one_shot}). 
Second, we show that decisions can be made solely on credential availability proofs~(\S\ref{sec:mech_to_bool:bool_are_better:cred}). 
Then we show that randomness does not improve the security of a one-shot mechanism~(\S\ref{sec:mech_to_bool:bool_are_better:deter}). 
Finally, we show that for any one-shot, deterministic mechanism, there exists a dominating mechanism defined by a monotonic Boolean function of the credentials' availability~(\S\ref{sec:mech_to_bool:bool_are_better:bool}). 

\snegspace
\begin{note}[Practicality of one-shot and deterministic mechanisms]
    \label{note:practicality_of_one_shot}
    Numerous interactive and randomized mechanisms are practical and widely used, 
    e.g., challenge-response protocols~\cite{rhee2005challenge} and interactive proofs of knowledge~\cite{feigenbaum1992overview}. 
    While we prove a reduction from any mechanism to a one-shot and deterministic mechanism,
    the propositions below are for theoretical purposes and 
    do not necessarily imply that such mechanisms are practical nor suggest that they should replace interactive, randomized mechanisms.
    
    We rather show that, security-wise, for every interactive or random mechanism, 
    there exists a non-interactive deterministic mechanism that succeeds in the same scenarios or more.
    This is only used as a proof step of the reduction to monotonic Boolean functions.
    However, the non-interactive mechanism might be less practical.
    E.g., a 2FA mechanism that first requires a password and only when the password is correct, 
    it asks for a one-time password sent to the user's phone, 
    is more practical than a mechanism that requires the user to send both the password and the one-time password in a single message, 
    as it saves the company the cost of sending the one-time password in case the password is incorrect.
\end{note}

\negspace
\subsubsection{Step template}
Each step in our proof has the following parts:
Given a mechanism~$M_1$ we show it is dominated by a mechanism~$M_2$ with a certain property.
We prove constructively by defining a mechanism~$M_2$ and show the required property holds. 
The domination proof takes advantage of a mechanism's ability to simulate the execution of another mechanism.
That is, the mechanism~$M_2$ takes a mechanism~$M_1$, two strategies~$S_0$ and $S_1$, the sets of credentials each strategy uses~$C_0$ and~$C_1$, an ordering function~$\ord$, a scheduler random tape~$v_\gamma$, a random tape~$v$, and a number of steps~$t$,
and runs~$\mloop(M_1, S_0, C_0, S_1, C_1, \ord, v_\gamma, v, t)$ (Appendix~\ref{app:execution}). 
Note that when calling the function~$\mloop$, if~$t$ is not explicitly given, it is assumed to be unbounded.
The simulation result is the identifier that~$\textit{decide}_{M_1}$ gets during the execution of~$\mloop$, if it terminates with a decision, and~$\bot$ otherwise.

\begin{note}[Mechanism's computational resorces] \label{note:mechanism_power}
    In the following proofs, we assume the constructed mechanism has sufficient computational resources to simulate the original mechanism in a single step.
    At the end of this section, we show that there always exists a dominating polynomial mechanism.
\end{note}

\begin{note}[Plaintext credentials] \label{note:plaintext_creds}
    In the following proofs, we assume the user sends the secret part of her credentials (as with passwords). 
    Although this is not always the case (e.g., with cryptographic signatures), 
    this is only a theoretical construction and does not suggest that the user should generally send her secrets in plaintext. 
    In practice, she must only be able to prove she can access them.
\end{note}

\begin{note}[Mechanism as a function] 
    When constructing a mechanism~$M_\textit{new}$ given a mechanism~$M$, 
    we use the notation~$M_\textit{new}(M)$ to denote the mechanism~$M_\textit{new}$ is a function of~$M$.
    When it is clear from the context, we omit the argument~$M$ and write~$M_\textit{new}$.
\end{note}

\negspace
\subsubsection{One-shot Mechanisms}
\label{sec:mech_to_bool:bool_are_better:one_shot}
We define a \emph{one-shot} (OS) mechanism as one that, in every execution, reaches a decision based only on the first message it receives (if any) from each player.
If a player sends multiple messages, the mechanism ignores all but the first it receives.

\begin{definition}[One-shot mechanism]
    \label{def:one_shot}
    A mechanism~$M$ is a \emph{one-shot mechanism} if for all scenarios~$\sigma$, schedulers~$\gamma$, and random tapes~$v$,
    and for all user strategies~$S_U, S_U'$ and attacker strategies~$S_A$,~$S_A'$ 
    such that the first message that~$M$ receives from each player (if any) in the executions ${\exec_1 = (M, \sigma, S_U, S_A, \gamma, v)}$ 
    and~${\exec_2 = (M, \sigma, S_U', S_A', \gamma, v)}$ is the same and in the same order, 
    then~$M$ reaches the same decision or does not decide in both executions.
\end{definition}

\begin{algorithm}[t]
    \SetInd{0.1em}{0.5em}
    \SetAlgoNoLine 
    \SetAlgoNoEnd 
    \DontPrintSemicolon 
        
    \SetKwInOut{Input}{input}\SetKwInOut{Output}{output}
    \nonl \small \emph{$\step{M_\os}{\msg, i}$} \scriptsize \;

    \For {$j \in \{0,1\}$}{
        \If(\tcp*[f]{A message already received from~$j$}){$\processing_j = 1$}{ \label{line:message_already_recieved}                         
            $(S_j, C_j, v_{\gamma,j}, v'_j) \gets \textit{Sim}_j$         \label{line:set_strategy_mem}                         \tcp*{Read simulation params}
            $S_{1-j}, C_{1-j} \gets \bot, \bot$\;               \label{line:set_strategy_end_mem}
            $\textit{decide}_{M_\os} \hspace{-0.25em} \gets \hspace{-0.25em} \mloop(M, S_0, C_0, S_1, C_1, \ord^\rand, v_{\gamma,j}, v'_j,t)$ \label{line:simulate_M_mem}                         \tcp*{Simulate $M$'s execution}
        }                     
        \ElseIf(\tcp*[f]{First message received from~$i$}){$j = i$}{                                            
            $\processing_i \gets 1$ \label{line:message_already_recieved_end}                         \tcp*{Mark receiving a message from~$j$}
            $S_i, C_i\gets \textit{extractStrategy}(\msg)$  \label{line:set_strategy}                         \tcp*{Extract strategy}
            $S_{1-i}, C_{1-i} \gets \bot, \bot$\;               \label{line:set_strategy_end}

            \If(\tcp*[f]{Invalid strategy}){$S_i = \bot$}{
                $\textit{decide}_{M_\os} \gets 1-i$  \label{line:invalid_S}                         \tcp*{The other player wins}
            }
            \Else{
                $\textit{Sim}_i \gets (S_i, C_i, v_\gamma, v')$    \tcp*{Save simulation params} \label{line:save_strategy_mem}
                $\textit{decide}_{M_\os} \gets \mloop(M, S_0, C_0, S_1, C_1, \ord^\rand ,v_\gamma, v',t)$ \label{line:simulate_M}                         \tcp*{Simulate $M$'s execution}
            }   
        }
    }
    $\textit{return } \bot , \bot$\  \tcp*{No messages to send} 

    \caption{$M_\os(M)$'s step function}
\label{alg:M_{OS}}
\end{algorithm}

        





Given any mechanism, we construct a dominating one-shot mechanism by simulating the original mechanism's execution. 

\begin{construction} \label{con:M_os}
We define~$M_\os$ by specifying the functions~$\gen{M_\os}{\cdot}$ and $\step{M_\os}{\cdot}$.
The credentials' generation function~$\gen{M_\os}{\cdot}$ is the same as~$\gen{M}{\cdot}$.

The mechanism's \textit{step} function proceeds as follows (Algorithm~\ref{alg:M_{OS}}.) 
If it does not receive a message during its execution, it does nothing.
If it receives multiple messages from a player, it ignores all but the first one~(lines~\ref{line:message_already_recieved} and~\ref{line:message_already_recieved_end}).
If~$M_\os$ receives a message that is not a valid strategy and credentials pair, then it decides the identifier of the other player~(line~\ref{line:invalid_S}).

Consider the first message it receives, and let~$t_1 \in \mathcal{N}^+$ be the step in which it arrives.
If the message is an encoding of a valid strategy and credentials pair, then~$M_\os$ simulates an execution of~$M$~(line~\ref{line:simulate_M}) 
with the given strategy and credentials while setting both the opponent's strategy and credentials to~$\bot$ each~(lines~\ref{line:set_strategy}-\ref{line:set_strategy_end}).
It uses a scheduler random tape~$v_\gamma$ and an execution random tape~$v'$ drawn from~$v$, and an ordering function~$\ord^{v_\gamma}$ 
that chooses the time and order of message delivery randomly based on~$v_\gamma$.
The simulation runs for~$t_1$ steps.
If~$M$'s simulated execution decides then~$M_\os$ decides the same value.
Otherwise,~$M_\os$ saves the above execution details~(line~\ref{line:save_strategy_mem}) and waits for the next message.

In each time step~$t'>t_1$ between the message arrival from the first player and the message arrival from the second 
player (might be infinite),~$M_\os$ reads the simulation parameters it saved~(line~\ref{line:set_strategy_mem}) and runs the same simulation with the exact same parameters as before but for~$t' \in \{t_1 +1, t_1 +2, ...\}$ steps each time~(line~\ref{line:simulate_M_mem}).
If the simulation reaches a decision, then~$M_\os$ decides the same value.

If a message arrives from the other player at some time~$t_2 > t_1$, then, similar to the previous case,~$M_\os$ simulates an execution of~$M$ 
with the given strategy and credentials while setting the opponent's to~$\bot$ for~$t_2$ steps.
And again, if~$M$'s simulated execution decides, then~$M_\os$ decides the same value.

Otherwise,~$M_\os$ saves the above execution details~(line~\ref{line:save_strategy_mem}) and continues to simulate the execution of~$M$ for~$t''\in \{t_2 +1, t_2 +2, ...\}$ steps.
In each of its steps,~$M_\os$ runs two simulations of~$M$'s execution, 
one with the first player's simulation parameters and the other with the second player's simulation parameters, both for~$t''$ steps~(line~\ref{line:simulate_M_mem}). 
Once a simulation decides, then~$M_\os$ decides the same value.

\end{construction}

The mechanism~$M_\os(M)$ is one-shot as it decides based only on the first message it receives from each player.
To prove it dominates~$M$, we first show that the attacker's message does not lead to her winning.

\begin{lemma}
    \label{lem:M_os_attacker_cant_win}
    Let~$\sigma$ be a scenario and let~$M$ be a mechanism successful in~$\sigma$.
    Then, for all executions of~$M_\os(M)$ in scenario~$\sigma$ in which the function~$\step{M_\os(M)}{\cdot}$ receives a message from the attacker for the first time, 
    either the function sets~$\textit{decide}_{M_\os}$ to the user's identifier or the simulated execution of~$M$ does not decide.
\end{lemma}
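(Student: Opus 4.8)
The plan is to fix an arbitrary execution of $M_\os(M)$ in $\sigma$ and examine the simulation of $M$ that $\step{M_\os}{\cdot}$ launches upon receiving the attacker's first message. Let $i$ be the attacker's identifier, so $1-i$ is the user's. By Construction~\ref{con:M_os}, this simulation runs $M$ with the attacker's decoded strategy and credentials $(S_A, C_A)$ in slot $i$ and with the opponent slot $1-i$ set to the null strategy and no credentials, under the random ordering $\ord^\rand$ and the tapes $v_\gamma, v'$ drawn from the execution's tape; note that the opponent slot is silenced regardless of whether the attacker's message is the first or the second one $M_\os$ receives, so the argument is insensitive to arrival order. Since the tapes are fixed by the chosen $M_\os$-execution, the simulated transcript and its outcome are fully determined. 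It therefore suffices to show that this simulation never decides $i$: then either it does not decide (the second disjunct), or it decides $1-i$, in which case $\step{M_\os}{\cdot}$ sets $\textit{decide}_{M_\os}=1-i$, the user's identifier (the first disjunct).

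I would prove this by contradiction: assume the simulation decides $i$. The key step is to realize the simulated transcript as a genuine execution of $M$ in $\sigma$ in which the user loses. Concretely, take the identifier assignment that gives the attacker identifier $i$ and the user $1-i$, let the attacker play exactly the strategy $S_A$ presenting the credentials $C_A$, and let the user play a winning strategy $S_U$, which exists since $M$ succeeds in $\sigma$. Because a player can only prove availability of credentials it actually holds, $C_A \subseteq C_A^\sigma$, so this is a legal attacker in $\sigma$. Finally, choose the mechanism's random tape to match $v'$ and an ordering function $\ord$ that delivers the attacker's messages exactly as $\ord^\rand$ did in the simulation while never delivering any of the user's messages before $M$ decides.

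Under this scheduler, $M$'s entire input — the sequence and timing of delivered messages together with its randomness — is identical to the simulation: the user contributes nothing that $M$ observes, and the attacker replays the same behavior, yielding a consistent transcript. Hence $M$ reaches the same decision, namely $i$, the attacker's identifier, so the user does not win this execution. This contradicts the assumption that $M$ is successful in $\sigma$, where $S_U$ must win against \emph{all} attacker strategies, schedulers, and random tapes. The contradiction establishes that the simulation cannot decide $i$, which proves the lemma.

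I expect the main obstacle to be the faithful coupling between the internal simulation and the constructed real execution. One must argue that delaying every user message until the decision is an admissible asynchronous scheduler — appealing to the absence of any delivery bound and to the execution terminating the instant $M$ decides, which renders the ``eventual delivery'' requirement vacuous — and that the attacker-side randomness and message ordering used inside $M_\os$ can be reproduced verbatim by a legitimate scheduler and attacker strategy, so that $M$ is driven through an identical transcript. The observation doing the real work is conceptual: an asynchronous mechanism can never be allowed to decide the attacker's identifier from attacker traffic alone, since the honest user's messages may be delayed arbitrarily; the simulation against a silent opponent is precisely this worst case, and $M$'s success in $\sigma$ forbids a wrong decision in it.
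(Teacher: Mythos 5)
Your proof is correct and follows essentially the same route as the paper's: assume the attacker's simulation decides the attacker, realize that simulated transcript as a genuine execution of $M$ in $\sigma$ under a scheduler that delays all user messages past the decision point, and contradict the existence of a winning user strategy. The paper merely makes the coupling in two named steps (an execution with the empty user strategy, then one with the winning strategy and an indistinguishable prefix), and also dispatches the trivial invalid-message case explicitly, but the substance is identical.
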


\begin{proof}
    Let~$\sigma$ be a scenario and let~$M$ be a mechanism successful in~$\sigma$.
    Consider an execution of~$M_\os$ in~$\sigma$ in which the function~$\step{M_\os}{\cdot}$ receives an attacker's message for~the first time.
    Let~$t$ be the time step in which the message arrives and denote the identifier of the attacker by~$i\in \{0,1\}$.
    If the attacker's message is not a valid encoding of a strategy and credentials set, then~$M_\os$ decides the identifier of the user, and we are done. 
    
    Otherwise, the attacker's message is an encoding of a valid strategy and credentials pair~$(S_A,~C_A)$.
    In this case,~$M_\os$ sets the strategies and credentials to~$S_i=S_A$,~$C_i=C_A$,~$S_{1-i}= \bot$, and~${C_{1-i}=\bot}$, $\ord^{v_\gamma}$,~$v_\gamma$ and~$v'$ as in the definition of~$M_\os$, 
    and simulates~$M$'s execution by running~$\mloop(M, S_0, C_0, S_1, C_1, \ord^{v_\gamma}, v_\gamma, v', t)$. 
    If the simulation returns the user's identifier or does not decide for any~$t$, we are done. 
    The only remaining option is that the simulation returns the identifier of the attacker.
    To show this is impossible, assume by contradiction that there exists a~$t' \geq t$ for which the attacker wins in this simulated execution.
    
    First, we show that there exists an execution of~$M$ that is identical to the simulated one. 
    Let~${\gamma=(\id^\os,\ord^{v_\gamma}, v_\gamma)}$ be~a scheduler such that the user gets the same identifier as in~the execution of~$M_\os$, with the same ordering function and scheduler random tape that~$M_\os$ uses to simulate~$M$'s main loop.
    And let~$v_M$ be a random tape such that when the execution $\exec=(M,\sigma,\bot,S_A,\gamma, v_M)$ reaches the main loop, all the next bits of $v_M$ are equal to~$v'$.
    Note that the main loop of~$\exec$ is the same as the one~$M_\os$ simulates in~$\exec_\os$.
    And the attacker wins in the execution~$\exec$ (by the contradiction assumption).

    We thus established a simulated execution in which the attacker wins, and in this simulation, there exists a time~$\tau \leq t'$ when the simulated~$M$ decides the identifier of the attacker.  
    Since~$M$ is successful in scenario~$\sigma$, there exists a winning user strategy~$S_U$.
    Let~$\gamma'$ be a scheduler such that in the execution~$\exec' = (M,\sigma, S_U, S_A, \gamma',v_M)$ it behaves like~$\gamma$ except~$M$ receives the user's messages not before~$\tau$.
    Such a scheduler exists since the communication is asynchronous and message delivery time is unbounded.
    
    At any time step before~$\tau$, the mechanism~$M$ sees the same execution prefix whether it is in the execution~$E$ with an empty user strategy or in the execution~$E'$ with a winning user strategy.
    Thus, it cannot distinguish between the case where it is in~$\exec$ or~$\exec'$ at~$\tau$. 
    Since in~$\exec$ the mechanism~$M$ decides at~$\tau$,~$M$ must decide the same value at~$\tau$ in~$\exec'$.
    That is, the attacker wins also in the execution~$\exec'$, contradicting the fact that~$S_U$ is a winning user strategy for~$M$ in~$\sigma$.  
    Thus, the attacker cannot win in the simulated execution of~$M$.
\end{proof}

Now we can prove domination.

\begin{proposition}
    \label{thm:oneShot}
    For all profiles~$\prof$, an authentication mechanism that solves the~$\prof$~asynchronous authentication problem is dominated by a one-shot mechanism. 
\end{proposition}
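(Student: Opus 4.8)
The plan is to take the one-shot mechanism~$M_\os(M)$ from Construction~\ref{con:M_os} as the dominating mechanism and to prove that~$\Prof{M} \subseteq \Prof{M_\os(M)}$. Since any~$M$ solving the~$\prof$ problem satisfies~$\prof \subseteq \Prof{M}$, this domination immediately yields~$\prof \subseteq \Prof{M_\os(M)}$, so the dominating mechanism solves the same problem. That~$M_\os(M)$ is one-shot is immediate from the construction, as it reaches its decision using only the first message received from each player; the substance lies entirely in the domination claim.

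To prove domination I would fix an arbitrary scenario~$\sigma \in \Prof{M}$, so that by Definition~\ref{success} there is a winning user strategy~$S_U$ for~$M$ in~$\sigma$. I would then exhibit an explicit candidate winning strategy for~$M_\os(M)$ in~$\sigma$: the user sends a single message encoding the pair~$(S_U, C_U^\sigma)$ and thereafter ignores all mechanism messages (consistent with Note~\ref{note:plaintext_creds}). What remains is to verify that, with this strategy, the user wins against every attacker strategy~$S_A$, scheduler~$\gamma$, and random tape~$v$.

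The verification splits on which player's message the step function processes first. If the user's message is processed first, then~$M_\os$ simulates~$M$ under strategy~$S_U$, credentials~$C_U^\sigma$, and an empty opponent, against the internally generated scheduler~$(\ord^{v_\gamma}, v_\gamma, v')$. Because~$S_U$ is winning for~$M$ against all attacker strategies---including the do-nothing strategy~$\bot$---and all schedulers, this simulated execution eventually decides the user's identifier; since~$M_\os$ re-runs the simulation for strictly increasing step bounds, it reaches that decision point and decides the user. If instead the attacker's message is processed first, Lemma~\ref{lem:M_os_attacker_cant_win} guarantees that the attacker's simulation never decides the attacker---it either decides the user or fails to decide. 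In the latter case~$M_\os$ waits; by asynchrony the user's message is eventually delivered and processed, after which~$M_\os$ runs the user's simulation concurrently with the attacker's. The user's simulation decides the user as above, while the attacker's simulation can only decide the user or not decide, so~$M_\os$ ultimately decides the user.

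The main obstacle is the final case, handling the two concurrent simulations once both messages have arrived and arguing that~$M_\os$ is forced to the correct decision. Here I must combine two facts: that neither simulation can trigger a premature decision for the attacker---which is exactly the content of Lemma~\ref{lem:M_os_attacker_cant_win}---and that the user's simulation is guaranteed to terminate with the user's identifier under the particular random scheduler that~$M_\os$ internally uses. The point enabling the latter is that~$S_U$ wins against \emph{all} schedulers and random tapes, hence in particular under the fixed~$(\ord^{v_\gamma}, v_\gamma, v')$, and the monotone growth of the step bound ensures that this decision is eventually observed rather than missed.
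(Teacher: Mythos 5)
Your proposal is correct and follows essentially the same route as the paper's proof: it uses Construction~\ref{con:M_os}, has the user send a single message encoding its winning strategy and credentials, splits on which player's message is processed first, invokes Lemma~\ref{lem:M_os_attacker_cant_win} to rule out a premature attacker win, and relies on~$S_U$ winning against the empty attacker strategy under the internally generated scheduler, with the increasing step bounds ensuring the decision is eventually reached. The only cosmetic difference is that you send the full credential set~$C_U^\sigma$ rather than the subset~$C_U \subseteq C_U^\sigma$ that~$S_U$ actually uses, which is immaterial.
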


\begin{proof} 
    Assume that~$M$ is successful in a scenario~$\sigma$.
    Then, there exists a user strategy~$S_U$ such that for every attacker strategy~$S_A$, scheduler~$\gamma$, and random tape~$\widetilde{v}$, the user wins the corresponding execution~$(M, \sigma, S_U, S_A, \gamma, \widetilde{v})$.

    We now show that~$M_\os$ is successful in scenario~$\sigma$ as well.
    Denote by~$S_U^\os$ the user strategy that sends an encoding of the winning user strategy~$S_U$ and a set of credentials~$C_U\subseteq C_U^\sigma$ that~$S_U$ uses in a single message on the first step.
    And consider any execution of~$M_\os$ in scenario~$\sigma$ with the user strategy~$S_U^{\os}$.
    Note: In the~$\setup$ function of an execution, the first player in the tuple is always the user and the second is the attacker. 
    However, in the~$\mloop$ function, the first player is the player with identifier~$0$, which can be the user or the attacker, and the second is the player with identifier~$1$.
    
    As~$S_U^{\os}$ sends a message in the first step, the user's message eventually arrives at some time step~$t$. 
    The mechanism receives and processes at least one message during its execution; we consider two cases separately: a user's message arrives first, or an attacker's message arrives first.

    Denote by~$i\in \{0,1\}$ the identifier of the player whose message arrives first and by~$t_1$ its arrival time.
    If the user's message~$\msg$ arrives first with the encoded strategy and credential set~${(S_U,~C_U) = \textit{extractStrategy}(\msg)}$,~then~$M_\os$~sets the strategies to~$S_i=S_U$,~$C_i=C_U$,~${S_{1-i}=\bot}$,~and~$C_{1-i}=\bot$, the scheduler random tape~$\gamma_v$,~the~ordering~function~$\ord^{v_\gamma}$, and the random tape~$v'$ as described in~$M_\os$'s definition,
    and simulates~$M$'s execution by~running $\mloop(M, S_0, C_0, S_1, C_1, \ord^{v_\gamma},v_\gamma, v', t_1)$. 
    Denote by~$\gamma=(\id^\os,\ord^{v_\gamma}, v_\gamma)$ the scheduler such that the user gets the same identifier as in the execution of~$M_\os$, with the same ordering function and scheduler random tape that~$M_\os$ uses to simulate~$M$'s main loop.
    Because~$S_U$ is a winning user strategy in~$M$, the user wins the execution~$\exec = (M, \sigma, S_U, \bot, \gamma, v')$.
    Therefore, there exists a time step~$\tau$ such that~$M$ decides the user at~$\tau$.
    If~$t_1 \geq \tau$, then~$M_\os$ decides the user as well.
    Otherwise,~$t_1 < \tau$, then~$M_\os$ repeatedly simulates~$M$'s execution for~$t' \in \{t_1+1 ,t_1+2, ...\}$ steps.
    If no other message arrives before~$\tau$, then once it reaches~$t'=\tau$,~$M_\os$ decides the user as well.
    Otherwise, if the attacker's message arrives at some~$t_2 < \tau$, 
    and by Lemma~\ref{lem:M_os_attacker_cant_win}, either~$M_\os$ decides the user or the simulated execution of~$M$ with the attacker's strategy and credentials does not decide.
    Again, if~$M_\os$ decides the user, then we are done.
    Otherwise,~$M_\os$ continues to simulate both executions of~$M$ with the user's and attacker's simulations parameters, respectively, for~$t'' \in \{t_2+1, t_2+2, ...\}$ steps.
    Finally, at~$t''=\tau$, the simulated execution of the user's strategy decides the user, so~$M_\os$ decides the user as well.

    Now assume the attacker's message arrives first to~$M_\os$ at time~$t_1$.
    Again by Lemma~\ref{lem:M_os_attacker_cant_win}, either~$M_\os$ decides the user or the simulated execution of~$M$ in~$\sigma$ with the attacker's strategy does not decide.
    If~$M_\os$ decides the user, then we are done.
    Otherwise, the simulation does not decide for all~$t\geq t_1$, and~$M_\os$ waits for the next message from the other player (the user).
    As long as no other message arrives,~$M_\os$ keeps simulating the same execution in each step for a longer time but does not reach a decision. 

    Because the user's message must eventually arrive,~$M_\os$ receives the user's message with her encoded strategy and credentials at time~$t_2 > t_1$. 
    Then similar to the previous case,~$M_\os$ sets the parameters as described in Construction~\ref{con:M_os},
    and simulates~$M$'s execution. 
    By the same argument as before, we get that once~$t_2 \geq \tau$, the simulation terminates and returns the user's identifier, so~$M_\os$ also returns it. 

    Overall we showed that the user wins in all executions of~$M_\os$ in scenario~$\sigma$.
    Thus, it is successful in~$\sigma$.
    And we~conclude that the one-shot mechanism~$M_\os$ dominates~$M$.
\end{proof}

We illustrate Proposition~\ref{thm:oneShot} with an example.
\begin{example}
    \label{ex:oneShot}
    Consider a mechanism~$M^\textit{ex}$ with two credentials: a password~$c_1$ and an OTP~$c_2$.
    Authentication with~$M^\textit{ex}$ has three steps:
    First, the player clicks "start". 
    Then, she enters her username and password. 
    If those are correct, she is asked to enter the OTP.
    If all three steps are done successfully, the player wins and gets authenticated.
    In case no player wins, the mechanism chooses a player at random.

    For any player to authenticate, she must complete all three steps successfully.
    This not only requires knowing the correct credentials but also requires that all messages arrive before the mechanism decides.
    However, even if the user sends the correct credentials, some of the messages might be delayed, 
    and as the mechanism must decide, it might do so before receiving all messages.
    In this case, the mechanism chooses a player at random, and the user might not authenticate.
    Therefore, this mechanism's profile is empty~$\Prof{M^\textit{ex}} = \emptyset$.

    A dominating one-shot mechanism~$M_\os^\textit{ex}$ has the player send the content of all three steps in a single message.
    The rest of the decision process of the mechanism stays the same.
    If the player sends the content of all three steps in a single message, then she wins.
    Therefore, if the user knows both credentials, she can authenticate.
    However, if the user knows only one of the credentials, she cannot always authenticate.
    And if the attacker knows at least one of the credentials, she can authenticate in some cases (depending on the mechanism's random choice).
    Therefore, the mechanism's profile is the scenario in which the user knows both credentials and the attacker knows none~$\Prof{M_\os^\textit{ex}} = \{((1,1),(0,0))\} \supset \Prof{M^\textit{ex}}$.
\end{example}


\negspace

\subsubsection{Credential-based mechanisms}
\label{sec:mech_to_bool:bool_are_better:cred}

A one-shot mechanism is a \emph{credential-based} mechanism if it reaches a decision based only on a function of the credentials it receives from players.

\begin{definition}[Credential-based mechanism]
    A mechanism~$M$ is a \emph{credential-based} mechanism if it is a one-shot mechanism and for all scenarios~$\sigma$, 
    schedulers~$\gamma$ and random tapes~$v$, 
    and for all user and attacker strategies~$S_U, S_U', S_A$ and~$S_A'$ such that~$S_U$ and~$S_A$ send only messages that are subsets of the corresponding player's credentials' vector,
    and~$S_U'$ and~$S_A'$ send messages with the same subsets of credentials but with any additional strings,
    then~$M$ reaches the same decision in both executions~$\exec_1=(M, \sigma, S_U, S_A, \gamma, v)$ and~$\exec_2=(M, \sigma, S_U', S_A', \gamma, v)$.
\end{definition}

\begin{lemma}
    \label{lem:onlyCredsHelp}
    For all one-shot mechanisms~$M_\os$ there exists a credential-based mechanism~$M_\cre$ that dominates~$M_\os$.
\end{lemma}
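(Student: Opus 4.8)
The plan is to construct $M_\cre$ from $M_\os$ by a simulation argument analogous to Construction~\ref{con:M_os}, but where $M_\cre$ discards any non-credential content of the messages it receives and feeds $M_\os$ a \emph{canonical} strategy determined solely by the credential subset presented. Concretely, upon receiving the first message from player~$i$, $M_\cre$ extracts the subset of credentials $C_i$ contained in that message (ignoring any additional strings, by Note~\ref{note:plaintext_creds} we may assume credentials are sent in plaintext so this extraction is well-defined), and then simulates $M_\os$ using a fixed canonical strategy $S_i^{\text{can}}$ that simply sends exactly the credentials in $C_i$ in a single message. The opponent's strategy and credentials are set to~$\bot$, exactly as in Construction~\ref{con:M_os}, and $M_\cre$ decides whatever the simulated $M_\os$ decides. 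Since $M_\cre$ only ever looks at the credential subset, it is credential-based by construction.

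The key steps, in order, would be: (1) formally define $S_i^{\text{can}}(C_i)$ as the one-shot strategy sending precisely the credentials in $C_i$, and define $M_\cre$'s step function to run $\mloop(M_\os, \ldots)$ with these canonical strategies and a randomly-chosen ordering $\ord^\rand$, mirroring the one-shot construction; (2) argue $M_\cre$ is well-defined and one-shot, hence genuinely credential-based; (3) prove domination, i.e.\ that $M_\cre$ is successful in every scenario where $M_\os$ is. For the domination direction, I would fix a scenario $\sigma$ in which $M_\os$ succeeds with winning user strategy $S_U$, and let the user of $M_\cre$ send exactly her available credentials $C_U \subseteq C_U^\sigma$. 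The crucial observation is that $M_\os$ already ignores everything after the first message, so the relevant question is only whether presenting the \emph{raw credentials} can do at least as well as $S_U$'s possibly-cleverer first message. Since the mechanism's success in $\sigma$ only depends on credential availability (the attacker cannot forge secrets it does not hold, and the user can always present what she holds), I would show that the canonical-credential strategy is itself a winning strategy against $M_\os$: any message $S_U$ could send is, information-theoretically, derivable from $C_U$, so whatever $M_\os$ does on $S_U$'s message it can be made to do on the canonical message.

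I expect the main obstacle to be the \emph{attacker-side} argument, exactly paralleling Lemma~\ref{lem:M_os_attacker_cant_win}: I must rule out that the attacker, by sending only raw credentials through $M_\cre$'s canonicalizing simulation, can win in a scenario where $M_\os$ succeeds. The subtlety is that an attacker against $M_\cre$ might try to encode extra information in the non-credential part of its message, but since $M_\cre$ strips this away and replaces the attacker's message with the canonical strategy on its credential subset $C_A \subseteq C_A^\sigma$, the simulated $M_\os$ execution is one in which the attacker uses only credentials it legitimately holds. I would then invoke the same asynchrony/indistinguishability argument as in Lemma~\ref{lem:M_os_attacker_cant_win}: if the attacker could win this simulated execution, then $M_\os$ would fail in $\sigma$ against an attacker playing the canonical-credential strategy, contradicting $M_\os$'s success. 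The technical care needed is to confirm that the set of attacker \emph{credential subsets} reachable through $M_\cre$'s canonicalization is contained in the set of strategies $M_\os$ already defends against, which follows because any strategy sending a subset of $C_A^\sigma$ is a legitimate attacker strategy in $\sigma$.
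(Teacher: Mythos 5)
There is a genuine gap in your construction, and it sits exactly where you place the ``crucial observation.'' You have $M_\cre$ strip the incoming message down to its credential subset $C_i$ and then feed the simulated $M_\os$ a \emph{canonical} message consisting of the raw credentials. But Lemma~\ref{lem:onlyCredsHelp} quantifies over \emph{all} one-shot mechanisms, and an arbitrary $M_\os$ is a fixed automaton with no obligation to accept raw credentials as a valid first message: it may expect an encoded (strategy, credentials) pair as in Construction~\ref{con:M_os}, a hash of a secret concatenated with a fixed string, or any other format, and may decide for the \emph{other} player on anything else. In such a case the user of $M_\cre$, who by assumption holds credentials admitting a winning strategy $S_U$ for $M_\os$, would see her message canonicalized into something the simulated $M_\os$ rejects, so $M_\cre$ would fail in a scenario where $M_\os$ succeeds---the opposite of domination. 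Your justification, ``whatever $M_\os$ does on $S_U$'s message it can be made to do on the canonical message,'' does not hold: $M_\os$ is fixed, and nobody is in a position to ``make'' it reinterpret the canonical message. (The attacker side of your argument is actually fine: if raw credentials in $C_A^\sigma$ sufficed to make $M_\os$ decide for their sender, the attacker would already have had a winning strategy against $M_\os$ in $\sigma$, contradicting its success.)

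The paper's proof uses your key observation---that any message a player can send is a function of common information, her credentials, and random bits---but applies it in the other direction. Rather than feeding $M_\os$ the raw credentials and hoping it treats them like $S_U$'s message, the paper defines a map $S_\os(\cdot)$ that takes the received credential set $c$ and returns a strategy for $M_\os$ \emph{using} $c$ whose first message leads to a win (or $\bot$ if none exists); $M_\cre$ then simulates $M_\os$ with $S_\os(c)$ in place of the strategy extracted from the message. Derivability of messages from credentials is what makes $S_\os$ well defined and computable by the (computationally unbounded, per Note~\ref{note:mechanism_power}) mechanism; it is not a license to replace $M_\os$'s expected input format. If you replace your canonical ``send the raw credentials'' strategy with this credentials-to-winning-strategy map, the rest of your outline---the one-shot/credential-based verification and the attacker-side indistinguishability argument paralleling Lemma~\ref{lem:M_os_attacker_cant_win}---goes through as you describe and matches the paper's proof.
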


The proof of Lemma~\ref{lem:onlyCredsHelp} is similar to that of Proposition~\ref{thm:oneShot} and is omitted for brevity.
The key difference is the observation that any message a player can send is a function of (1)~common information \cite{common_knowledge} available to everyone, 
(2)~credentials of the sender, and (3)~random bits.

Denote by~$M_\os$ a one-shot mechanism.
If a player has a winning strategy for~$M_\os$, knowing some common information, her credentials, and the random bits, she can generate the messages she needs to send to win in the credential-base mechanism as well.

Formally, we define a function~$S_\os(\cdot)$ that maps a set of credentials to a strategy.
Let~$c$ be a set of credentials known to player~$i\in \{0,1\}$, the function~$S_\os(c)$ returns a strategy for player~$i$ for~$M_\os$ using~$c$ such that if the first message that~$M_\os$ receives is from player~$i$, it leads to player~$i$ winning the execution. 
If no such a strategy exists, $S_\os(c)=\bot$.
Then the construction of a credential-based mechanism~$M_\cre$ is similar to that of~$M_\os$ 
with the following change:
Instead of extracting the strategy and credentials from the message,~$M_\cre$ uses the function~$S_\os(\cdot)$ to obtain the strategy given the credentials.

The mechanism~$M_\cre$ is credential-based as is one-shot and decides based only on the credentials it receives.

\begin{example}
    \label{ex:cred}
    Following Example~\ref{ex:oneShot}, 
    a dominating credential-based mechanism~$M_\cre^\textit{ex}$ of~$M_\os^\textit{ex}$ has the player send only her password and OTP in a single message.
    The rest of the decision process of the mechanism remains the same.
    The pressing of ``start'', which is not a credential, is redundant. 
    
    In practice, the username is necessary to authenticate; however, it is not a credential (not a secret.)
    For simplicity, we abstract it away and assume the mechanism knows in advance which account the authentication is for. 
\end{example}

\negspace
\subsubsection{Deterministic mechanisms}
\label{sec:mech_to_bool:bool_are_better:deter}
Next, we show that in one-shot mechanisms, using randomness to determine if a player is the user or not, does not help the mechanism design.

\begin{lemma}
    \label{lem:credsToDet}
    For all one-shot credential-based mechanisms~$M_\cre$ there exists a deterministic credential-based mechanism~$M_\deter$ that dominates~$M_\cre$.
\end{lemma}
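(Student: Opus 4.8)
The plan is to reuse the simulation template of Construction~\ref{con:M_os}, changing only the source of randomness. I would build~$M_\deter(M_\cre)$ exactly like the credential-based mechanism~$M_\cre$ of Lemma~\ref{lem:onlyCredsHelp}: on the first message from each player it recovers a credential subset via~$S_\os(\cdot)$, and it decides by simulating~$M_\cre$ for a number of steps equal to the real arrival time, copying the simulated value of~$\textit{decide}_{M_\cre}$ whenever the simulation decides. The only difference is that, instead of the randomized ordering~$\ord^\rand$ and a tape~$v'$ drawn from~$v$, $M_\deter$ always simulates~$M_\cre$ with one fixed canonical ordering function and one fixed canonical random tape~$v^{*}$ (say, the all-zero tape), so that the single simulated player's messages are delivered in a fixed way and $M_\cre$'s coins are fixed. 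Since~$M_\deter$ never consults the execution tape~$v$ for its own coins, it is deterministic, and it is one-shot and credential-based by the same argument as for~$M_\cre$.

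The reason derandomization is lossless is the universal quantifier in Definition~\ref{success}: if~$M_\cre$ is successful in a scenario~$\sigma$, a single user strategy~$S_U$ must win for \emph{every} scheduler~$\gamma$ and \emph{every} random tape~$v$. In particular it wins under the fixed ordering and the fixed tape~$v^{*}$ that~$M_\deter$ uses in its simulation, so fixing these quantities only shrinks the adversary's freedom and can never create a new attacker victory. I would formalize this with the exact analogue of Lemma~\ref{lem:M_os_attacker_cant_win}: when~$M_\cre$ is successful in~$\sigma$, the first attacker message processed by~$M_\deter$ can never drive the simulation to decide the attacker. The proof is the same contradiction argument --- a hypothetical attacker-winning simulated run (with tape~$v^{*}$, the fixed ordering, and the opponent set to~$\bot$) is a genuine execution of~$M_\cre$, and delaying the user's message past the simulated decision step yields an execution that the winning strategy~$S_U$ must win yet the attacker wins. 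The only change from Lemma~\ref{lem:M_os_attacker_cant_win} is that the scheduler witnessing the contradiction uses the fixed ordering rather than~$\ord^\rand$, which is legitimate precisely because success quantifies over all schedulers.

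With that lemma in hand, the domination proof copies Proposition~\ref{thm:oneShot}. Let~$S_U$ be a winning user strategy for~$M_\cre$ in~$\sigma$ and let the~$M_\deter$-user send the credential subset~$C_U \subseteq C_U^\sigma$ that~$S_U$ uses. Because~$S_U$ wins in~$M_\cre$ against the empty opponent under tape~$v^{*}$ and the fixed ordering, the simulation~$M_\deter$ runs decides the user at some finite step~$\tau$; since~$M_\deter$ re-runs the simulation for an increasing number of steps as real time advances, it decides the user once real time reaches~$\tau$, while the attacker's message --- by the analogue of Lemma~\ref{lem:M_os_attacker_cant_win} --- never produces an attacker decision. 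Hence the user wins every execution of~$M_\deter$ in~$\sigma$, so~$M_\deter$ is successful in~$\sigma$ and~$\Prof{M_\cre}\subseteq\Prof{M_\deter}$.

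The main obstacle is the shared random tape. In this model~$M_\cre$ and the two players all draw their coins from the same stream~$v$, so one cannot simply hard-wire the mechanism's coins while leaving the players' coins untouched; a fixed-tape version of~$M_\cre$ run \emph{inside} the real execution would entangle mechanism and player randomness. The simulation template is what resolves this: $M_\deter$ performs a self-contained simulation with its own cleanly separated tape~$v^{*}$, fed only the credential subsets extracted from the first messages, and this is sound precisely because~$M_\cre$ is credential-based --- neither the players' coins nor any additional strings they append affect its decision. The remaining delicate point, as above, is the timing and indistinguishability step showing that a single fixed ordering suffices, which again rests on the fact that~$M_\cre$'s success is quantified over all schedulers and all random tapes.
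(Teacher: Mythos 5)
Your proposal is correct and takes essentially the same approach as the paper: the paper also constructs~$M_\deter$ by replacing the randomness used in~$\step{M_\cre}{\cdot}$ with a stream of zeroes and states that the domination argument mirrors Proposition~\ref{thm:oneShot} and Lemma~\ref{lem:onlyCredsHelp}, which is precisely the template you fill in. Your extra care about the shared random tape (running a self-contained simulation on a canonical tape~$v^{*}$ rather than hard-wiring the coins in place) is a harmless refinement of the same idea, justified in both treatments by the universal quantification over schedulers and random tapes in Definition~\ref{success}.
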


Denote by~$M_\cre$ a credential-based authentication mechanism. 
A deterministic mechanism~$M_\deter$ that behaves the same as~$M_\cre$ except that its step function does not use any randomness can be constructed 
by artificially replacing the randomness used in~$\step{M_\cre}{\cdot}$ with a stream of zeroes.
Thus, preventing~$M_\deter$ from using any randomness in its step function and making it deterministic.
Again, the proof of Lemma~\ref{lem:credsToDet} is similar to that of Proposition~\ref{thm:oneShot} and Lemma~\ref{lem:onlyCredsHelp} and is omitted for brevity.

\begin{example}
    \label{ex:deter}
    Following Example~\ref{ex:cred}, a dominating deterministic credential-based mechanism~$M_\deter^\textit{ex}$ requires both the password and the OTP in a single message.
    If a player sends the correct password and OTP, then she wins.
    Otherwise, the other player wins (instead of randomly choosing a winner). 
    The profile of~$M_\deter^\textit{ex}$ is~$\Prof{M_\deter^\textit{ex}} = \{((1,1),(0,0)), ((1,1),(1,0)), ((1,1),(0,1))\}\supset \Prof{M_\cre^\textit{ex}}$.       
    That is, all scenarios where both credentials are available to the user and not to the attacker. 
\end{example}



\snegspace
\subsubsection{Boolean mechanisms}
\label{sec:mech_to_bool:bool_are_better:bool}
We show that every deterministic credential-based mechanism is dominated by a mechanism defined by a monotonic Boolean function.
\begin{definition}[Boolean mechanism]
    \label{def:bool_func_mechanism}
    Given a Boolean function on~$n$ variables~${f:\{0,1\}^n \rightarrow \{0,1\}}$, we define the \emph{Boolean mechanism}~$M_f$ as follows:
    The credential generation function~$\gen{M_f}{\cdot}$ is a secure credential generation function.
    The step function~$\step{M_f}{\msg,i}$ receives a single message~$\msg$ from a player with identifier~$i$ that contains a set of credentials~$c$.
    The mechanism~$M_f$ extracts the credentials' availability vector~$q$ from~$c$. 
    If~$c$ is not a valid set of credentials,~$M_f$ decides the identifier of the other player~$1-i$.
    If~$f(q)=1$,~$M_f$ decides~$i$. 
    Otherwise,~$f(q)=0$ and the mechanism decides~$1-i$.
    The mechanism~$M_f$ is the \emph{Boolean mechanism of~$f$}.
    If~$f$ is monotonic, then~$M_f$ is the \emph{monotonic Boolean mechanism of~$f$}.
\end{definition}

Intuitively, for a deterministic credential-based mechanism, the only information that matters is whether a player can prove she has a set of credentials.
If a player can authenticate to a mechanism, accessing additional credentials will not prohibit her from authenticating to the same mechanism.
\begin{construction}
    \label{con:M_f}
    Denote by~$M_\deter$ a deterministic credential-based mechanism with~$n$ credentials.
    Define~$f:\{0,1\}^n\rightarrow \{0,1\}$ as follows: 
    For all~$q\in\{0,1\}^n$:
    \begin{align*}
    f(q)=   \begin{cases}
        \text{1} &\: q \in \{\sigma_U \in\{0,1\}| \exists \sigma=(\sigma_U,\sigma_A) \in \Prof{M_\deter}\}\\
        \text{0} &\:\text{otherwise} 
      \end{cases}
    \end{align*} 
    $f$ is the function derived from~$M_\deter$.
    $M_{f}$ is the mechanism defined by~$f$, with the function~$\gen{M_f}{\cdot} := \gen{M_\deter}{\cdot}$. 
    It is easy to see that~$f$ is well-defined. 
    It remains to show that it is monotonic and that it dominates~$M_\deter$.
\end{construction}

\begin{lemma}
    \label{lem:deterToBoolCreds}
    \lemmaDeterToBool
\end{lemma}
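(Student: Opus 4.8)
The plan is to unfold the definition of~$f$ from Construction~\ref{con:M_f} and reduce monotonicity to a statement about the profile~$\Prof{M_\deter}$. Recall that a Boolean function is monotonic when~$q \leq q'$ componentwise implies~$f(q) \leq f(q')$, so it suffices to show: if~$f(q) = 1$ and~$q' \geq q$, then~$f(q') = 1$. By construction,~$f(q) = 1$ means there is an attacker vector~$\sigma_A$ such that the scenario~$\sigma = (q, \sigma_A)$ lies in~$\Prof{M_\deter}$, i.e.,~$M_\deter$ is successful in~$\sigma$. The goal is therefore to exhibit some attacker vector~$\sigma_A'$ with~$(q', \sigma_A') \in \Prof{M_\deter}$, which by the definition of~$f$ gives~$f(q') = 1$.

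The core of the argument is to keep the \emph{same} attacker vector and the \emph{same} user strategy. Concretely, I would take the witnessing scenario~$\sigma = (q, \sigma_A)$ and consider the candidate scenario~$\sigma' = (q', \sigma_A)$, and claim~$M_\deter$ is successful in~$\sigma'$ as well. Since~$M_\deter$ succeeds in~$\sigma$, Definition~\ref{success} gives a winning user strategy~$S_U$; by Definition~\ref{strategy} this strategy only has access to, and hence only sends subsets of, the user's credentials~$C_U^{\sigma}$ corresponding to~$q$. Because~$q' \geq q$, we have~$C_U^{\sigma} \subseteq C_U^{\sigma'}$, so the very same strategy~$S_U$ is a valid user strategy in~$\sigma'$: the user simply ignores the additional credentials and behaves exactly as in~$\sigma$.

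The step to verify carefully is that~$S_U$ remains \emph{winning} in~$\sigma'$, and this is where the choice to fix~$\sigma_A$ pays off. The attacker's credential set is determined by~$\sigma_A$, which is unchanged between~$\sigma$ and~$\sigma'$, so~$C_A^{\sigma'} = C_A^{\sigma}$; consequently the set of attacker strategies, together with the universe of schedulers~$\gamma$ and random tapes~$v$, is identical in both scenarios. Since~$S_U$ wins against \emph{all} such attacker strategies, schedulers, and random tapes in~$\sigma$, and none of these are enlarged in~$\sigma'$, the same~$S_U$ wins against all of them in~$\sigma'$ too. Hence~$M_\deter$ is successful in~$\sigma' = (q', \sigma_A)$, so~$(q',\sigma_A) \in \Prof{M_\deter}$ and~$f(q') = 1$, establishing monotonicity. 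The only obstacle is stating precisely that increasing the user's credentials neither invalidates her strategy nor strengthens the attacker; everything else follows directly, and indeed the argument uses the credential-based structure only insofar as it guarantees strategies are functions of available credentials.
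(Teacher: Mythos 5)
Your proof is correct and follows essentially the same route as the paper's: unfold the definition of~$f$, fix the witnessing attacker availability vector, reuse the same winning user strategy in the enlarged scenario (noting the user has all her previous credentials and more while the attacker's strategy space, schedulers, and random tapes are unchanged), and conclude~$f(q')=1$. Your added remark that fixing~$\sigma_A$ keeps the quantification over attacker strategies identical is a slightly more explicit justification of the step the paper states in one line, but the argument is the same.
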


\begin{proof}
    Given two binary vectors~${x},{y}\in \{0,1\}^n$, we denote~${x}\geq {y}$ if for every index~$i\in[n]$,~${x}_i\geq {y}_i$. 
    In other words, if an element is set to~$1$ in~${y}$, it is also set to~$1$ in~${x}$.

    We now prove that the function~$f$ is monotonic.
    Let~${x},{y}\in \{0,1\}^n$ such that~${y}\geq{x}$ and~$f(x)=1$.
    By definition of~$f$,~$f(x)=1$ if and only if there exists a scenario~$\sigma$ in which the user's availability vector is~$x$ and~$M_\deter$ succeeds in~$\sigma$. 
    Let~$S_U$ be the winning user strategy for~$M_\deter$ in~$\sigma$.

    As~$M_\deter$ is a credential-based mechanism,~$S_U$ sends only a set of credentials~$c_U$ in a single message.
    Now consider the scenario~$\sigma'$ in which the user's availability vector is~$y$ and the attacker's availability vector is the same as in~$\sigma$.
    The user can use the exact same strategy~$S_U$ in~$\sigma'$, as she has access to all credentials she has in~$\sigma$ and more.
    Thus, for all attacker strategies~$S_A$, schedulers~$\gamma$, and random tapes~$v$, the user wins the execution~$\exec = (M_\deter, \sigma', S_U, S_A, \gamma, v)$. 
    Therefore,~$M_\deter$ succeeds in~$\sigma'$ as well,~${f(y)=1}$ and~$f$ is monotonic.  
\end{proof}

To prove Lemma~\ref{lem:deterToBoolCreds}, we show that if~$M_\deter$ is successful in a scenario~$\sigma$, then~$M_f$ is successful in~$\sigma$ as well.
We build on the observation that based on the scheduler, an execution of~$M_f$ has 3 possible paths: either (1)~$M_f$ receives no messages, thus it does not decide.
Or~(2) the user's message arrives first, with her set of credentials corresponding to the availability vector~$\sigma_U$.
As~$M_\deter$ is successful in~$\sigma$, by definition of~$f$ we get that~$f(\sigma_U) =1$ and~$M_f$ decides the user.

The last possibility is that~(3) the attacker's message arrives first, if it is not a valid set of credentials, then~$M_f$ decides the user.
Otherwise, if it is a valid set of credentials, we show that it is not possible that~$f$ evaluates to~$1$ on the attacker's availability vector.
The proof is somewhat similar to that of Lemma~\ref{lem:M_os_attacker_cant_win} and is done by contradiction.
If~$f$ evaluates to~$1$ on the attacker's availability vector, 
then there exists two execution prefixes of~$M_\deter$ that are indistinguishable to the mechanism~$M_\deter$ and lead to the attacker winning.
In one of which the attacker wins, although the user uses her winning strategy, contradicting the fact that~$M_\deter$ is successful in~$\sigma$.
Thus,~$M_f$ decides the user. 
As in all cases, if a message arrives to~$M_f$, then~$M_f$ decides the user.
We conclude that~$M_{f}$, is a monotonic Boolean mechanism that dominates~$M_{\deter}$.
The proof can be found in Appendix~\ref{app:deterToBoolCreds}.

\begin{example}
    \label{ex:bool}
    Consider the deterministic credential-based mechanism~$M_\deter^\textit{ex}$ from Example~\ref{ex:deter}.
    A dominating monotonic Boolean mechanism~$M_f^\textit{ex}$ is the mechanism defined by the function $f(c_1,c_2) = c_1 \wedge c_2$.        
\end{example}

\snegspace
\subsubsection{Any mechanism is dominated by a Boolean one}
We now show that every mechanism is dominated by a monotonic Boolean mechanism.
\begin{theorem}
    \label{thm:M_to_inc}
    For all profiles~$\prof$, for all mechanisms~$M$ that solve the~$\prof$-asynchronous authentication problem,
    there exists a monotonic Boolean mechanism~$M_f$ that dominates~$M$. 
\end{theorem}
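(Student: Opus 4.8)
The plan is to prove the theorem by composing the four reductions established in this section and invoking the transitivity of the domination relation. Since domination is defined via the superset relation on profiles (Definition~\ref{partial ordering of mechanisms}), and set inclusion is transitive, a chain of profile inclusions $\Prof{M} \subseteq \Prof{M_\os} \subseteq \Prof{M_\cre} \subseteq \Prof{M_\deter} \subseteq \Prof{M_f}$ immediately yields that $M_f$ dominates $M$. Thus the theorem requires no new construction---only careful bookkeeping that the output of each reduction is a legitimate input to the next.

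Concretely, I would proceed as follows. First, given any mechanism $M$ solving the $\prof$-asynchronous authentication problem, apply Proposition~\ref{thm:oneShot} to obtain a one-shot mechanism $M_\os$ with $\Prof{M} \subseteq \Prof{M_\os}$. Second, since $M_\os$ is one-shot, Lemma~\ref{lem:onlyCredsHelp} yields a credential-based mechanism $M_\cre$ (which is in particular one-shot) with $\Prof{M_\os} \subseteq \Prof{M_\cre}$. Third, because $M_\cre$ is a one-shot credential-based mechanism, Lemma~\ref{lem:credsToDet} gives a deterministic credential-based mechanism $M_\deter$ with $\Prof{M_\cre} \subseteq \Prof{M_\deter}$. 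Finally, Construction~\ref{con:M_f} together with Lemma~\ref{lem:deterToBoolCreds} produces a monotonic Boolean mechanism $M_f$ with $\Prof{M_\deter} \subseteq \Prof{M_f}$. Chaining these inclusions gives $\Prof{M} \subseteq \Prof{M_f}$, so $M_f$ dominates $M$; and in particular $\prof \subseteq \Prof{M} \subseteq \Prof{M_f}$, so $M_f$ also solves the $\prof$-asynchronous authentication problem.

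The transitivity step itself is trivial, so the real content lies entirely in the lemmas rather than in the theorem. The one place I would be careful is \emph{type-compatibility} across the chain: each reduction must apply to the object produced by the previous one. This holds because ``credential-based'' and ``deterministic credential-based'' are successive refinements of ``one-shot,'' so each lemma's hypothesis matches the preceding conclusion exactly. A secondary point, flagged in Note~\ref{note:mechanism_power}, is that the intermediate constructions $M_\os$, $M_\cre$, and $M_\deter$ are assumed to wield enough computational power to simulate $M$ within a single step; the final $M_f$, by contrast, merely evaluates a fixed Boolean function on an availability vector and is therefore genuinely efficient. If one additionally wants the dominating mechanism to be polynomially bounded, the end-of-section remark that a dominating polynomial mechanism always exists closes that gap, but this concern is orthogonal to the transitivity argument carrying the theorem.
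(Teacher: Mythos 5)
Your proposal is correct and follows exactly the paper's own argument: chaining Proposition~\ref{thm:oneShot}, Lemma~\ref{lem:onlyCredsHelp}, Lemma~\ref{lem:credsToDet}, and Lemma~\ref{lem:deterToBoolCreds} via transitivity of profile inclusion. The additional remarks on type-compatibility and computational resources are sensible but do not change the substance.
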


\begin{proof}
    Let~$\prof$ be a profile and let~$M$ be a mechanism that solves the~$\prof$ asynchronous authentication problem.
    By Proposition~\ref{thm:oneShot} there exists a one-shot mechanism~$M_\os$ dominating~$M$, 
    from Lemma~\ref{lem:onlyCredsHelp} there exists a credential-based mechanism~$M_\cre$ dominating~$M_\os$,
    and by Lemma~\ref{lem:credsToDet},~$M_\cre$ is dominated by a deterministic credential-based mechanism~$M_{\textit{det}}$.
    Finally, by Lemma~\ref{lem:deterToBoolCreds} there exists a monotonic Boolean mechanism that dominates~$M_\deter$.
    Overall we get that any mechanism is dominated by a monotonic Boolean mechanism.
\end{proof}

Note that it is not true that for every mechanism there~exists an equivalent Boolean mechanism.
We show an example of a mechanism that has no equivalent Boolean mechanism.

\begin{example}
    \label{exp:no_deter_creds}
    Consider the mechanism~$M_\os^\textit{ex}$ from Example~\ref{ex:oneShot}.
    We showed that,~$\Prof{M_\os^\textit{ex}} = \{((1,1),(0,0))\}$.
    However, there exists no Boolean mechanism with the same profile.
    There are six monotonic Boolean functions with two variables.
    Two of which are the constant functions~$f_1(v) = 0$ and~$f_2(v)=1$ for all~$v\in \{0,1\}^2$ and one can easily confirm that their profiles are empty.
    The other four functions are~$f_3(c_1,c_2) = c_1$,~$f_4(c_1,c_2) = c_2$,~$f_5(c_1,c_2) = c_1 \vee c_2$, and~$f_6(c_1,c_2) = c_1 \wedge c_2$.
    None of these functions has a profile that includes only a single scenario.
\end{example}

\negspace
\subsection{Monotonic Boolean Mechanisms are Maximal}
\label{sec:mech_to_bool:maximal_mechs}

We now prove that every non-trivial monotonic Boolean~mechanism is maximal.
First, we show that a Boolean mechanism is successful in a scenario~$(\sigma_U,\sigma_A)$ if and only~if~the~user~has sufficient credentials and the attacker does not, 
that is,~$f(\sigma_U)=1$ and~$f(\sigma_A)=0$. 
The proof can be found in Appendix~\ref{app:proof_success_equivalence}. 

\begin{observation} \label{cor:profile_of_M_f}
    Let~$M_f$ be a monotonic Boolean mechanism of the function~$f$.
    Let~$T= \{v\in \{0,1\}^n \mid f(v) = 1\}$ and let~$F= \{ v \in \{0,1\}^n \mid f(v) = 0\}$.
    The profile of~$M_f$ is the Cartesian product of~$T$ and~$F$.
    $$\Prof{M_f} = \{(\sigma_U, \sigma_A) \in \{0,1\}^n \times \{0,1\}^n \mid \sigma_U \in T  , \sigma_A \in F \} . $$
    And the profile's size is $|\Prof{M_f}| =|T| \cdot |F|$. 
\end{observation}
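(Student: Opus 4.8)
The plan is to obtain the observation as a direct corollary of the success characterization stated immediately above it: $M_f$ is successful in a scenario $(\sigma_U,\sigma_A)$ if and only if $f(\sigma_U)=1$ and $f(\sigma_A)=0$ (the ``if and only if'' whose proof is deferred to Appendix~\ref{app:proof_success_equivalence}). Granting that characterization, the rest of the argument is purely set-theoretic, so I would not re-derive any mechanism-level reasoning here.

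First I would unfold Definition~\ref{profile}: $\Prof{M_f}$ is by definition the set of scenarios $(\sigma_U,\sigma_A)\in\{0,1\}^n\times\{0,1\}^n$ in which $M_f$ succeeds. Substituting the success characterization, a scenario lies in $\Prof{M_f}$ exactly when $f(\sigma_U)=1$ and $f(\sigma_A)=0$. By the definitions of $T$ and $F$, the condition $f(\sigma_U)=1$ is equivalent to $\sigma_U\in T$ and the condition $f(\sigma_A)=0$ is equivalent to $\sigma_A\in F$. These two conditions are independent — one constrains only the user coordinate and the other only the attacker coordinate — so the admissible set factors as the Cartesian product $T\times F$, which is exactly the claimed identity $\Prof{M_f}=\{(\sigma_U,\sigma_A)\mid \sigma_U\in T,\ \sigma_A\in F\}$. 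The cardinality $|\Prof{M_f}|=|T|\cdot|F|$ then follows from the product rule for finite sets; as an aside, $T$ and $F$ partition $\{0,1\}^n$, so $|T|+|F|=2^n$, though that relation is not needed here.

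Since the substantive work is outsourced to the success characterization, the observation itself presents no real obstacle; the genuine difficulty lives in that characterization. Were I to prove it from scratch, both directions would lean on the structure established earlier: because $M_f$ is one-shot and credential-based it decides on the first presented credential set, and because $f$ is monotonic each player's best play is to present \emph{all} credentials available to her, since presenting a subset $q$ only yields $f(q)\le f(\sigma)$. The easy direction is that $f(\sigma_U)=1$ and $f(\sigma_A)=0$ make the user win regardless of arrival order: the user presents her full set and wins when her message is first, while any valid attacker message gives $f(q)\le f(\sigma_A)=0$ and is decided against the attacker. The subtler direction — failure when $f(\sigma_U)=0$ or $f(\sigma_A)=1$ — requires exhibiting an adversarial scheduler, legitimate because delivery times are unbounded, that delivers the losing message first, mirroring the indistinguishability argument of Lemma~\ref{lem:M_os_attacker_cant_win}. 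For the observation as stated, none of this is redone: it is an immediate consequence of the characterization and the definition of a profile.
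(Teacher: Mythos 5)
Your proposal is correct and follows essentially the same route as the paper: the paper also treats this observation as an immediate consequence of the success characterization ($M_f$ succeeds in $(\sigma_U,\sigma_A)$ iff $f(\sigma_U)=1$ and $f(\sigma_A)=0$) proved in Appendix~\ref{app:proof_success_equivalence}, with the Cartesian-product form and the cardinality $|T|\cdot|F|$ following by the same set-theoretic unfolding you give. Your aside sketching the characterization itself is also consistent with the appendix argument, though the paper's reverse direction is handled directly from the definition of $M_f$'s step function and monotonicity rather than by an indistinguishability argument; since you explicitly exclude that aside from the proof of the observation, this does not affect correctness.
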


We intend to gradually build the mechanism's function. 
To this end, we recall the definition of a partially-defined Boolean function---a function that defines output values for a subset of the Boolean vectors.
\begin{definition}[Partially-defined Boolean function] \cite{partiall_def_bool_func1} 
    \label{def:partially_defined_Boolean_function}
    A \emph{partially-defined Boolean function} (or \emph{partial Boolean function}) is a pair of disjoint sets~$(T,F)$ of binary $n$-vectors. 
    The set~$T$ denotes the set of true vectors and~$F$ denotes the set of false vectors. 
    A Boolean function ${f :\{0,1\}^n \rightarrow \{0,1\}}$ is called an \emph{extension} of~$(T, F)$ if~$f(x) = 1$ for all~$x\in T$ and ~$f(y) = 0$ for all~$y\in F$.     
\end{definition}

We extend the definition of a Boolean mechanism to a partial Boolean mechanism.

\begin{definition}[Partial Boolean mechanism]
    \label{def:mechanism_defined_by_partially_defined_Boolean_function}
    Given a partial Boolean function on~$n$ variables~$(T,F)$, we define the mechanism~$M_{(T,F)}$ as follows:
    The credential generation function~$\gen{M_{(T,F)}}{\cdot}$ is a secure credential generation function.
    The step function~$\step{M_{(T,F)}}{\msg, i}$ receives a single message~$\msg$ from a player with identifier~$i$ that contains a set of credentials~$c$.
    The mechanism~$M_{(T,F)}$ extracts the credentials' availability vector~$q$ from~$c$.
    If~$c$ is not a valid set of credentials,~$M_{(T,F)}$ decides the identifier of the other player~$1-i$.
    If~$q\in T$, the mechanism~$M_{(T,F)}$ decides~$i$. 
    And if~$q \in F$, the mechanism decides~$1-i$.
    Otherwise,~$q\notin T\cup F$, and the step function of~$M_{(T,F)}$ does nothing. 
    $M_{(T,F)}$ is the \emph{partial Boolean mechanism of~$(T,F)$}.
\end{definition}

\begin{note}
    \label{note:profile_of_M_f}
    Similar to Observation~\ref{cor:profile_of_M_f}, the profile of a partial Boolean mechanism of~$(T,F)$ is the set of all scenarios in which the user's availability vector is in~$T$ and the attacker's is in~$F$.
\end{note}
We define a monotonic partial Boolean function as follows:

\begin{definition}[Monotonic partial Boolean function] 
    \label{def:monotonic_partial_Boolean_function}
    A \emph{monotonic partial Boolean function} is a partial Boolean function that has a monotonic extension. 
\end{definition}

For example, in Example~\ref{ex:oneShot}, the mechanism~$M_\os^\textit{ex}$ is the mechanism of the tuple~$(T,F)$ where~$T= \{(1,1)\}$ and~$F= \{(0,0)\}$.
This is a monotonic partial Boolean function, as the function~$f(x,y)=x \wedge y$ is a monotonic extension of~$(T,F)$.

To prove we can indeed extend a partial Boolean function, we make use of the following observation.
We show that if a mechanism~$M_1$ dominates~$M_2$ then the set of user availability vectors in~$\Prof{M_1}$ is a subset of the set of user availability vectors in~$\Prof{M_2}$, and the same holds for the attacker availability vectors.

\begin{observation}
    \label{clm:partial_Boolean_function}
    \partialBooleanfunction
\end{observation}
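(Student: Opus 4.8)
The plan is to certify that the two coordinate-projections of each profile are bona fide partial Boolean functions in the sense of Definition~\ref{def:partially_defined_Boolean_function}, and that profile inclusion makes these pairs nest. I regard a profile as a subset of $\{0,1\}^n\times\{0,1\}^n$ and let $T_i$ and $F_i$ be its images under the user- and attacker-coordinate projections, so that $u\in T_i$ iff $(u,a)\in\Prof{M_i}$ for some $a$, and $b\in F_i$ iff $(u',b)\in\Prof{M_i}$ for some $u'$. The content to establish is therefore twofold: that $(T_i,F_i)$ is a legitimate partial Boolean function, and that these functions are related by inclusion when the underlying profiles are.

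The crux is that $(T_i,F_i)$ genuinely \emph{induces} a partial Boolean function, i.e.\ that $T_i$ and $F_i$ are disjoint; otherwise a single vector could not simultaneously serve as a true vector and a false vector. I would argue this by contradiction. Suppose $v\in T_i\cap F_i$. From $v\in T_i$ we obtain a scenario $(v,a)\in\Prof{M_i}$ with a winning user strategy $S$, which necessarily uses only credentials available in $v$; from $v\in F_i$ we obtain a scenario $(u,v)\in\Prof{M_i}$ with its own winning user strategy $S'$. Now examine $(u,v)$: let the attacker, who holds all of $v$, replay $S$, and use an asynchronous scheduler that assigns the attacker the identifier the user held in the $(v,a)$ execution and delays every honest-user message past the step $\tau$ at which $M_i$ decides in the execution where only the $S$-messages are delivered against a silent opponent. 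Since $M_i$ observes nothing but delivered messages and identifiers, its prefix up to $\tau$ is identical in the two executions, so by Definition~\ref{success} it reaches the same decision and crowns the $S$-player. Here that player is the attacker, so the attacker defeats $S'$, contradicting $(u,v)\in\Prof{M_i}$. Hence $T_i\cap F_i=\emptyset$ and each pair is a partial Boolean function.

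Given disjointness, the inclusions in the statement are immediate: for $\emptyset\neq\Prof{M_1}\subseteq\Prof{M_2}$, any $u\in T_1$ is witnessed by $(u,a)\in\Prof{M_1}\subseteq\Prof{M_2}$, whence $u\in T_2$, and symmetrically $F_1\subseteq F_2$; the nonemptiness hypothesis only guarantees $T_1,F_1\neq\emptyset$ so that the inclusions are non-vacuous. Finally, when $M_i$ is one of the reduced mechanisms of Section~\ref{sec:mech_to_bool:bool_are_better}, the monotonicity established in Lemma~\ref{lem:deterToBoolCreds} supplies a monotonic extension, upgrading $(T_i,F_i)$ to a monotonic partial Boolean function. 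I expect the disjointness step to be the main obstacle: one must pin down the delaying scheduler and verify that, identifiers included, $M_i$'s finite decision prefix in the contradiction execution literally matches that of the winning $(v,a)$ execution — the same delicate ``attacker-goes-first'' reasoning as in Lemma~\ref{lem:M_os_attacker_cant_win}, here repurposed to separate the true and false vector sets rather than to justify a one-shot reduction.
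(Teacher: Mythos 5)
Your proof is correct, but it is organized quite differently from the paper's, and most of its effort goes into something the observation does not ask for. For the claim itself---$T_1 \subseteq T_2$ and $F_1 \subseteq F_2$---your third paragraph gives the direct projection argument: any $u\in T_1$ is witnessed by some $(u,a)\in\Prof{M_1}\subseteq\Prof{M_2}$, hence $u\in T_2$, and symmetrically for the attacker vectors. The paper instead argues by contradiction: it picks $q\in T_1\setminus T_2$, pairs it with an arbitrary $\sigma_A\in F_1$, and invokes the Cartesian-product structure of profiles (Observation~\ref{cor:profile_of_M_f}) to conclude $(q,\sigma_A)\in\Prof{M_1}$ while $(q,\sigma_A)\notin\Prof{M_2}$. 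Your route is more elementary and arguably more robust, since it never relies on the Cartesian-product property (which the paper formally states only for Boolean mechanisms) and does not even need the nonemptiness hypothesis. Conversely, the bulk of your proposal---the scheduler-delay argument establishing $T_i\cap F_i=\emptyset$---is not part of this observation at all: disjointness is the content of the subsequent Lemma~\ref{lem:profile_partial_func}, where the paper obtains it cheaply from the dominating monotonic Boolean mechanism of Theorem~\ref{thm:M_to_inc} (a vector in $T\cap F$ would force $f(q)=1$ and $f(q)=0$ simultaneously), rather than re-running the indistinguishable-prefix argument of Lemma~\ref{lem:M_os_attacker_cant_win} as you do. Your disjointness argument is sound within the model, but as a proof of this observation it is a detour; the part that actually answers the question is your short projection argument, which is simpler than the paper's own proof.
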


For example, the mechanisms~$M_1=M_\os^\textit{ex}$ from Example~\ref{ex:oneShot}, and the mechanism~$M_2$ that requires only the first credential to authenticate, both have non-empty profiles, and~$\Prof{M_1} \subseteq \Prof{M_2}$.
Then~$T_1=\{(1,1)\}$,~$T_2=\{(1,1),(1,0)\}$,~$F_1=\{(0,0)\}$ and~${F_2=\{(0,0),(0,1)\}}$.
And it holds that~$T_1 \subseteq T_2$ and~$F_1 \subseteq F_2$.

    

We now show that every mechanism is equivalent to a monotonic partial Boolean one.

\begin{lemma} 
    \label{lem:profile_partial_func}
    \profilePartialFnc
\end{lemma}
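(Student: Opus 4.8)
The goal is to prove Lemma~\ref{lem:profile_partial_func}: for every mechanism there exists an equivalent monotonic partial Boolean mechanism. The plan is to combine the domination results of Section~\ref{sec:mech_to_bool:bool_are_better} with the structural Observation~\ref{clm:partial_Boolean_function} to pin down the profile exactly, not merely up to domination. The subtlety here is that Theorem~\ref{thm:M_to_inc} only gives a \emph{dominating} monotonic Boolean mechanism, which may succeed in strictly more scenarios (Example~\ref{exp:no_deter_creds} shows $M_\os^\textit{ex}$ has no equivalent \emph{total} Boolean mechanism). So the work is to show that allowing \dontCare~entries (partial functions) closes exactly this gap.

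Given a mechanism~$M$, I would first apply Theorem~\ref{thm:M_to_inc} to obtain a dominating monotonic Boolean mechanism~$M_f$, so~$\Prof{M} \subseteq \Prof{M_f}$. By Observation~\ref{cor:profile_of_M_f} the profile of~$M_f$ is a Cartesian product~$T \times F$. Now define the candidate partial Boolean function directly from the profile of~$M$ itself: let~$T^* = \{\sigma_U \mid \exists\, \sigma_A,\ (\sigma_U,\sigma_A)\in\Prof{M}\}$ be the set of user-availability vectors appearing in~$\Prof{M}$, and symmetrically~$F^* = \{\sigma_A \mid \exists\, \sigma_U,\ (\sigma_U,\sigma_A)\in\Prof{M}\}$. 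These are disjoint because a mechanism cannot succeed in a scenario where the user and attacker have the same availability vector, so~$(T^*,F^*)$ is a legitimate partial Boolean function and~$M_{(T^*,F^*)}$ is well defined (Definition~\ref{def:mechanism_defined_by_partially_defined_Boolean_function}).

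The heart of the proof is to show~$\Prof{M_{(T^*,F^*)}} = \Prof{M}$. By Note~\ref{note:profile_of_M_f}, the profile of~$M_{(T^*,F^*)}$ is exactly~$\{(\sigma_U,\sigma_A) \mid \sigma_U\in T^*,\ \sigma_A\in F^*\} = T^* \times F^*$. So I must prove~$\Prof{M} = T^* \times F^*$; one inclusion ($\subseteq$) is immediate from the definitions of~$T^*,F^*$. For the reverse inclusion I would invoke Observation~\ref{clm:partial_Boolean_function} applied to the pair~$M$ and the dominating~$M_f$: since~$\emptyset \neq \Prof{M} \subseteq \Prof{M_f} = T\times F$, the observation yields~$T^* \subseteq T$ and~$F^* \subseteq F$. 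Thus any~$(\sigma_U,\sigma_A)\in T^*\times F^*$ lies in~$T\times F = \Prof{M_f}$, so~$M_f$ succeeds there. I then need to transfer success from~$M_f$ back to~$M$ on exactly these product scenarios; this is where I leverage the structure: because~$\sigma_U\in T^*$ there is some scenario~$(\sigma_U,\sigma_A')\in\Prof{M}$ witnessing a winning user strategy that presents credentials for~$\sigma_U$, and because~$\sigma_A\in F^*$ the attacker's vector is one against which~$M$ provably never lets an attacker win (being credential-based and deterministic after the reductions). Combining these, the same winning-user reasoning used in Lemma~\ref{lem:deterToBoolCreds} and Lemma~\ref{lem:M_os_attacker_cant_win}---user credentials in~$T^*$ guarantee a win, attacker credentials in~$F^*$ guarantee no attacker win---shows~$M$ itself succeeds in~$(\sigma_U,\sigma_A)$, giving~$T^*\times F^* \subseteq \Prof{M}$.

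Finally I must check that~$(T^*,F^*)$ is \emph{monotonic} in the sense of Definition~\ref{def:monotonic_partial_Boolean_function}, i.e., that it admits a monotonic extension. The monotonic function~$f$ from the dominating~$M_f$ is the natural candidate: since~$T^*\subseteq T = f^{-1}(1)$ and~$F^*\subseteq F = f^{-1}(0)$, the function~$f$ agrees with the partial function on its defined domain and is monotonic by Lemma~\ref{lem:deterToBoolCreds}, so it is a monotonic extension. The main obstacle I anticipate is the reverse-inclusion transfer step ($T^*\times F^* \subseteq \Prof{M}$): the sets~$T^*$ and~$F^*$ are each extracted from possibly \emph{different} scenarios in~$\Prof{M}$, so it is not automatic that their product is a subset of~$\Prof{M}$---this ``rectangularization'' is precisely what Observation~\ref{clm:partial_Boolean_function} together with the credential-based structure buys us, and the care lies in arguing that a user-winning vector and an attacker-losing vector drawn from separate scenarios still combine into a single winning scenario for~$M$.
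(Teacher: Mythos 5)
Your construction is the same as the paper's: you take $T^*$ and $F^*$ to be exactly the sets of user and attacker availability vectors occurring in $\Prof{M}$, obtain the dominating monotonic Boolean mechanism $M_f$ from Theorem~\ref{thm:M_to_inc}, use Observation~\ref{clm:partial_Boolean_function} to embed $T^*\subseteq T$ and $F^*\subseteq F$, and conclude that $f$ itself is a monotonic extension of $(T^*,F^*)$. That part is right and matches the paper's argument. One small slip: your justification for disjointness (``a mechanism cannot succeed when the user and attacker have the same availability vector'') is not the right statement --- $q\in T^*\cap F^*$ gives you two \emph{different} witnessing scenarios $(q,\sigma_A')$ and $(\sigma_U',q)$, not the single scenario $(q,q)$, so non-viability of $(q,q)$ proves nothing. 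The paper gets disjointness for free from $T^*\subseteq T$, $F^*\subseteq F$ and $T\cap F=\emptyset$ (since $f$ is a well-defined function); use that instead.

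The genuine gap is in your transfer step for $T^*\times F^*\subseteq\Prof{M}$. You justify it by saying that $M$ is ``credential-based and deterministic after the reductions,'' but that is false: the reductions of Section~\ref{sec:mech_to_bool:bool_are_better} construct \emph{new} mechanisms $M_\os$, $M_\cre$, $M_\deter$ that dominate $M$; they do not change $M$, which remains an arbitrary interactive, randomized mechanism. Consequently, the assertion that $\sigma_A\in F^*$ means ``$M$ provably never lets an attacker win'' is exactly the claim that needs proof, and it does not follow from $(\sigma_U',\sigma_A)\in\Prof{M}$ alone: that membership only gives you one particular user strategy $S_U'$ (using $\sigma_U'$'s credentials) against which no $\sigma_A$-attacker wins, whereas in the scenario $(\sigma_U,\sigma_A)$ the user plays a different strategy built from different credentials, and the mechanism's responses --- hence the attacker's opportunities --- may differ. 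To close this you need an indistinguishability argument applied to $M$ itself in the style of Lemma~\ref{lem:M_os_attacker_cant_win}: if some $\sigma_A$-attacker forced a decision in her favor at a finite time $\tau$ against the winning strategy for $(\sigma_U,\sigma_A')$, exhibit a scheduler for the scenario $(\sigma_U',\sigma_A)$ that delays the user's messages past $\tau$ so that $M$ sees an indistinguishable prefix and again decides for the attacker, contradicting $(\sigma_U',\sigma_A)\in\Prof{M}$. (One must also check that the prefix really is indistinguishable, which takes care because in the first execution $M$ may have already received messages from the $\sigma_U$-user before $\tau$.) As written, your argument asserts the conclusion rather than proving it.
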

Both proofs of Observation~\ref{clm:partial_Boolean_function} and Lemma~\ref{lem:profile_partial_func} can be found in Appendix~\ref{app:proof_partial_Boolean_function}.
Now we show that every maximal mechanism is equivalent to a Boolean mechanism.

\begin{theorem}
    \label{thm:monotonic_maximal}
    For all monotonic Boolean mechanisms of non-constant functions, there exists no strictly dominating mechanism.
\end{theorem}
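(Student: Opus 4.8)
The plan is to argue by contradiction and reduce the general case to a comparison between two \emph{total} Boolean mechanisms, where the fact that $T$ and $F$ partition $\{0,1\}^n$ does the real work. Fix a non-constant monotonic $f$ and suppose, toward a contradiction, that some mechanism $M'$ strictly dominates $M_f$, i.e.\ ${\Prof{M_f}\subset \Prof{M'}}$. The first step is to upgrade $M'$ to a monotonic Boolean mechanism: by Theorem~\ref{thm:M_to_inc} there exists a monotonic Boolean mechanism $M_g$ with $\Prof{M'}\subseteq \Prof{M_g}$, and chaining the two containments yields $\Prof{M_f}\subset \Prof{M_g}$. Thus it suffices to rule out strict domination of $M_f$ by another monotonic Boolean mechanism $M_g$.

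Next I would make the two profiles explicit via Observation~\ref{cor:profile_of_M_f}: writing $T_f=\{v\mid f(v)=1\}$, $F_f=\{v\mid f(v)=0\}$ and similarly $T_g,F_g$ for $g$, we have $\Prof{M_f}=T_f\times F_f$ and $\Prof{M_g}=T_g\times F_g$. Since $f$ is non-constant, both $T_f$ and $F_f$ are non-empty, so $\Prof{M_f}\neq\emptyset$; combined with $\Prof{M_f}\subseteq\Prof{M_g}$ this lets me invoke Observation~\ref{clm:partial_Boolean_function} to conclude $T_f\subseteq T_g$ and $F_f\subseteq F_g$.

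The key observation that closes the argument is that for a \emph{total} Boolean function the true and false sets partition the cube: $T_f\cup F_f=\{0,1\}^n=T_g\cup F_g$ with $T_f\cap F_f=T_g\cap F_g=\emptyset$. Given $T_f\subseteq T_g$ and $F_f\subseteq F_g$, take any $v\in T_g$; since $v\in T_f\cup F_f$, if $v\in F_f$ then $v\in F_g$, contradicting $v\in T_g$ and $T_g\cap F_g=\emptyset$, so $v\in T_f$. Hence $T_g\subseteq T_f$ and therefore $T_f=T_g$, and symmetrically $F_f=F_g$. But then $\Prof{M_f}=T_f\times F_f=T_g\times F_g=\Prof{M_g}$, contradicting the strict inclusion $\Prof{M_f}\subset\Prof{M_g}$.

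The step I expect to be the crux is not any single calculation but the combination in the third paragraph: the containments from Observation~\ref{clm:partial_Boolean_function} only force equality because $T$ and $F$ are \emph{complementary} for a total function. This is precisely what distinguishes total Boolean mechanisms from partial ones (Definition~\ref{def:mechanism_defined_by_partially_defined_Boolean_function}), where $T\cup F$ may omit vectors and the same argument would fail—so I would flag the partition property as the load-bearing hypothesis, together with the non-constancy of $f$ that guarantees $\Prof{M_f}\neq\emptyset$ and thus makes Observation~\ref{clm:partial_Boolean_function} applicable.
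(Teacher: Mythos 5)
Your proof is correct and follows essentially the same route as the paper's: both hinge on Observation~\ref{clm:partial_Boolean_function} giving the containments of the true/false sets, non-constancy of $f$ giving non-emptiness of $\Prof{M_f}$, and the partition property $T\cup F=\{0,1\}^n$, $T\cap F=\emptyset$ forcing the containments to be equalities. The only (immaterial) difference is that you pass from $M'$ to a \emph{dominating total} Boolean mechanism via Theorem~\ref{thm:M_to_inc}, whereas the paper passes to an \emph{equivalent partial} Boolean mechanism via Lemma~\ref{lem:profile_partial_func} and uses only disjointness of $(T',F')$; both closings are valid.
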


\begin{proof}
    Let~$f$ be a non-constant monotonic Boolean function,~$M_f$ the mechanism defined by~$f$, and~$M'$ a mechanism dominating~$M_f$.
    We show that~$\Prof{M_f} = \Prof{M'}$.
    
    By Lemma~\ref{lem:profile_partial_func}, there exists a monotonic partial Boolean mechanism equivalent to~$M'$.
    Let~$(T',F')$ be the partial Boolean function defining~$M'$.
    And let~$T$ be the set of all user availability vectors in~$\Prof{M_f}$ and~$F$ be the set of all attacker availability vectors in~$\Prof{M_f}$.
    
    First, we show that the profile of~$M_f$ is not empty.
    As~$f$ is non-constant, there exists a vector~$q$ such that~$f(q)=1$ and a vector~$q'$ such that~$f(q')=0$.
    Consider the scenario~$\sigma$ such that~$\sigma_U=q$ and~$\sigma_A=q'$.
    Then, from Observation~\ref{cor:profile_of_M_f},~${\sigma\in \Prof{M_f}}$, and thus~$\Prof{M_f} \neq \emptyset$.

    Then, as~$\Prof{M_f} \subseteq \Prof{M'}$, from~Observation~\ref{clm:partial_Boolean_function}, we get that~${T\subseteq T'}$ and~$F\subseteq F'$. 
    However, as~$f$ is a Boolean function,~${T\cup F = \{0,1\}^n}$ and~$T\cap F = \emptyset$.
    Therefore,~$T=T'$ and~$F=F'$.
    And thus,~$\Prof{M_f} = \Prof{M'}$.
\end{proof}

Finally, we prove that there exists no hierarchy between monotonic Boolean mechanisms of non-constant functions.
\begin{lemma}
    \label{Lem:different_funcs_different_profiles}
    Let~$f,g:\{0,1\}^n \rightarrow \{0,1\}$ be two different non-constant monotonic Boolean functions. 
    Denote by~$M_f$ and~$M_g$ the monotonic Boolean mechanisms of~$f$ and~$g$ respectively.
    If~$f$ or~$g$ is not constant, then~$\Prof{M_f} \neq \Prof{M_g}$.
\end{lemma}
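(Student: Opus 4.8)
The plan is to reduce everything to the clean product characterization of profiles established in Observation~\ref{cor:profile_of_M_f}. For a total monotonic Boolean function $h$, write $T_h = \{v \in \{0,1\}^n \mid h(v)=1\}$ and $F_h = \{v \in \{0,1\}^n \mid h(v)=0\}$. Since $h$ is a \emph{total} Boolean function, $(T_h, F_h)$ partitions $\{0,1\}^n$, and Observation~\ref{cor:profile_of_M_f} gives $\Prof{M_h} = T_h \times F_h$. Thus the profile of a Boolean mechanism is completely determined by this partition, so to separate $\Prof{M_f}$ from $\Prof{M_g}$ it suffices to exhibit a single scenario lying in one profile but not the other.

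First I would use the assumption $f \neq g$ to pick a vector $q$ on which the two functions disagree. Without loss of generality (swapping the roles of $f$ and $g$ if necessary) assume $f(q) = 1$ and $g(q) = 0$, i.e.,~$q \in T_f$ but $q \in F_g$. Next I would invoke non-constancy of $f$: since $f$ is not constant, $F_f \neq \emptyset$, so there is some vector $w$ with $f(w) = 0$.

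With $q$ and $w$ in hand, the scenario $(q,w)$ does the separating work. Because $q \in T_f$ and $w \in F_f$, the product characterization gives $(q,w) \in \Prof{M_f}$. On the other hand, membership of $(q,w)$ in $\Prof{M_g} = T_g \times F_g$ would require $q \in T_g$, i.e.,~$g(q) = 1$, contradicting $g(q) = 0$; hence $(q,w) \notin \Prof{M_g}$. Therefore $\Prof{M_f} \neq \Prof{M_g}$.

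There is no genuine obstacle here beyond applying Observation~\ref{cor:profile_of_M_f}; the one point worth care is the ``without loss of generality'' step, since the construction of the separating witness $w$ relies on non-constancy of whichever function evaluates to $1$ at the disagreement point $q$. Because both $f$ and $g$ are assumed non-constant, this is available in either orientation of the symmetry, so the argument goes through regardless of which function is larger at $q$. Notably, monotonicity itself is not needed for the separation---only totality (to obtain the partition) and non-constancy (to obtain the false witness $w$).
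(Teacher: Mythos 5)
Your proof is correct and follows essentially the same route as the paper's: pick a disagreement point $q$, use WLOG to orient it so $f(q)=1$ and $g(q)=0$, use non-constancy of $f$ to obtain a false witness $w$, and observe that the scenario $(q,w)$ lies in $\Prof{M_f}$ but not $\Prof{M_g}$ via the product characterization of profiles. Your explicit justification of the WLOG step (that non-constancy of both functions makes the argument symmetric) and your observation that monotonicity is not actually needed are both accurate refinements, but the underlying argument is identical to the paper's.
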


\begin{proof}
    Let~$f,g:\{0,1\}^n \rightarrow \{0,1\}$ be two different monotonic Boolean functions. 
    There exists a vector~$v \in \{0,1\}^n$ such that~$f(v) \neq g(v)$.
    Assume without loss of generality that~$f(v) = 1$ and~$g(v) = 0$.
    As both functions are non-constant, there exists a vector~$u \in \{0,1\}^n$ such that~${f(u) = 0}$.
    Consider the scenario~$\sigma = (v,u)$.
    We have~$f(v)=1$ and~$f(u)=0$, and thus~$\sigma \in \Prof{M_f}$.
    We also have~$g(v)=0$, and thus~$\sigma \notin \Prof{M_g}$.
    Therefore,~${\Prof{M_f} \neq \Prof{M_g}}$.
\end{proof}

Overall, we showed that for every mechanism that solves the~$\prof$ asynchronous authentication problem, there exists a partial Boolean function that defines it.
And there exists a monotonic Boolean mechanism that dominates it.
In addition, every monotonic Boolean mechanism of a non-constant function is maximal.
Thus concluding that every two monotonic Boolean mechanisms of non-constant functions are either equivalent or incomparable.
This shows that in contrast to the synchronous model~\cite{cryptoeprint:2022/1682}, there exists no probability-agnostic hierarchy between maximal mechanisms in the asynchronous model.

\section{Probabilistic Analysis}
\label{sec:prob}
Using the properties we found of maximal mechanisms, we present an efficient method for finding mechanisms~with approximately maximal success probability.
\subsection{Preliminaries}
As in previous work~\cite{walleddesign}, we derive the probability of scenarios from the credentials' fault probability.
Each credential~$c_i$ can be in one of the four states, 
(1) \emph{safe}: Only~available~to~the user, 
(2) \emph{loss}: Not available to either player, 
(3) \emph{leak}: Available to both players, or 
(4) \emph{theft}: Available only to~the~attacker. 
Its~$c_i^S$ is available to each player accordingly by the scheduler.

Each credential has a probability of being in each of the four states.
For example, a complex password that the user memorized has a low probability of being leaked (guessed) but a high probability of being lost (forgotten).
While a simple password has a high probability of being leaked but a low probability of being lost.
The states of the different credentials are determined by a probability space specified by independent probabilities of each of the credentials. 
The probabilities of a credential~$c_i$ being in each of the four states are denoted by~$P_i^{\safe}$,~$P_i^{\loss}$,~$P_i^{\leak}$, and~$P_i^{\theft}$, respectively, $P_i^{\safe} + P_i^{\loss}+P_i^{\leak} + P_i^{\theft} = 1$.
We assume the mechanism designer estimates the fault probabilities of the credential. 
But note that this is always the case, even if done implicitly.

The probability of a scenario given the probability vector tuple of each credential is the product of the probabilities of each credential being in the state it is in the scenario.
Let~$M_f$ be the Boolean mechanism of the function~$f$. 
The success probability of~$M_f$ is the sum of the probabilities of all scenarios in which~$M_f$ is successful, denoted by~$P^\suc(M_f)$,
and the failure probability is its complement~$P^\textit{fail}(M_f) = 1 - P^\suc(M_f)$. 
A mechanism is \emph{better} than another if its success probability is higher.

The relation between credentials' fault probabilities and the success probability of a mechanism is not straightforward.
For example, consider two credentials~$c_1$ and~$c_2$.
Assume both credentials are prone to loss but not to leakage, with fault probabilities of~$P_1^{\loss} = P_2^{\loss} = 0.1$ and~$P_1^{\safe} = P_2^{\safe} = 0.9$.
Then using both credentials in an OR mechanism, where either credential is sufficient for authentication, results in a success probability of~$0.99$.
While using a single credential results in a success probability of~$0.9$. 
Therefore, in this case, using both credentials increases the success probability of the mechanism.
However, this is not true in general.
For example, assume one credential is prone to loss with a fault probability of~$P_1^{\loss} = 0.1$ and~$P_1^{\safe} = 0.9$ and the other is prone to leakage with~$P_2^{\leak} = 0.1$ and~$P_2^{\safe}= 0.9$.
Then using both credentials in an OR mechanism results in a success probability of~$0.9$.
Moreover, any Boolean mechanism using one or both credentials has the same success probability of~$0.9$.
Therefore, in this case, using two credentials does not increase the success probability of the mechanism compared to using a single credential.

\subsection{Profiles and Scenarios}
Given the number of credentials~$n$ and the probabilities of states, our goal is to find a mechanism with the highest success probability.
Using exhaustive search over the space of all possible mechanisms (equivalently, all monotonic Boolean functions) is infeasible for more than just a few credentials~\cite{walleddesign}.
The number of different Boolean functions with~$n$ credentials is the Dedekind number of~$n$~\cite{Dedekind1897} which grows super-exponentially and is only known up to~$n=9$~\cite{vanhirtum2023computation}.
Therefore, we aim to choose a mechanism that is close to optimal while exploring only a small fraction of the space of all mechanisms.
We achieve this by searching by scenarios.

First, we show that there is no need to consider scenarios with no safe credentials.
\begin{definition}[Viable scenarios]
    A \emph{viable scenario}~$\sigma$ is a scenario in which there exists at least one safe credential. 
    A scenario is~\emph{non-viable} if it is not viable.
    That is,~$\sigma_U \leq \sigma_A$.
\end{definition}

Maram et al.~\cite{cryptoeprint:2022/1682} showed that in the synchronous model, a non-viable scenario is not in the profile of any mechanism.
As their model is stronger (has additional assumption on the network), this is also true in our asynchronous model. 
 We also use their following observation. 



\begin{observation}
    \label{obs:non_viable_scenarios_num}
    \numViableScenarios
\end{observation}


Now we bound the number of scenarios that can be added to any partial Boolean mechanism's profile.
\begin{observation}
    \label{obs:scenarios_addition}
    \profileSizeBound
\end{observation}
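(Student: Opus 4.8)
The plan is to recast ``adding scenarios to the profile'' as extending the underlying partial Boolean function. By Note~\ref{note:profile_of_M_f}, the profile of $M_{(T,F)}$ is exactly $\{(\sigma_U,\sigma_A)\mid \sigma_U\in T,\ \sigma_A\in F\}$, so $|\Prof{M_{(T,F)}}|=|T|\cdot|F|$. Adding scenarios without contradiction means replacing $(T,F)$ by some $(T',F')$ with $T\subseteq T'$ and $F\subseteq F'$ that is still a valid (monotonic) partial Boolean function. Since no vector can be simultaneously a true vector and a false vector, $T'$ and $F'$ are disjoint subsets of $\{0,1\}^n$, hence $|T'|+|F'|\le 2^n$. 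The new profile has at most $|T'|\cdot|F'|$ scenarios, so the number of added scenarios is at most $|T'|\cdot|F'|-|\Prof{M_{(T,F)}}|$, and the task reduces to maximizing $|T'|\cdot|F'|$ under $|T'|\ge|T|$, $|F'|\ge|F|$, and $|T'|+|F'|\le 2^n$.

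I would then solve this as a one-variable optimization. Writing $a=|T'|$ and using $|F'|\le 2^n-a$, the product is bounded by $g(a)=a(2^n-a)$, a concave function with maximum at $a=2^{n-1}$; by the symmetry between the roles of $T$ and $F$ (the profile size $|T|\cdot|F|$ and all constraints are symmetric in the two sets, cf.\ Observation~\ref{clm:partial_Boolean_function}) assume $s=|T|\ge|F|$, so that $a\ge s$. In the case $s\ge 2^{n-1}$, the feasible range $a\ge s$ lies entirely in the decreasing branch of $g$, giving $g(a)\le g(s)=s(2^n-s)$; this bound is attained by keeping $T$ fixed and placing every remaining vector into $F'$, which stays consistent precisely when $T$ is an up-set---the pertinent case, since by Lemma~\ref{lem:profile_partial_func} and Theorem~\ref{thm:monotonic_maximal} the extensions of interest are monotonic. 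This yields the first expression, $s(2^n-s)-|\Prof{M_{(T,F)}}|$.

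For the case $s<2^{n-1}$ the constraint $a\ge s$ no longer pins $a$ to the decreasing branch, so $s(2^n-s)$ is no longer the governing bound; instead I would bound the profile directly by the total supply of admissible scenarios. Since the profile of any mechanism contains only viable scenarios (a non-viable scenario, $\sigma_U\le\sigma_A$, lies in no profile), and Observation~\ref{obs:non_viable_scenarios_num} counts exactly $4^n-3^n$ viable scenarios, at most $4^n-3^n-|\Prof{M_{(T,F)}}|$ scenarios can be added. The step I expect to be the main obstacle is precisely this case distinction together with the achievability and consistency check: one must verify that the extremal $(T',F')$ can be realized without creating a true vector that lies below a false vector (which would violate viability and monotonicity), so the interplay of disjointness, the partial order, and viability has to be argued carefully rather than treated as a free product.
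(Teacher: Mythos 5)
Your proposal is correct and follows essentially the same route as the paper's proof: for $s \ge 2^{n-1}$ the bound $s\cdot(2^n-s)$ is obtained by fixing the larger of $T,F$ and pushing all remaining vectors into the other set (the paper splits this into the two symmetric cases $s=|T|$ and $s=|F|$), and for $s<2^{n-1}$ one falls back on the count $4^n-3^n$ of viable scenarios from Observation~\ref{obs:non_viable_scenarios_num}. Your explicit concavity argument for $a(2^n-a)$ and your flagging of the achievability/monotonicity check are just a more careful rendering of steps the paper's proof states tersely, which in any case only needs the upper-bound direction for the algorithm's correctness.
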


Both proofs of Observation~\ref{obs:non_viable_scenarios_num} and Observation~\ref{obs:scenarios_addition} are in Appendix~\ref{app:profile_size_bound}.

    
    
    

\begin{figure}[t]
    
    \centering
    \begin{subfigure}{0.49\columnwidth}
        \centering
        \includegraphics[width=\linewidth]{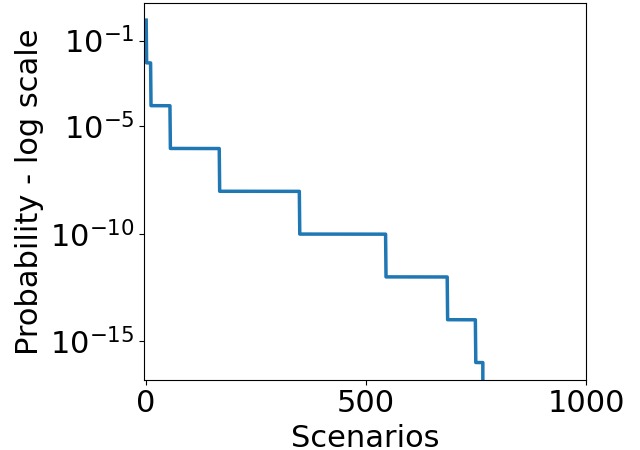} 
        \caption{$n=9$: ${P_1^{\loss}=P_1^{\leak}=0.01}$, for~$i>1$~$P_i^{\theft} = 0.01$.}
        \label{fig:probability_9_keys_hetro}
    \end{subfigure}\hfill
    \begin{subfigure}{0.49\columnwidth}
        \centering
        \includegraphics[width=\linewidth]{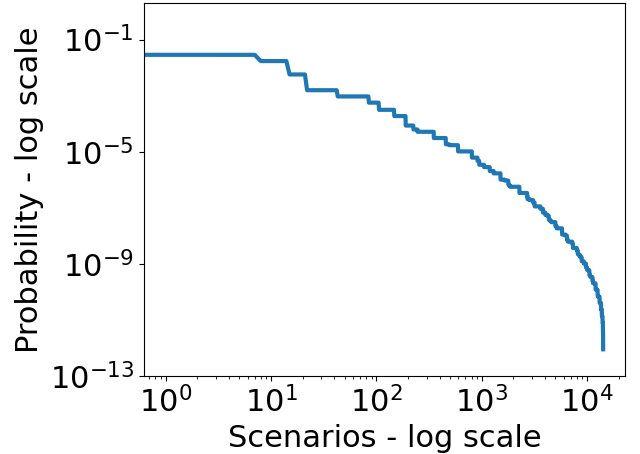} 
        \caption{$n=7$ with~${P^\loss = 0.05}$, $P^\leak=0.03, P^{\theft} = 0.01$.}
        \label{fig:probability_7_keys_hetro}
    \end{subfigure}

\caption{Probability of scenarios for~$n$ credentials and different probabilities.}
\label{fig:prob}
\end{figure}

To maximize success probability, we limit our search to mechanisms defined by monotonic Boolean functions.
We are only interested in credentials with low fault probabilities. 
Intuitively, scenarios with fewer faults have higher probabilities than those with more faults. 
So the probability distribution of viable scenarios is highly concentrated on a small number of scenarios.
And when some fault probabilities are~$0$, the number of scenarios with non-zero probability drops even further. 
Thus, the number of scenarios that our algorithm needs to consider is much smaller than the total number of scenarios.
For example, Figure~\ref{fig:probability_9_keys_hetro} shows the probability distribution of the scenarios for a system with 9 credentials where the first credential might be lost or leaked with probability~$P_i^{\loss} = P_i^{\leak} = 0.01$ and the rest of the credentials might be stolen with probability~$P_i^{\theft} = 0.01$.
The number of scenarios with~$9$ credentials is~$4^9=$ \num[group-separator={,}]{262144} with~\num[group-separator={,}]{242461} viable scenarios, but only~\num{766} have non-zero probability.
Similarly, Figure~\ref{fig:probability_7_keys_hetro} shows the probability distribution of the scenarios for a system with 7 credentials where all fault types are possible with probability~$P_i^{\loss}= 0.05, P_i^{\leak} = 0.03$, and~$P_i^{\theft} = 0.01$ for all credentials.
The number of different scenarios with 7 credentials is~$4^7 = $ \num[group-separator={,}]{16384}, out of which~\num[group-separator={,}]{14197} are viable scenarios with positive probability.
Yet, over~$91\%$ of the probability is concentrated in the first~$50$ scenarios and over~$95\%$ on the first~$100$ scenarios.
Other fault probabilities result in qualitatively similar results.
While this case demonstrates a more challenging situation, it still shows that the significant part of the probability is concentrated on a small number of scenarios.
Thus, when searching for near-optimal mechanisms, we can neglect the vast majority of scenarios. 


\snegspace

\begin{algorithm}[t]
    \SetInd{0.1em}{0.9em}
    \SetAlgoNoLine 
    \SetAlgoNoEnd 
    \DontPrintSemicolon 
        
    \SetKwInOut{Input}{input}\SetKwInOut{Output}{output}
    
    \nonl \emph{\viableScenarios}:  List of all viable scenarios, sorted by probability in descending order, calculated based on the fault probabilities of the credentials. \;
    \nonl $\maxSuccessProb \gets 0$ \tcp*[r]{maximal success probability found so far}
    \nonl $\maxTable \gets$ truth table where the all zeroes row is set to~$0$, the all ones row is set to~$1$, and the rest is set to~$\bot$ \tcp*[r]{truth table with the maximal success probability found so far}
    \nonl $\delta$: The precision parameter chosen by the user.

    \nonl \small \emph{\scenarioBasedSearch(\truthTable, $\indexx$)} \scriptsize \; 

    \tcc{Input: \truthTable - a (possibly partial) monotonic truth table of a mechanism.}
    \tcc{Input: $\indexx$ is the index of the next scenario to consider.}
    \tcc{Output: No returned value. The function updates~$\maxSuccessProb$ and~$\maxTable$.}

    $\currProfile\gets$ \truthTable's profile, calculated as described in Note~\ref{note:profile_of_M_f} \label{alg:line:get_profile}\; 
    $\successProb \gets$ success probability of \currProfile \label{alg:line:curr_success}\;  
    
    \If{\truthTable is complete}  {  \label{alg:line:complete_table} 
        \If{\successProb > \maxSuccessProb} { \label{alg:line:success_prob_condition}
            $\maxSuccessProb \gets \successProb$ \label{alg:line:max_update_1}\;
            $\maxTable \gets \truthTable$ \label{alg:line:max_update_1_2}\;
        }
        \Return \;
    }
    $\possibleScenariosProbabilitiesSorted \gets$ sorted array of scenarios probability that can be added to truthTable and whose index in viableScenarios is higher than~\indexx  \;  \label{alg:line:sorted_additional_probabilities}

    $\numPossibleAdditions \gets$ maximum number of scenarios that can be added to \truthTable given the table as calculated in Observation~\ref{obs:scenarios_addition} \label{alg:line:num_possible_additions}\; 
    $\maxPossibleAdditions \gets \sum_{i=1}^{\numPossibleAdditions} {\possibleScenariosProbabilitiesSorted(i)}$  \label{alg:line:max_possible_additions} \;
    \If(\tcp*[f]{additional scenarios will add 0 to the success probability}){$\maxPossibleAdditions = 0$}  { \label{alg:line:zero_additions}
        $\truthTable \gets$ arbitrarily complete \truthTable \label{alg:line:complete_table_arbitrarly} \;
        \If{$\successProb > \maxSuccessProb$} { \label{alg:line:arbitrary_better}
            $\maxSuccessProb \gets \successProb$ \label{alg:line:max_update_2}\;
            $\maxTable \gets \truthTable$ \label{alg:line:max_update_2_2}\; 
        }    
        \Return \;
    }

    $\potentialSuccessProb \gets \successProb + \maxPossibleAdditions$ \tcp*{upper bound on success probability} 

    \If{$\maxSuccessProb > 0$ \textbf{and} $\maxSuccessProb > \potentialSuccessProb - \delta$}   {  \label{alg:line:delta_condition}
        \Return \label{alg:line:delta_pruning} \tcp*{cannot exceed by over $\delta$, prune this branch}
    }

    $\sigma \gets \viableScenarios(\indexx)$ \label{alg:line:get_next_scenario} \tcp*{get next scenario}
    \prevUserVal = \truthTable($\sigma_U$)  \tcp*{save previous values}
    \prevAttVal =  \truthTable($\sigma_A$) \; 

    \If {($\prevUserVal \neq 0$ \textbf{and} $\prevAttVal \neq 1$) \textbf{and} ($\prevUserVal = \bot$  \textbf{or} $\prevAttVal = \bot$)} {  \label{alg:line:include_scenario_cond}
        $\tableWithScenario \gets$ updateTable(\truthTable, $\sigma$) \label{alg:line:update_table} \tcp*{update table with $\sigma$ and keep it monotonic} 

        \scenarioBasedSearch(\tableWithScenario, $\indexx+1$) \label{alg:line:include_scenario} \tcp*{recursive call with the new scenario}
    }
    \scenarioBasedSearch(\truthTable, $\indexx+1$)    \label{alg:line:skip_scenario} \tcp*{recursive call without the new scenario}

    \caption{Scenario-based search algorithm}
\label{alg:scenario_based_search}
\end{algorithm}

\snegspace
\subsection{Scenario-Based Search Algorithm}
Given a positive fraction~$\delta\in \mathcal{N}^+$, 
our scenario-based search algorithm (Algorithm~\ref{alg:scenario_based_search}) finds a mechanism whose success probability is at most~$\delta$ below that of an optimal mechanism.
It takes advantage of three key observations: 
(1) The number of scenarios that might be added to a profile of a partial Boolean mechanism is bounded~(Observation~\ref{obs:scenarios_addition}); 
(2) non-viable scenarios can be ignored~\cite{cryptoeprint:2022/1682};
and (3) the probability of different scenarios drops quickly~(Figure~\ref{fig:prob}).

Instead of searching all possible mechanisms to find the optimal one, we build the truth table (and thus mechanism) incrementally:
Given the number of credentials~$n$, we start with the empty partial truth table of~$n$ variables of the function (total of~$2^n$ rows) and add rows.
By Lemma~\ref{lem:profile_partial_func}, for all mechanisms~$M$ there exists a monotonic partial Boolean function~$(T,F)$ such that~$M$ is equivalent to the mechanism defined by~$(T,F)$.
The monotonic partial truth table corresponding to~$(T,F)$ has the value~$1$ for all vectors in~$T$,~$0$ for all vectors in~$F$, and the rest of the vectors~$v\notin T\cup F$ are not set yet (denoted by~$\dontCare$).
Changing a value of a row in the truth table from~$\dontCare$ to~$0$ is equivalent to adding the vector to~$F$ and changing a value from~$\dontCare$ to~$1$ is equivalent to adding the vector to~$T$.

By Theorem~\ref{thm:M_to_inc} every mechanism is dominated by a Boolean mechanism.
Because every maximal mechanism is a monotonic Boolean mechanism (Theorem~\ref{thm:monotonic_maximal}), and every monotonic Boolean function yields a maximal mechanism~(Lemma~\ref{Lem:different_funcs_different_profiles}),
we extend the monotonic partial Boolean function that defines the mechanism until it is full (representing a Boolean function).

Note that if the value of the all zeroes row is set to~$1$ in the truth table, from monotonicity, we get the constant function~$f(x)=1$:
As for all~${x\in\{0,1\}^n}$ we have that~$x \geq 0^n$, then to ensure monotonicity, we must set~$f(x)=1$ for all~$x\in\{0,1\}^n$. 
Similarly, if the value of the all ones row is set to~$0$, we get the constant function~$f(x)=0$.

It is easy to confirm that the profile of a Boolean mechanism of a constant function is empty:
Let~$\omga_\F$ and~$\omga_\T$ be the mechanisms defined by the constant Boolean functions~$\forall v, \F(v)=0$, and~$\T(v)=1$.
Let~${\sigma= (\sigma_U,\sigma_A)}$ be a scenario.
Because~${\forall v, \F(v)=0}$, we get that~${\F(\sigma_U)\wedge \neg \F(\sigma_A) = 0}$, thus~$M_\F$ fails.
Similarly, we have~${\T(\sigma_U)\wedge \neg \T(\sigma_A) = 0}$.
Therefore, both \mech s have empty profile.
So we consider only non-constant monotonic Boolean functions and set the value of the all zeroes row to~$0$ and the value of the all ones row to~$1$.

Equipped with these insights, we are ready to present the scenario-based search (Algorithm~\ref{alg:scenario_based_search}). 
Given the credentials' probability vectors, we calculate the probability of each viable scenario, sort it and save it in a global variable \viableScenarios. 
Then we progress greedily, from scenarios with the highest to the lowest probabilities.
For each scenario, there are two options: either include it in the profile or not.
We refer to \truthTable as a dictionary mapping an availability vector to~$0$,~$1$, or~$\dontCare$.
We denote by~$\truthTable(\sigma_U)$ and~$\truthTable(\sigma_A)$ the value of the user and the attacker's availability vectors~$\sigma_U$ and~$\sigma_A$ in~\truthTable respectively.
Similarly, we denote by~${\viableScenarios(\indexx)}$ the scenario at index~\indexx in the sorted list of viable scenarios.

In each step, the algorithm computes the success probability of the current partial truth table (lines~\ref{alg:line:get_profile}-\ref{alg:line:curr_success}).
It then checks if the truth table is full (line~\ref{alg:line:complete_table}). 
If so, it means we reached a Boolean function that cannot be extended.
The algorithm checks if the success probability of the mechanism (truth table) is higher than the best found so far (line~\ref{alg:line:success_prob_condition}).
If so, it updates the best success probability and the corresponding global best truth table (lines~\ref{alg:line:max_update_1}-\ref{alg:line:max_update_1_2}), then it returns. 

Otherwise, the table can be extended.
The algorithm checks if it can add any scenarios with positive probability to the current truth table (line~\ref{alg:line:zero_additions}).
If not, then any additional scenarios would contribute~$0$ to the success probability, regardless of the exact scenarios chosen.  
Thus, it completes the truth table arbitrarily by iteratively adding scenarios from those left (with a higher index than \indexx in $\viableScenarios$) while making sure the result is monotonic (line~\ref{alg:line:complete_table_arbitrarly}),
checks if the success probability of the resulting mechanism  is higher than the global best found so far (line~\ref{alg:line:arbitrary_better}), and updates the global best success probability and the corresponding global best truth table accordingly (lines~\ref{alg:line:max_update_2}-\ref{alg:line:max_update_2_2}), then it returns.
Note that this covers the case where no scenarios are left to add. 

Then, the algorithm checks if it is beneficial to continue extending the current table.
It does so by finding an upper bound on the additional success probability for the current branch by considering the next~$\textit{\numPossibleAdditions}$ scenarios in the sorted list and summing their probabilities (line~\ref{alg:line:max_possible_additions}) where~$\textit{\numPossibleAdditions}$ (line~\ref{alg:line:num_possible_additions}) is the maximum number of scenarios that may be added to the current truth table without contradicting it~(Observation~\ref{obs:scenarios_addition}).
If the potential best solution is not better by at least~$\delta$ than the current best (line~\ref{alg:line:delta_condition}), we prune this branch and do not consider any of the mechanisms it results in (line~\ref{alg:line:delta_pruning}).
This takes advantage of the fact that the probability of different scenarios drops exponentially fast (Figures~\ref{fig:probability_9_keys_hetro} and~\ref{fig:probability_7_keys_hetro}) and is a key step in ensuring the algorithm's efficiency.

If none of the stop conditions hold, the algorithm continues by exploring the next scenario (line~\ref{alg:line:get_next_scenario}).
For the recursive exploration, given a scenario, we have two options: (1) Include the scenario in the truth table and recursively call the algorithm (only if it does not contradict the current truth table and is not already in the profile) (line~\ref{alg:line:include_scenario}). 
Or (2) Exclude this scenario and recursively call the algorithm with the next scenario (line~\ref{alg:line:skip_scenario}).

Given a scenario~$\sigma$ and a partially filled truth table,
we check whether the scenario contradicts the truth table and whether it is already in the profile (line~\ref{alg:line:include_scenario_cond}).
We do so by checking if the value of the user's availability vector~$\sigma_U$ is not set to~$0$ (either~$1$ or $\dontCare$) and the value of the attacker's availability vector~$\sigma_A$ is not set to~$1$ (either~$0$ or $\dontCare$).
If so, the scenario does not contradict the truth table.
And checking that either the user's availability vector or the attacker's availability vector is not set yet (has a value $\dontCare$).
If so, then it is not in the profile.
If both conditions hold, we include the scenario in the truth table and recursively call the algorithm (line~\ref{alg:line:include_scenario}).

To include a scenario~$\sigma$ in the profile, we create an updated truth table using the function updateTable(\truthTable, $\sigma$) (line~\ref{alg:line:update_table}). 
The function sets the entry corresponding to~$\sigma_U$ to~$1$ and the entry~$\sigma_A$ to~$0$.
Then, to have the truth table represent a monotonic Boolean function, updateTable($\cdot$) updates it by setting all the entries~${x\in\{0,1\}^n}$ such that~${x>\sigma_U}$ to~$1$.
And setting all entries~${y\in\{0,1\}}$~such~that~${y < \sigma_A}$~to~$0$. 

The algorithm's result is the mechanism saved in the \maxTable and the corresponding success probability \maxSuccessProb. 


\subsection{Algorithm Correctness}
\label{sec:algo_correctness}
We now show that our scenario-based search~(Algorithm~\ref{alg:scenario_based_search}) finds a mechanism that is~$\delta$ close to the optimal mechanism.
First, we note that after each update to the partial monotonic truth table, it remains a (possibly partial) monotonic truth table---the above update does not contradict previous ones (Appendix~\ref{app:algorithm_correctness}).
Next, we show that the algorithm always stops and that when it returns, the truth table contains a valid complete monotonic truth table whose success probability is at most~$\delta$ less than the optimal mechanism.

The algorithm explores adding scenarios to the truth table incrementally, in the worst case it may explore all viable scenarios.
As the number of viable scenarios is finite and bounded (Observation~\ref{obs:non_viable_scenarios_num}), the algorithm always stops.

The algorithm stops only when one of the three stopping conditions is met.
In the first case (line~\ref{alg:line:complete_table}), the truth table is complete, and the algorithm updates \maxTable to this complete table.
In the second case (line~\ref{alg:line:zero_additions}), the algorithm finds that all scenarios that can be added to the current truth table will not improve the success probability.
So it completes the truth table arbitrarily, and \maxTable gets the completed table.
Finally, in the third case (line~\ref{alg:line:delta_condition}), any possible additions do not result in a mechanism that is at least~$\delta$ better so the algorithm prunes the branch and returns, without updating \maxTable.
However, this condition is met only if \maxSuccessProb is greater than 0 i.e., if the algorithm found a solution earlier.
Therefore, the algorithm must update a full truth table at least once, and it is always a valid monotonic truth table that has a success probability at most~$\delta$ less than the optimal mechanism.
The proof is in Appendix~\ref{app:algorithm_correctness}.

\begin{figure}[t]
    \centering
    \includegraphics[width=0.47\textwidth]{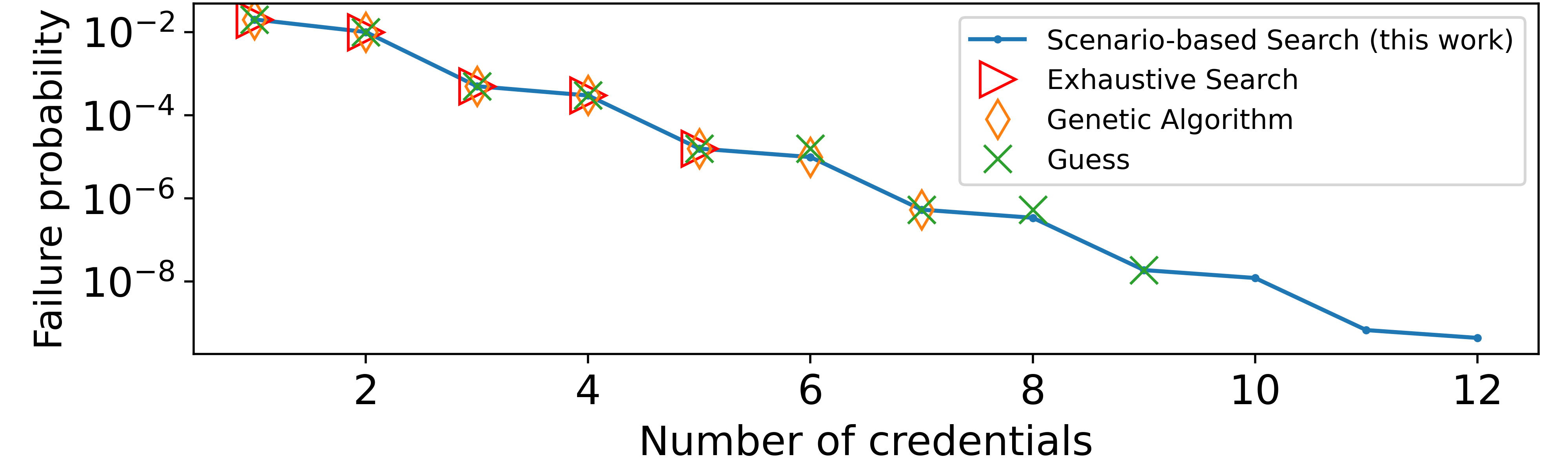}
    \caption{Failure probability vs. credentials number. Comparing our algorithm to previous methods: exhaustive search, genetic algorithm, and guessing a symmetric mechanism.}
    \label{fig:results_hetro}
\end{figure}

\subsection{Algorithm Complexity}
\label{sec:algo_complexity}
We first present an upper bound on the complexity of the algorithm, followed by a discussion of when and why the actual complexity is much lower.
Then we evaluate the complexity of the algorithm empirically~(\S\ref{sec:empirical_complexity}), by measuring the runtime of our algorithm for different numbers of credentials and fault probabilities.

\subsubsection{Bound}
\label{sec:disscusion_complexity}

To give a loose upper bound on the algorithm's time complexity, we note that each recursive call's complexity is bounded by~$O((4^n-3^n)\cdot n)$.
And in the worst case, the recursion depth reaches~$O(2^{4^n-3^n})$.
The overall bound is thus~$O(2^{4^n-3^n}\cdot (4^n-3^n)\cdot n)$.
We defer a detailed analysis to Appendix~\ref{app:sec:disscusion_complexity}.

This analysis assumes the worst case for every step of the algorithm.
However, in practice, many steps' worst case cannot happen simultaneously.
Thus, the actual complexity of the algorithm is much lower, depending  on the number of credentials, their probabilities, and the parameter~$\delta$. 

For example, while an upper bound on the recursion depth is given by 2 to the power of the number of viable scenarios, the actual depth explored is much smaller.
This is due to multiple factors, for example (1) the exponential drop in the probability of different scenarios, combined with the fact that the algorithm prunes branches with negligible advantage;
(2) the fact that the number of scenarios that can be added to a profile of a partial Boolean mechanism is limited~(Observation~\ref{obs:scenarios_addition})
in a non-trivial way depending on which do result in a monotonic mechanism;
and (3)~each scenario adds not a single row to the truth table, but possibly many rows for keeping monotonicity 
(depending on the specific scenario and the current truth table).
While all those factors contribute to the algorithm's efficiency, 
it remains an open question whether the algorithm's exact complexity can be calculated theoretically (cf. the lack of a closed form expression for the number of monotinic Boolean functions~\cite{Dedekind1897}).

\subsubsection{Empirical Complexity}
\label{sec:empirical_complexity}

As the problem of finding the exact complexity of the algorithm remains open, 
we evaluate the algorithm's complexity by measuring its runtime as a function of the number of credentials and their different fault probabilities.
Figure~\ref{fig:runtime} shows the runtime of the algorithm for different numbers of credentials and fault probabilities using a logarithmic Y~scale. 

We consider two different categories of results: One where the algorithm's runtime grows exponentially with the number of credentials, and one where it grows super-exponentially.

\textbf{Exponential Growth:}
When one credential can suffer from up to a single type of fault and the rest can have up to two types of faults,
or when all credentials can suffer from all types of faults with low probability,
the algorithm's runtime grows exponentially ($O(4^n)$) with the number of credentials (all fits with~${R^2 > 0.97}$, exact values are in Appendix~\ref{app:sec:empirical_complexity}). 
For example, when all credentials can suffer from loss with~$P^\loss = 0.01, P^\leak=P^\theft=0$ 
(\emph{only loss} in Figure~\ref{fig:runtime}), 
or when 2 credentials can be easily lost, but not leaked or stolen, with~$i\in\{1,2\}$,~$P_i^\loss=0.3, P_i^\leak=P_i^\theft=0$ and the rest can be either lost or leaked with~$j>2$,~$P_j^\loss=P_j^\leak=0.01,P_j^\theft=0$. 
(\emph{2~easily-to-lose} in Figure~\ref{fig:runtime}).

We observe a similar trend when each credential can have all three types of faults, where at least two with low fault probabilities, $P^\loss=0.01, P^\leak=P^\theft=0.001$
(\emph{loss, leak, theft} in Figure~\ref{fig:runtime}). 

In all such cases, our scenario-based search runtime grows exponentially slower (better) than the exhaustive search which requires at least $O(D(n)\cdot 4^n)$ operations, where $D(n)$ is the number of monotonic Boolean functions of $n$ variables. 
Notice that this improvement was crucial for enabling the forthcoming case studies. 

\textbf{Super-Exponential Growth:}
But in some cases complexity is worse. 
When all credentials can suffer from two or more types of faults with a high probability, the algorithm's runtime grows super-exponentially with the number of credentials.
E.g., if all three fault types have high probabilities, say, $P^\loss=0.1, P^\leak=0.3, P^\theft=0.4$, 
the algorithm's runtime grows too rapidly to obtain sufficient data points for regression analysis. 


\begin{figure}[t]
    \centering
    \includegraphics[width=0.47\textwidth]{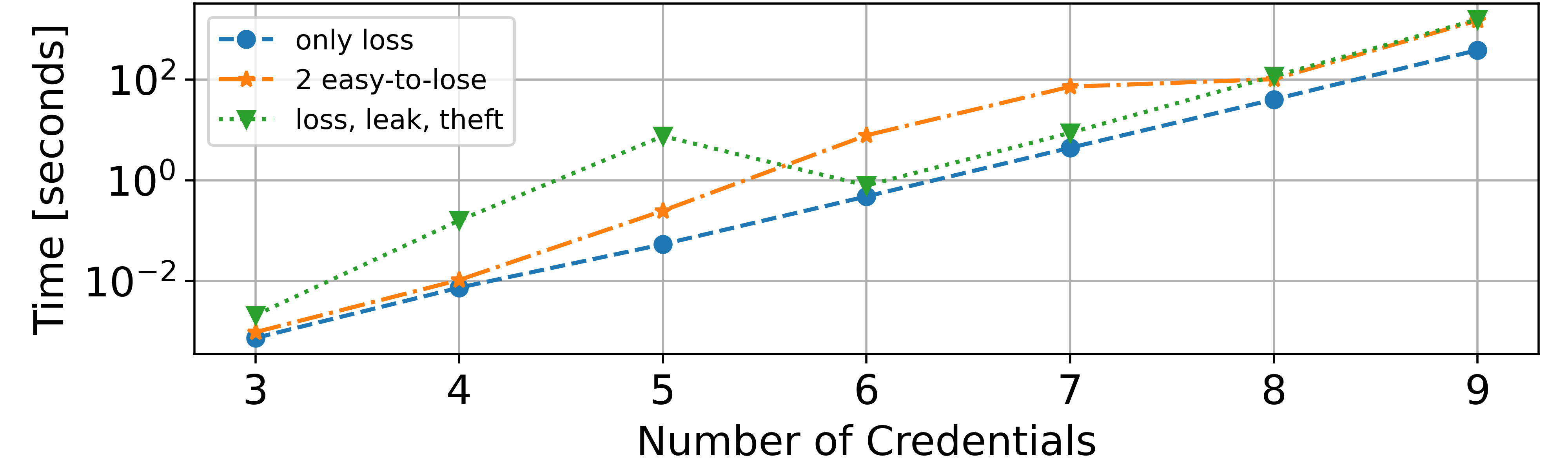}
    \caption{Runtime of the algorithm as a function of the number of credentials for different fault probabilities with $\delta=10^{-5}$.}
    \label{fig:runtime}
\end{figure}

\snegspace

\subsection{Case Studies}
\label{sec:case_studies}
The algorithm allows designers to choose approximately-optimal mechanisms given their credentials' fault probabilities. 
As the actual fault probabilities of credentials are application-specific and not publicly available, a user or a company can use the algorithm to find the best mechanism for their actual case.
The ability to study mechanisms with a larger number of credentials compared to previous work allows taking advantage of elaborate schemes. 
Our focus is on providing a tool for finding approximately-optimal mechanisms rather than on specific mechanisms.
We use concrete values to evaluate the algorithm's efficacy and accuracy~(\S\ref{sec:case_studies:algo_eval}) and propose improvements for cryptocurrency wallets~(\S\ref{sec:case_studies:crypto_wallets}) and for online authentication~(\S\ref{sec:case_studies:sec_questions}).


\snegspace
\subsubsection{Algorithm evaluation:} \label{sec:case_studies:algo_eval}
We demonstrate the efficacy and accuracy of our method by comparing the results to the optimal mechanisms found using exhaustive search, a heuristic genetic algorithm~\cite{walleddesign}, and guessing a symmetric~${k/n}$ mechanism~\cite{walleddesign}. 
As expected, the number of credentials has a major effect on wallet security.
Figure~\ref{fig:results_hetro} shows the security of the optimal mechanism for different numbers of credentials when the first credential can be lost or leaked, but not stolen ($P^\loss = P^\leak = 0.01$), and the rest of the credentials can only be stolen ($P^\theft = 0.01$).
We use a parameter~$\delta$ ranging from~$10^{-5}$ for smaller numbers of credentials to~$10^{-6}$ for larger numbers. 
Our algorithm efficiently finds the same optimal mechanism as the exhaustive search and is able to find an approximately optimal mechanism for up to 12 credentials.
In several cases, it finds better mechanisms than the ones found by guessing a symmetric mechanism, e.g., Figure~\ref{fig:results_hetro} with 6 or 8 credentials.

\snegspace
\subsubsection{Cryptocurrency wallets} \label{sec:case_studies:crypto_wallets}

Typically, cryptocurrency wallets use 
a single private key or mnemonic~\cite{coinbase, Metamask} or multiple such credentials with so-called multisig wallets or using threshold signatures~\cite{sepior, Camino}, e.g., 2 out of 2 or 2 out of 3.
These mechanisms rely on users' ability to securely store their credentials, and thus require low credential fault probabilities.
While increasing the number of strong keys improves security~\cite{walleddesign}, storing strong keys is challenging~\cite{SoK_The_Quest_to_Replace_Passwords, Mental_Models}.
However, it is possibly easier to store credentials with high loss probability, e.g., a hard password committed to memory that the user might forget but is hard to guess.

By exploring the larger design space with many credentials, we find wallets can take advantage of weak credentials to improve security.
For example, in the case of a 2/2 multisig mechanism, 
using two additional weak credentials reduces the failure probability by an order of magnitude from~$2\cdot 10^{-2}$ to~$6\cdot 10^{-3}$.
Figure~\ref{fig:easy_to_lose} shows the failure probability for different numbers of credentials when all credentials can be lost or leaked, but not stolen ($P^\loss = P^\leak = 0.01$), compared to the security when using the same credentials but adding weak credentials that can only be lost with higher probability ($P^\loss = 0.3$).
Here we used the parameter~$\delta = 10^{-6}$.
To the best of our knowledge, there are no public user studies or statistics on credential fault probabilities.
However, we considered different combinations of possible fault probabilities.
The results are qualitatively similar.  
We find that for~$n$ regular credentials and~$k$ easy-to-lose credentials, the optimal mechanism design is to require any of the~$k$ easy-to-lose credentials or any~$\lfloor n/2\rfloor +1$ of the~$n$ regular credentials.

\begin{figure}[t]
    \centering
    \includegraphics[width=0.47\textwidth]{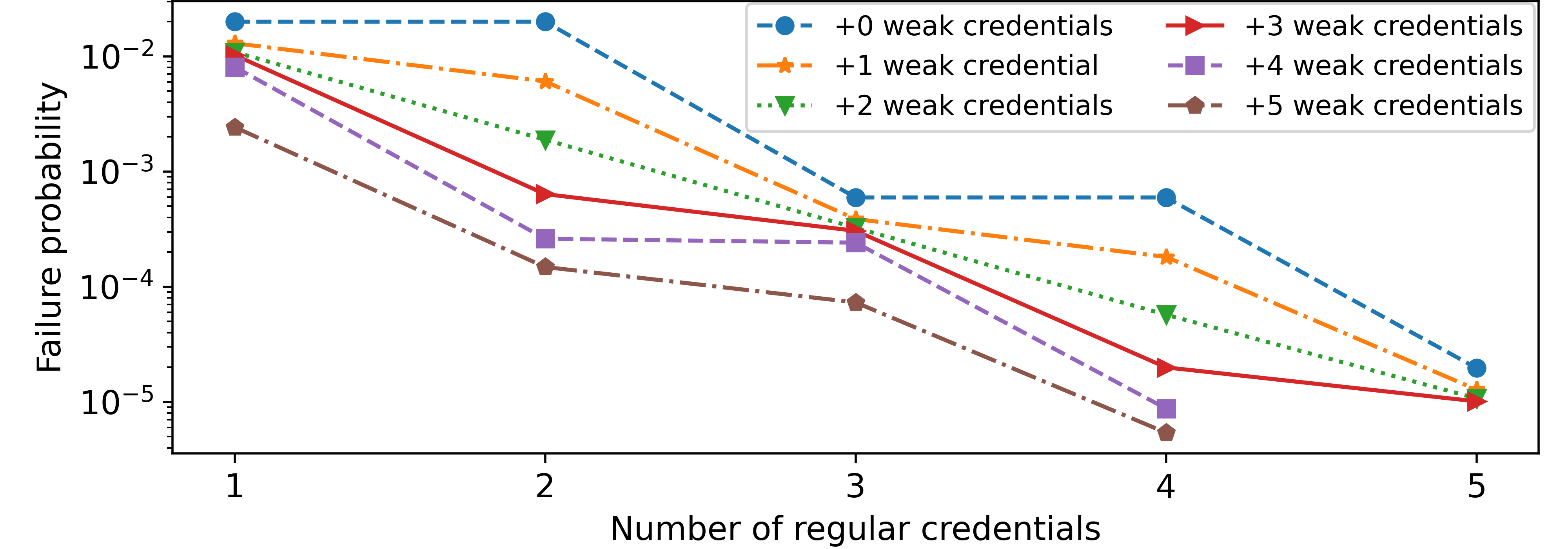}
    \caption{Failure probability harnessing easy-to-lose credentials. Weak:~$P^\leak=0, P^\loss =0.3$, regular:~$P^\leak = P^\loss=0.01$.}
    \label{fig:easy_to_lose}
\end{figure}

\negspace
\snegspace
\subsubsection{Security questions} \label{sec:case_studies:sec_questions}
Until a few years ago, large email providers like Google, Microsoft, and Yahoo!\ employed security questions (like ``name of first pet'') for password reset in online services~\cite{SoK_The_Quest_to_Replace_Passwords}. 
But using security questions in this simple manner famously reduces security~\cite{schechter2009s, rabkin2008personal}. 
The monotonic Boolean function representing such a mechanism is: 
require all answers to the security questions or the password.
Or in general, require all answers to the security questions or any~$k$ out of~$n$ of the regular credentials (e.g., password, fingerprint, or SMS code). 
We call this the classical mechanism.

The main issue is that the answers to the security questions are often easy to guess or to find online, that is, they are easy-to-leak credentials.
This implies that, with a fairly high probability, an attacker can gain access to the account by answering the security questions. 
And, indeed, most online services have stopped using security questions~\cite{bonneau2010password} (with a few exceptions, e.g.,~\cite{leumi, mtbank}). 

Perhaps surprisingly, we find that a well-designed mechanism can actually take advantage of easy-to-leak credentials. 
We used the algorithm to find an approximately optimal mechanism and compared it against the classical mechanism. 
Figure~\ref{fig:easy_to_leak} shows the failure probability of the mechanism for different numbers of credentials when all credentials can be lost or leaked ($P^\loss = P^\leak = 0.01$), compared to the security when using the same credentials but adding weak credentials that can only be leaked with higher probability ($P^\leak = 0.3$).
Just two easy-to-leak keys reduce the failure probability by about an order of magnitude.
Again, we experimented with different combinations of possible fault probabilities and the results were qualitatively similar.  
The resultant mechanism requires answering the security questions in addition to having any~$\lceil n/2 \rceil$ out of the~$n$ regular credentials. 



\begin{figure}[t]
    \centering
    \includegraphics[width=0.47\textwidth]{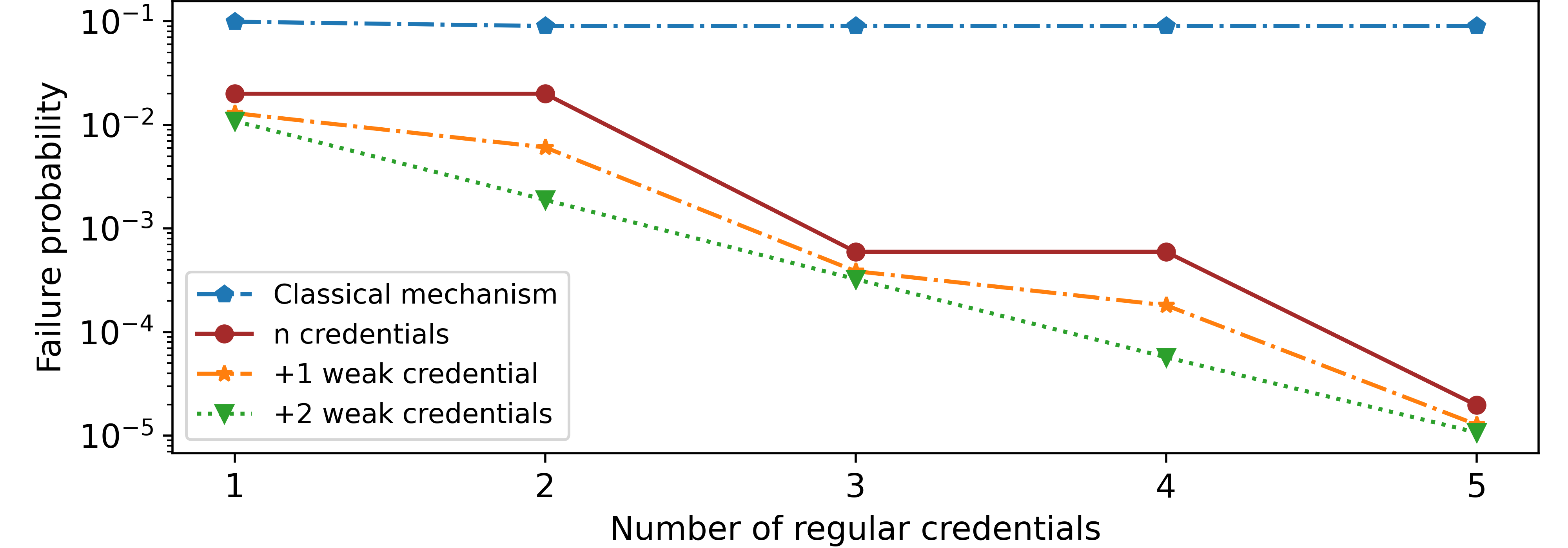}
    \caption{Failure probability harnessing easy-to-leak credentials. Weak:~$P^\leak=0.3, P^\loss =0 $, regular:~$P^\leak = P^\loss=0.01$.}
    \label{fig:easy_to_leak}
\end{figure}


\section{Conclusion}
\label{sec:conclusion}

We formalize the common asynchronous authentication problem, {reconciling the apparent tension between asynchrony and cryptographic security.} 
We show every mechanism is dominated by a non-trivial monotonic Boolean mechanism, which itself is not strictly dominated. 
So in every pair of distinct such mechanisms, neither strictly dominates the other.
We present a practical algorithm for finding approximately optimal mechanisms, given the credential fault probabilities.
This highlights the need for user studies to quantify those probabilities. 


Our algorithm reveals surprising results; Accurately incorporating weak credentials improves mechanisms' security by orders of magnitude. 
One case study shows that a user designing her cryptocurrency wallet can benefit from incorporating with her carefully-guarded credentials a few additional ones that are perhaps easy to lose. 
This is achieved by a mechanism that requires any of the easy-to-lose credentials or any~$\lfloor n/2\rfloor +1 $ out of the~$n$ regular credentials. 
Another case study shows that using security questions (that are easy-to-leak) in addition to regular credentials can significantly improve security if used wisely, by requiring any~$\lceil n/2 \rceil$ out of the~$n$ regular credentials and the answers to the security questions.
More generally, both individuals and companies can use these results directly to rigorously design authentication mechanisms addressing the demands of current and forthcoming challenges.

\section*{Acknowledgments}
This work was supported in part by Avalanche Foundation and by IC3.


\snegspace

\bibliographystyle{ACM-Reference-Format}
\bibliography{main}

\appendix

\section{Security Parameters and Mechanisms Families}
\label{app:full_model}
\subsection{Mechanisms With Security Parameters}
\label{app:mechanisms_with_security_parameters}
\begin{definition}[Authentication mechanism]
    An \emph{authentication mechanism}~$M^\lambda$ is a finite deterministic automaton, parametrized by the security parameter~$\lambda$ and specified by two functions,~$\gen{M^\lambda}{\cdot}$ and~$\step{M^\lambda}{\cdot}$: 
    \label{app:Authentication mechanism}
    \begin{itemize}
        \item $\gen{M^\lambda}{n}$, where~$n$ is the number of credentials (Note that the security parameter is given implicitly with~$M^\lambda$), is a secure credential generator.  And
        \item $\step{M^\lambda}{\msg, i}$, where~$\msg$ is the message the mechanism received (perhaps no message, represented by~${\msg=\bot}$), sent from player~$i$, is a function that updates the state of the mechanism and returns a pair~$(\msg s_0,\msg s_1)$ of sets of messages (maybe empty) to send to the players by their identifiers.
        It can access and update the mechanism's state and the set of credentials public part~$\{c_i^P\}_{i=1}^n$.
    \end{itemize}
\end{definition}

A \emph{player} is defined by its \emph{strategy}, a function that defines its behavior. Formally,
\begin{definition}[Strategy]
    \label{app:strategy}
     A \emph{strategy}~$S^\lambda$ of a player~$p$ is a function, parametrized by the security parameter~$\lambda$,
    that takes a message from the mechanism as input
    (which may be empty, denoted by~$\bot$), and has access to the player's state and credentials~$C_p^\sigma$. 
    It updates the player's state and returns a set of messages to be sent back to the mechanism.
\end{definition}


\subsection{Execution}
\label{app:execution}
We now describe the execution of the system. 
It consists of two parts: \emph{setup} (\setup function, Algorithm~\ref{alg:setup}) and a \emph{main loop} (\mloop function, Algorithm~\ref{alg:main_loop}).
The setup generates a set of~$n$ credentials~$\{c_i=(c_i^P,c_i^S)\}_{i=1}^n$ using the function{~$\gen{M^\lambda}{n}$} (line~\ref{line:generate_creds}).
The credential generation function{~$\gen{M^\lambda}{n}$} draws randomness from the random tape~$v$ and returns a set of~$n$ credentials.
Then the setup function assigns each player credentials according to the scenario~$\sigma$ (lines~\ref{line:assign_user_creds}-\ref{line:assign_attacker_creds}), and the mechanism receives all the credentials' public parts (line~\ref{line:assign_mechanism_creds}).
Then it assigns an identifier to each player:~$\gamma_\textit{ID} \in \{0,1\}$ to the user~(line~\ref{line:player_strategies}) and~$(1 - \gamma_\textit{ID})$ to the attacker~(line~\ref{line:attacker_strategy}).
Whenever a random coin is used by~$M^\lambda$ or the players, the result is the next bit of~$v$. 

Once the setup is complete, the main loop begins~(line~\ref{line:exec:main_loop}).
From this point onwards in the execution, both players~$p\in\{U,A\}$ can send messages to the mechanism, each based on her strategy~$S_p^\lambda$ and the set of credentials available to her~$C_p^\sigma$. 

Time progresses in discrete steps~$t$, accessible to the mechanism and players.
The communication network is reliable but asynchronous.
Messages arrive eventually, but there is no bound on the time after they were sent and no constraints on the arrival order. 
This is implemented as follows.
The scheduler maintains three sets of pending messages, one for each entity identifier~$e \in\{0,1,M^\lambda\}$, denoted by~$\Msgs_e$.
The scheduler is parametrized by an ordering function~$\ord$.

In each step, it uses~$\ord$ to choose a subset of messages (maybe empty) from each of the pending message sets, and returns them as an ordered list (line \ref{line:ord}).
The scheduler removes the chosen messages from the pending messages sets (line~\ref{line:update_msg_sets}).
The function~$\ord$ gets as input the identifier of the entity~$e \in\{0,1,M^\lambda\}$ that the messages are sent to.
It also has access to the state of the execution and a separate scheduler random tape~$v_\gamma$.
However,~$\ord$ does not have access to the messages themselves, the security parameter nor to the entities' random tape~$v$.
We use an ordering function that's oblivious to the messages content and the security parameter to model a system in which, for a sufficiently large security parameter, messages eventually arrive during the execution.
The ordering function's lack of access to the messages and the security parameter ensures that it cannot delay messages until after the execution ends for all~$\lambda$.

Using two separate random tapes, one for the scheduler and another for all other entities in the system, is only for  presentation purpose. 
This is equivalent to using a single global random tape.
Because there exists a mapping between the two, e.g., given a global random tape, we map it into two separate tapes such that the elements in the odd indices are mapped to the scheduler tape and the elements in the even indices are mapped to the other entities' tape.

Each player~$p\in\{U,A\}$ receives the messages chosen by the ordering function~$\ord$ and sends messages to the mechanism according to her strategy~$S_p^\lambda$ (lines~\ref{line:player_msg}-\ref{line:player_msg_end}).
Messages sent by the players are added to the mechanism's pending messages set~$\Msgs_{M}$.
Similarly, the mechanism receives the messages chosen by~$\ord$. 
After every message~$\msg$ it receives, the mechanism updates its state using its function~$\step{M^\lambda}{\msg}$ and returns messages to add to the players' pending messages sets~$\Msgs_0, \Msgs_1$ (line~\ref{line:mechanism_step}).
It then checks if the mechanism has decided which one of the players is recognized as the user by checking its variable~$\textit{decide}_{M^\lambda}$ (lines~\ref{line:decide}-\ref{line:decide_end}). 
Once the mechanism reaches a decision, the execution ends and the player with the matching index (either~$0$ or~$1$) wins.
The tuple~$(\id,\ord, v_\gamma)$ thus defines the \emph{scheduler}'s behavior. 
And an execution~$\exec^\lambda$~(Algorithm~\ref{alg:execution}) is thus defined by its parameter tuple~$(M^\lambda, \sigma, S_U^\lambda, S_A^\lambda, \gamma, v)$; by slight abuse of notation we write~${\exec^\lambda = (M^\lambda, \sigma, S_U^\lambda, S_A^\lambda, \gamma, v)}$.

\subsection{Denial of Service attacks}
\label{app:denial_of_service}
We assume that the attacker cannot affect the delivery time of user messages.
In particular, she cannot perform a denial of service attack by sending messages in a way that causes~$\ord$ to always delay the user's messages.
This is implemented by using two separate ordering functions~$\ord^0, \ord^1$ for the player with identifier~$0$ and the player with identifier~$1$ respectively.

\RestyleAlgo{boxruled}
\begin{algorithm}[t]
        \SetAlgoNoLine 
        \SetAlgoNoEnd 
        \DontPrintSemicolon 
        
    \SetKwInOut{Input}{input}\SetKwInOut{Output}{output}

    

    \nonl \small \emph{$\setup(M^\lambda,\sigma,S_U^\lambda,S_A^\lambda,\gamma,v)$} \scriptsize \;

    $C \gets \gen{M^\lambda}{n}$ \label{line:generate_creds} \tcp*{Generating the set of credentials~$C=\{c_i=(c_i^P,c_i^S)\}_{i=1}^n$}
    $C_U^\sigma \gets \{c_i^S| {\sigma_U}_i =1\}$ \label{line:assign_user_creds} \tcp*{Assigning the credentials' secret parts to the user}
    $C_A^\sigma \gets \{c_i^S| {\sigma_A}_i =1\}$ \label{line:assign_attacker_creds} \tcp*{Assigning the credentials' secret parts to the attacker}
    $C_M^\sigma \gets \{c_i^P| i \in \{1,...,n\}\}$ \label{line:assign_mechanism_creds} \tcp*{Assigning the credentials' public parts to the mechanism} 
    
    $S_{\id}, C_{\id} \gets S_U^\lambda, C_U^\sigma$          \label{line:player_strategies} \tcp*{Assugning~$\id$ to the user}
    $S_{1-\id}, C_{1-\id} \gets S_A^\lambda, C_A^\sigma$          \label{line:attacker_strategy} \tcp*{Assugning~$1-\id$ to the attacker}

    return $M^\lambda, S_0, C_0,  S_1, C_1, v'$

    \caption{Setup of the system, with mechanism~$M^\lambda$ in scenario~$\sigma$}
\label{alg:setup}
\end{algorithm}


\RestyleAlgo{boxruled}
\begin{algorithm}[t]
        \SetAlgoNoLine 
        \SetAlgoNoEnd 
        \DontPrintSemicolon 
        
    \SetKwInOut{Input}{input}\SetKwInOut{Output}{output}

    \nonl \small \emph{$\mloop(M^\lambda,S_0,C_0, S_1, C_1, \ord, v_\gamma, v_\gamma,v,t)$} \scriptsize \;

    $\textit{decide}_{M^\lambda} \gets \bot$\    \label{line:decide_init} \tcp*{Initialize~$M^\lambda$'s decision variable}

    \For{$e \in \{0,1,M^\lambda\}$}{
        $\Msgs_e \gets \emptyset$\        \label{line:init_msg_sets} \tcp*{Initialize the message sets}
    }

    \For{$t=0,1,2,...,\min(t,T(\lambda))$}{
        \For(\tcp*[f]{The scheduler chooses which messages to deliver}){$e \in \{0,1,M^\lambda\}$}{
            $l_e \gets  \ord(e) + [\bot]$\;                    \label{line:ord}         
            $\Msgs_e \gets \Msgs_e \setminus l_e$\;
        }                                                   \label{line:update_msg_sets}

        $\Msgs_{M^\lambda}' \gets \emptyset$\;

        \For{$\msg$ in~$l_0$}{                                 \label{line:player_msg} 
            $\Msgs_{M^\lambda}' \gets \Msgs_{M^\lambda}' \cup S_0(\msg)$                                             \tcp*{Player 0 sends messages to the mechanism}
        }
        \For{$\msg$ in~$l_1$}{                                                         
            $\Msgs_{M^\lambda}' \gets \Msgs_{M^\lambda}' \cup S_1(\msg)$                                                  \tcp*{Player 1 sends messages to the mechanism}
        }                                                   \label{line:player_msg_end}

        \For{$\msg$ in~$l_M$}{                                 \label{line:mechanism_msg}  
            $(\Msgs_0' ,\Msgs_1') \gets \step{M^\lambda}{\msg}$        \label{line:mechanism_step} \tcp*{The mechanism sends messages to the players}
            $\Msgs_0 \gets \Msgs_0 \cup \Msgs_0'$                                                                \tcp*{Player 0's messages are added to the message set}
            $\Msgs_1 \gets \Msgs_1 \cup \Msgs_1'$         \label{line:mechanism_msg_end}                         \tcp*{Player 1's messages are added to the message set}

            \If(\tcp*[f]{Check if the mechanism reached a decision}){$\textit{decide}_{M^\lambda} \neq \bot$}{          \label{line:decide}                          
                $\textit{Return } \textit{decide}_{M^\lambda}$ \label{line:decide_end}                                                     \tcp*{The chosen player wins}
            }
            }                                               
        $\Msgs_{M^\lambda} \gets \Msgs_{M^\lambda} \cup \Msgs_{M^\lambda}'$             \tcp*{The mechanism's messages are added to the message set}

    }
    \caption{Execution main loop of~$M^\lambda$}
\label{alg:main_loop}
\end{algorithm}


\RestyleAlgo{boxruled}
\begin{algorithm}[t]
        \SetAlgoNoLine 
        \SetAlgoNoEnd 
        \DontPrintSemicolon 

    \BlankLine
    $M^\lambda , S_0, C_0, S_1, C_1, v \gets \setup(M^\lambda,\sigma,S_U^\lambda,S_A^\lambda,\gamma, v)$\;
    $\mloop(M^\lambda,S_0, C_0, S_1, C_1,\ord,v_\gamma,v)$\;              \label{line:exec:main_loop}

    \caption{Execution of~$M^\lambda$ in scenario~$\sigma$}
\label{alg:execution}
\end{algorithm}


\subsection{Mechanism Success}
\label{app:model:mechanism_success}

In an asynchronous environment we cannot guarantee that the mechanism correctly identifies the user in any scenario within any bounded time due to the simple fact that the scheduler can delay messages indefinitely. 
However, to maintain cryptographic security, execution time must be bounded---in a longer execution the attacker's probability of guessing credentials is not negligible. 
Previous work~(e.g., \cite{gorilla, safe_permissionless_consensus, Monoxide}) sidesteps this issue with protocols that use an ideal signature scheme that cannot be broken indefinitely. 
We use this abstraction in the rest of the paper.
However, in our case this abstraction is not fully satisfactory,
as we make claims on general mechanisms using standard communication channels, so for any bounded message length we practically cannot rule out forgery. 

Conceptually, we define success as follows.
If the mechanism times out in a given execution, we would like to extend the execution and allow it to succeed in the extended execution. 
To extend the execution without violating cryptographic security, we must increase the security parameter, effectively using a different mechanism. 
Intuitively, we are interested in \emph{mechanism families} that behave similarly for all security parameters, e.g., verify a signature and send a message independent of the signature details. 
The requirement is that given such a scheduler, for all sufficiently large security parameters, mechanisms in the family with sufficiently large parameter succeed. 

Rather than mechanisms, success is therefore defined for mechanism families
that behaves similarly for different security parameters. 
A \emph{mechanism family}~${M}$ is a function that maps a security parameter~$\lambda$ to a mechanism~$M^\lambda$.
Similarly, a \emph{strategy family}~$S$ is a function that maps a security parameter~$\lambda$ to a strategy~$S^\lambda$.
A mechanism family is \emph{successful} in a scenario~$\sigma$ if, for a large enough security parameter, the user wins against all attacker strategies and schedulers. 
Formally, 
\begin{definition}[Mechanism success]
    \label{app:success_family}
    A mechanism family~${M}$ is \emph{successful} in a scenario~$\sigma$, if there exists a user strategy family~$S_U$ 
    such that for all attacker strategy families~$S_A$, schedulers~$\gamma$, and random tapes~$v$, there exists a security parameter~$\lambda_0$ such that:
    \begin{enumerate}
        \item for all security parameters~$\lambda \geq \lambda_0$, the user wins the execution~$\exec^\lambda = (M^\lambda, \sigma, S_U^{\lambda},S_A^{\lambda},\gamma, v)$. And, 
        \item for all security parameters~$\lambda$, either the user wins or the execution times out. 
    \end{enumerate}
    Such a user strategy family~$S_U$ is a \emph{winning user strategy family} in~$\sigma$ with~$M$.
    Otherwise, the mechanism family \emph{fails}. 
\end{definition} 

    

\section{One-Shot Mechanisms Dominance - Full Proof}
\label{app:one_shot_dominance_full_proof}
Given any mechanism family we construct a dominating one-shot mechanism family by simulating the original mechanism's execution. 
\begin{construction} \label{app:con:M_os}
For all~$\lambda \in \mathbb{N}^+$, we define~$M_\os^\lambda$ by specifying the functions~$\gen{M_\os^\lambda}{\cdot}$ and $\step{M_\os^\lambda}{\cdot}$.
The credentials' generation function~$\gen{M_\os^\lambda}{\cdot}$ is the same as~$\gen{M^\lambda}{\cdot}$.
The mechanism's \textit{step} function is described in Algorithm~\ref{alg:M_{OS}}. 

A behavior of~$M_\os^\lambda$ is as follows: 
If it does not receive a message during its execution, it times out.
If it receives multiple messages from a player, it ignores all but the first one~(lines~\ref{line:message_already_recieved}-\ref{line:message_already_recieved_end}).
This is implemented using the variables~$\processing_0$ and~$\processing_1$ that indicate whether the mechanism has received a message from player~$0$ and~$1$, respectively, both initialized to~$0$.
If~$M_\os^\lambda$ receives a message that is not a valid strategy and credentials pair, then it decides the identifier of the other player~(line~\ref{line:invalid_S}).

Consider the first message it receives.
If it is a valid strategy and credentials pair, then~$M_\os^\lambda$ simulates an execution of~$M^\lambda$~(line~\ref{line:simulate_M}) with the given strategy and credentials while setting both the opponent's strategy and credentials to~$\bot$ each~(lines~\ref{line:set_strategy}-\ref{line:set_strategy_end}).
It uses a scheduler random tape~$v_\gamma$ and an execution random tape~$v'$ drawn from~$v$, and an ordering function~$\ord^{v_\gamma}$ that chooses the time and ordering of message delivery randomly based on~$v_\gamma$.
If~$M^\lambda$'s execution decides, then~$M_\os^\lambda$ decides the same value.
Otherwise,~$M_\os^\lambda$ waits for the next message.

If a message arrives from the other player, then similar to the previous case~$M_\os^\lambda$ simulates an execution of~$M^\lambda$ with the given strategy and credentials while setting the opponent's to~$\bot$.
And again, if~$M^\lambda$'s simulated execution decides, then~$M_\os^\lambda$ decides the same value.
Otherwise,~$M_\os^\lambda$ waits until~$T(\lambda)$ execution steps pass and then times out.
So it can decide after receiving the first message from either player or time out.
\end{construction}

The mechanism family~$M_\os(M)$ is one-shot as each of its mechanisms decides based only on the first message it receives from each player.
To prove it dominates~$M$, we~first show that the attacker's message does not lead to her wining.

\begin{lemma}
    \label{app:lem:M_os_attacker_cant_win}
    Let~$\sigma$ be a scenario and let~$M$ be a mechanism family successful in~$\sigma$.
    Then, for all~$\lambda$ and all executions of~$M_\os^\lambda(M^\lambda)$ in scenario~$\sigma$ in which the function~$\step{M_\os^\lambda(M^\lambda)}{\cdot}$ receives a message from the attacker for the first time, either the function sets~$\textit{decide}_{M_\os^\lambda}$ to the user's identifier or the simulated execution of~$M^\lambda$ times out.
\end{lemma}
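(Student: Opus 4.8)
The plan is to mirror the proof of Lemma~\ref{lem:M_os_attacker_cant_win}, replacing ``does not decide'' by ``times out'' and, crucially, invoking the \emph{second} clause of the mechanism-family success definition (Definition~\ref{app:success_family}), which---unlike the first clause---holds for \emph{every} security parameter rather than only for~$\lambda \ge \lambda_0$. Fix any~$\lambda$ and consider an execution of~$M_\os^\lambda$ in~$\sigma$ in which~$\step{M_\os^\lambda}{\cdot}$ processes an attacker message for the first time, at some step~$t$, with attacker identifier~$i$. If the message is not a valid encoding of a strategy--credentials pair, then~$M_\os^\lambda$ decides~$1-i$, the user's identifier, and we are done. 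Otherwise it encodes a pair~$(S_A,C_A)$, and~$M_\os^\lambda$ simulates~$M^\lambda$ with~$S_i=S_A$, $C_i=C_A$, the opponent set to~$\bot$, under the scheduler~$\gamma=(\id^\os,\ord^{v_\gamma},v_\gamma)$. If this simulation decides the user or times out, we are again done; the only case to rule out is that it decides~$i$ (the attacker).

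First I would realize the simulation as a genuine execution of~$M^\lambda$: choosing a random tape~$v_M$ whose main-loop bits coincide with the simulation tape~$v'$, the execution~$\exec=(M^\lambda,\sigma,\bot,S_A,\gamma,v_M)$ is identical to the simulated run, so under the contradiction hypothesis~$M^\lambda$ decides the attacker in~$\exec$ at some step~$\tau\le T(\lambda)$. Since~$M$ is a successful family in~$\sigma$, let~$S_U$ be a winning user strategy family, and extend the single-parameter attacker strategy~$S_A$ to a strategy family agreeing with it at~$\lambda$ (arbitrary elsewhere) so that the definition is applicable. Using asynchrony together with the separate ordering functions of Appendix~\ref{app:denial_of_service}, I would build a scheduler~$\gamma'$ that behaves like~$\gamma$ toward the attacker but delays all user messages past step~$\tau$, and set~$\exec'=(M^\lambda,\sigma,S_U^\lambda,S_A^\lambda,\gamma',v_M)$.

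The heart of the argument is the indistinguishability step: up to step~$\tau$, the mechanism~$M^\lambda$ receives exactly the same messages in~$\exec$ (empty user) and in~$\exec'$ (winning but delayed user), because no user message is delivered before~$\tau$ in either run while the attacker's messages and the random bits agree. Hence~$M^\lambda$ reaches the same state and decides the attacker at~$\tau$ in~$\exec'$ as well. But then in~$\exec'$ the user neither wins nor does the execution time out (the decision occurs at~$\tau\le T(\lambda)$), contradicting clause~(2) of Definition~\ref{app:success_family} applied to~$S_U$, the family~$S_A$, scheduler~$\gamma'$, and tape~$v_M$. This eliminates the attacker-deciding case, leaving only a user decision or a timeout, as required.

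I expect the main obstacle to be the bookkeeping around security parameters: one must resist using clause~(1) of the success definition, which only guarantees a user win for~$\lambda\ge\lambda_0$ and is therefore useless at a fixed small~$\lambda$, and instead notice that clause~(2)---holding for \emph{all}~$\lambda$---is exactly what forbids a premature attacker decision. A secondary subtlety is correctly lifting the extracted single-$\lambda$ attacker strategy to a strategy family so that the family-level success definition can be invoked, and verifying that the decision step~$\tau$ genuinely precedes the timeout~$T(\lambda)$ so that ``does not time out'' is justified.
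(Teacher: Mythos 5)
Your proposal is correct and follows essentially the same route as the paper's proof: realizing the simulation as a genuine execution of~$M^\lambda$ via a matching tape~$v_M$, delaying the user's messages past the decision step~$\tau$ with a scheduler~$\gamma'$, and using indistinguishability of the two execution prefixes to derive a contradiction with the winning user strategy family. Your explicit appeal to clause~(2) of the success definition (which holds for \emph{all}~$\lambda$) and your remark about lifting the extracted attacker strategy to a family merely make precise what the paper leaves implicit.
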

\begin{proof}
    Let~$\sigma$ be a scenario and let~$M$ be a mechanism family successful in~$\sigma$.
    Let~$\lambda\in \mathbb{N}^+$ and consider an execution of~$M_\os^\lambda$ in scenario~$\sigma$ in which the function~$\step{M_\os^\lambda}{\cdot}$ receives an attacker's message for the first time.
    Denote the identifier of the attacker by~$i\in \{0,1\}$.
    If the attacker's message is not a valid encoding of a strategy and credentials set, then~$M_\os^\lambda$ decides the identifier of the user, and we are done because it sets~$\textit{decide}_{M_\os^\lambda}$ to the user's identifier.
    
    Otherwise, the attacker's message is an encoding of a valid strategy and credentials pair~$(S_A^\lambda,~C_A^\lambda)$.
    In this case,~$M_\os^\lambda$ sets the strategies and credentials to~$S_i=S_A^\lambda$,~$C_i=C_A^\lambda$,~$S_{1-i}= \bot$, and~$C_{1-i}=\bot$,~$\ord^{v_\gamma}$,~$v_\gamma$ and~$v'$ as in the definition of~$M_\os^\lambda$, 
    and simulates~$M^\lambda$'s execution by running~$\mloop(M^\lambda, S_0, C_0, S_1, C_1, \ord^{v_\gamma}, v_\gamma, v')$. 
    If the simulation times-out or returns the identifier of the user we are done.
    The only remaining option is that the simulation returns the identifier of the attacker.
    To show this is impossible, assume by contradiction that the attacker wins in this simulated execution.
    
    First, we show that there exists an execution of~$M^\lambda$ that is identical to the simulated one. 
    Let~${\gamma=(\id^\os,\ord^{v_\gamma}, v_\gamma)}$ be a scheduler such that the user gets the same identifier as in the execution of~$M_\os^\lambda$, with the same ordering function and scheduler random tape that~$M_\os^\lambda$ uses to simulate~$M^\lambda$'s main loop.
    And let~$v_M$ be a random tape such that when the execution~$\exec^\lambda=(M^\lambda,\sigma,\bot,S_A^\lambda,\gamma, v_M)$ reaches the main loop, all the next bits of $v_M$ are equal to~$v'$.
    Note that the main loop of~$\exec^\lambda$ is the same as the one~$M_\os^\lambda$ simulates in~$\exec_\os^\lambda$.
    And the attacker wins in the execution~$\exec^\lambda$ (by the contradiction assumption).

    We thus established a simulated execution in which the attacker wins, and in this simulation there exists a time~$\tau<T(\lambda)$ when the simulated~$M^\lambda$ decides the identifier of the attacker.  
    Since~$M$ is successful in scenario~$\sigma$, there exists a winning user strategy family~$S_U$.
    Let~$\gamma'$ be a scheduler such that in the execution~$\exec'^\lambda = (M^\lambda,\sigma, S_U^\lambda, S_A^\lambda, \gamma',v_M)$ it behaves like~$\gamma$ except~$M^\lambda$ receives the user's messages not before~$\tau$ and before~$T(\lambda$).
    Such a scheduler exists since the communication is asynchronous and message delivery time is unbounded.
    
    At any time step before~$\tau$, the mechanism~$M^\lambda$ sees the same execution prefix whether it is in the execution~$E^\lambda$ with an empty user strategy or in the execution~$E'^\lambda$ with a winning user strategy.
    Thus, it cannot distinguish between the case where it is in~$\exec^\lambda$ or~$\exec'^\lambda$ at~$\tau$. 
    Since in~$\exec^\lambda$ the mechanism~$M^\lambda$ decides at~$\tau$,~$M^\lambda$ must decide the same value at~$\tau$ in~$\exec'^\lambda$.
    That is, the attacker wins also in the execution~$\exec'^\lambda$, contradicting the fact that~$S_U$ is a winning user strategy for~$M$ in~$\sigma$.  
    Thus, the attacker cannot win in the simulated execution of~$M^\lambda$.
\end{proof}

Now we can prove domination.
\begin{proposition}
    \label{app:thm:oneShot}
    For all profiles~$\prof$, an authentication mechanism family that solves the~$\prof$~asynchronous authentication problem is dominated by a one-shot mechanism family. 
\end{proposition}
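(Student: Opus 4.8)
The plan is to lift the proof of Proposition~\ref{thm:oneShot} to the mechanism-family setting, reusing Construction~\ref{app:con:M_os} and Lemma~\ref{app:lem:M_os_attacker_cant_win} and discharging the two clauses of family success (Definition~\ref{app:success_family}) separately. Assume the mechanism family~$M$ solves the~$\prof$ problem, so it is successful in every scenario~$\sigma\in\prof$; fix such a~$\sigma$ and let~$S_U$ be a winning user strategy family. First I would define the witness strategy family~$S_U^\os$ for~$M_\os$: for each~$\lambda$, the strategy~$S_U^{\os,\lambda}$ sends, in its first step, a single message encoding~$S_U^\lambda$ together with a credential set~$C_U\subseteq C_U^\sigma$ that~$S_U^\lambda$ uses. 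The goal is to show~$S_U^\os$ is a winning user strategy family for~$M_\os$ in~$\sigma$.

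Then, fixing an arbitrary attacker strategy family~$S_A^\os$, scheduler~$\gamma$, and random tape~$v$, I would run the same case analysis as in the main proof, according to whether the user's or the attacker's message is processed first by~$M_\os^\lambda$. When the user's message is processed, $M_\os^\lambda$ simulates the execution~$(M^\lambda,\sigma,S_U^\lambda,\bot,\gamma',v')$ with the internal scheduler~$\gamma'=(\id^\os,\ord^{v_\gamma},v_\gamma)$ and tapes deterministically drawn from~$v$ (hence independent of~$\lambda$); since~$S_U$ is winning against the empty attacker, family success of~$M$ applied to this one concrete simulated execution yields a threshold~$\lambda_0$ above which the simulated~$M^\lambda$ decides the user within its~$T(\lambda)$-step budget, so~$M_\os^\lambda$ decides the user. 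When the attacker's message is processed first, Lemma~\ref{app:lem:M_os_attacker_cant_win} guarantees that~$M_\os^\lambda$ either decides the user or its attacker-simulation times out; in the latter case~$M_\os^\lambda$ waits for the user's message, which eventually arrives and reduces to the previous case.

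To assemble clause~(1) of Definition~\ref{app:success_family}, I would take~$\lambda_0'$ to be the maximum of the finitely many thresholds arising above---the threshold for the user-simulation to decide, together with the threshold from the ordering-function design that guarantees the user's outer message is delivered before~$T(\lambda)$---so that for all~$\lambda\ge\lambda_0'$ the user wins the~$M_\os^\lambda$ execution. Clause~(2), that for every~$\lambda$ the user wins or~$M_\os^\lambda$ times out, follows from the observation that~$M_\os^\lambda$ can never decide the attacker: the user-simulation decides only the user or times out (clause~(2) of~$M$'s success with the empty attacker), and the attacker-simulation decides only the user or times out (Lemma~\ref{app:lem:M_os_attacker_cant_win}). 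Hence if~$M_\os^\lambda$ decides at all it decides the user, and otherwise it exhausts its step budget and times out.

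I expect the main obstacle to be the careful reconciliation of unbounded asynchronous delivery with the bounded~$T(\lambda)$-length executions across the parameter family, rather than any new combinatorial idea. Concretely, I must verify that the internally simulated run is a \emph{bona fide} execution of~$M$ to which~$M$'s family success applies (matching identifier assignment, ordering function~$\ord^{v_\gamma}$, and tape decomposition into~$v_\gamma$ and~$v'$), and that a single uniform threshold~$\lambda_0'$ simultaneously forces the user's outer message to be delivered within~$T(\lambda)$ \emph{and} the simulated winning decision of~$M^\lambda$ to occur before timeout. The security-parameter-oblivious ordering function is the lever that makes both hold for all sufficiently large~$\lambda$; pinning down this quantifier structure is the crux.
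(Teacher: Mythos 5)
Your proposal is correct and follows essentially the same route as the paper's own proof: the same construction~$M_\os$, the same reliance on Lemma~\ref{app:lem:M_os_attacker_cant_win} for the attacker-first case, the same witness strategy family that packages~$S_U^\lambda$ and its credentials into one first-step message, and the same assembly of the final threshold as the maximum of the simulation-decision threshold~$\lambda_\suc$ and the message-delivery threshold~$\lambda_0^\os$, with clause~(2) discharged by observing that neither simulation can ever decide the attacker. Your explicit separation of the two clauses of Definition~\ref{app:success_family} is a slightly cleaner presentation of the same argument the paper interleaves.
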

\begin{proof} 
    Assume that~$M$ is successful in a scenario~$\sigma$.
    Then, there exists a user strategy family~$S_U$ such that for every attacker strategy family~$S_A$, scheduler~$\gamma$, and random tape~$\widetilde{v}$, there exists a security parameter~$\lambda_\suc$, such that for all~$\lambda>\lambda_\suc$, the user wins the corresponding execution~$(M^\lambda, \sigma, S_U^\lambda, S_A^\lambda, \gamma, \widetilde{v})$.
    And for all~$\lambda>0$, either the user wins or the execution times out.

    We now show that~$M_\os$ is successful in scenario~$\sigma$ as well.
    Denote by~$S_U^\os$ the user strategy family such that for all~$\lambda>0$,~$S_U^{\os,\lambda}$ sends an encoding of the strategy~$S_U^\lambda$ from the winning user strategy family~$S_U$ and a set of credentials~$C_U\subseteq C_U^\sigma$ that~$S_U^\lambda$ uses in a single message on the first step.
    And consider any execution of~$M_\os^\lambda$ in scenario~$\sigma$ with the user strategy~$S_U^{\os,\lambda}$.
    Note: In the~$\setup$ function of an execution, the first player in the tuple is always the user and the second is the attacker. 
    However, in the~$\mloop$ function, the first player is the player with identifier~$0$, which can be the user or the attacker, and the second is the player with identifier~$1$.
    
    As for all~$\lambda>0$,~$S_U^{\os,\lambda}$ sends a message in the first step, there exists a security parameter~$\lambda_0^\os$ such that for all~$\lambda^\os>\lambda_0^\os$, the user's message arrives before~$T(\lambda)$. 
    It might be the case that an attacker's message arrives first.
    If the mechanism does not receive any messages, it times out.
    This is possible only if~$\lambda<\lambda_0^\os$, as the user strategy we chose does send a message.
    Otherwise,~$M_\os^\lambda$ eventually receives and processes at least one message during its execution, then we consider two cases separately: a user's message arrives first, or an attacker's message arrives first.

    Denote by~$i\in \{0,1\}$ the identifier of the player~whose message arrives first.
    If the user's message~$\msg$ arrives first with the encoded strategy and credential set~${(S_U^\lambda,~C_U) = \textit{extractStrategy}(\msg)}$,~then~$M_\os^\lambda$~sets the strategies to~$S_i=S_U^\lambda$,~$C_i=C_U$,~${S_{1-i}=\bot}$,~and~$C_{1-i}=\bot$, the scheduler random tape~$\gamma_v$,~the~ordering~function~$\ord^{v_\gamma}$, and the random tape~$v'$ as described in~$M_\os$'s definition,
    and simulates~$M^\lambda$'s execution by~running $\mloop(M^\lambda, S_0, C_0, S_1, C_1, \ord^{v_\gamma},v_\gamma, v')$. 
    Denote by~$\gamma=(\id^\os,\ord^{v_\gamma}, v_\gamma)$ the scheduler such that the user gets the same identifier as in the execution of~$M_\os^\lambda$, with the same ordering function and scheduler random tape that~$M_\os^\lambda$ uses to simulate~$M^\lambda$'s main loop.
    Because~$S_U$ is a winning user strategy family in~$M$, there exists~$\lambda_\suc$ such that for all~$\lambda>\lambda_\suc$, the user wins the execution~$\exec^\lambda = (M^\lambda, \sigma, S_U^\lambda, \bot, \gamma, v')$.
    If~$\lambda>\lambda_\suc$, the simulation terminates and returns the user's identifier~$i$, so~$M_\os^\lambda$ also returns~$i$. 

    Otherwise, as~$\lambda \leq \lambda_\suc$, and because~$S_U$ is a winning user strategy family for~$M^\lambda$, if the simulation terminates then it must return the identifier of the user.
    And so does the mechanism~$M_\os^\lambda$.
    If the simulation does not terminate, then~$M_\os^\lambda$ waits for the next message from the other player (the attacker).
    If no message arrives before~$T(\lambda)$, then~$M_\os^\lambda$ times out as well (note this is possible only if~$\lambda\leq \lambda_\suc$ because otherwise~$M_\os^\lambda$ would have already returned the user's identifier).
    Otherwise,~$M_\os^\lambda$'s step function receives the attacker's message, and by Lemma~\ref{lem:M_os_attacker_cant_win}, either~$M_\os^\lambda$ decides the user or the simulated execution of~$M^\lambda$ times out.
    If the simulation times out, then~$M_\os^\lambda$ times out as well.
    Overall we showed that if~$\lambda>\lambda_0^\os$, a user's message is received by~$M_\os^\lambda$.
    And if the user's message arrives first, then if~$\lambda>\lambda_\suc$, the user wins, and for all~$\lambda$, either the user wins or the execution times out.

    Now assume the attacker's message arrives first to~$M_\os^\lambda$.
    Again by Lemma~\ref{lem:M_os_attacker_cant_win}, either~$M_\os^\lambda$ decides the user or the simulated execution of~$M^\lambda$ in~$M_\os^\lambda$ times out.
    If~$M_\os^\lambda$ decides the user, then we are done.
    Otherwise, the simulation times out, and~$M_\os^\lambda$ waits for the next message from the other player (the user).
    If no other message arrives before~$T(\lambda)$, then~$M_\os^\lambda$ times out. 
    This is possible only if~$\lambda<\lambda_0^\os$ by definition of~$\lambda_0^\os$.

    Otherwise,~$M_\os^\lambda$ receives the user's message with her encoded strategy and credentials set. 
    Then similar to the previous case,~$M_\os^\lambda$ sets the strategies, credential sets, scheduler random tape, ordering function, and random tape as described in Construction~\ref{con:M_os},
    and simulates~$M^\lambda$'s execution by running~$\mloop(M^\lambda, S_0, C_0, S_1, C_1, \ord^{v_\gamma},v_\gamma, v')$. 
    By the same argument as before, we get that if~$\lambda>\lambda_\suc$, the simulation terminates and returns the user's identifier, so~$M_\os^\lambda$ also returns it. 
    Otherwise,~$\lambda\leq \lambda_\suc$, if the simulation terminates, it must return the identifier of the user.
    And so does the mechanism~$M_\os^\lambda$.
    If the simulation does not terminate, then~$M_\os^\lambda$ times out.

    Overall we showed that if~$\lambda>max(\lambda_\suc, \lambda_0^\os)$, the user wins. 
    And for~$\lambda \leq max(\lambda_\suc, \lambda_0^\os)$, either the user wins or the execution times out.
    Therefore,~$M_\os$ is successful in scenario~$\sigma$.
    We thus conclude that the one shot mechanism family~$M_\os$ dominates $M$.
\end{proof}

\section{Deterministic Mechanisms are Dominated by Boolean Mechanisms}
\label{app:deterToBoolCreds}
\begin{restate}{Lemma~\ref{lem:deterToBoolCreds}}
    \lemmaDeterToBool
\end{restate}

\begin{proof}
    \sloppy
    Let~$M_\deter$ be a deterministic mechanism and let~$f$ be as defined in Construction~\ref{con:M_f}.
    We prove that if~$M_{\deter}$ is successful in a scenario, then~$M_{f}$ is successful as well.
    Assume~$M_\deter$ is successful in scenario~$\sigma=(\sigma_U,\sigma_A)$, by definition of~$f$, we get that~$f(\sigma_U)=1$.

    As~$M_\deter^\lambda$ is successful in~$\sigma$, there exists a winning user strategy family~$S_U^\deter$ for~$M_\deter$.
    And because~$M_\deter$ is a credential-based mechanism, for each~$\lambda$,~$S_U^{\deter,\lambda}$ sends a set of credentials~$c_U^\lambda$ in a single message. 
    Let~$\lambda \in \mathbb{N}^+$ be any security parameter, let~$S_U^{f, \lambda}$ be the user strategy for~$M_f^\lambda$ that sends the same subset of the user's credentials as~$S_U^{\deter,\lambda}$ in a single message in the first step.
    Let~$S_A^f$ be an attacker strategy family for~$M_f$,~$\gamma_f$ a scheduler, and~$v$ a random tape.
    
    Consider the execution~$\exec_f^\lambda = (M_f^\lambda, \sigma, S_U^{f,\lambda}, S_A^{f,\lambda}, \gamma_f, v)$.
    Because~$S_U^{f,\lambda}$ sends a message in the first step, there exists a security parameter~$\lambda_0^f$ such that for all~$\lambda>\lambda_0^f$, the user's message arrives before~$T(\lambda)$. 
    It might be the case that an attacker's message arrives first.
    If the mechanism does not receive any messages, it times out.
    This is possible only if~$\lambda\leq \lambda_0^f$, by definition of~$\lambda_0^f$.

    Otherwise,~$M_f^\lambda$ eventually receives and processes at least one message during its execution, then we consider two cases separately: a user's message arrives first, or an attacker's message arrives first.
    If the user's message arrives first to~$M_f^\lambda$, with her set of credentials, then as~$f(\sigma_U)=1$ (because~$\sigma \in \Prof{M_\deter}$),~$M_f^\lambda$ decides the user.
    We get that if~$\lambda>\lambda_0^f$, then~$M_f^\lambda$ decides the user.
    And if~$\lambda\leq\lambda_0^f$, then~$M_f^\lambda$ either decides the user or times out.
    Overall we showed that if the user's message arrives first, then if~$\lambda>\lambda_0^f$ the user wins, otherwise, either the user wins or the execution times out.
    So~$\lambda_\suc^f=\lambda_0^f$.

    Otherwise, the attacker's message arrives first to~$M_f^\lambda$.
    If~$f(\sigma_A)=0$, then~$M_f^\lambda$ decides the user, and we are done, because the user wins.
    The only other option is if~${f(\sigma_A)=1}$.
    We show that this is not possible.

    By definition of~$f$,~$f(q)=1$ if and only if there exists a scenario~$\sigma_{U:q}$ in which the user's availability vector is~$q$ and~$M_\deter$ succeeds in~$\sigma_{U:q}$.
    Let~$S_{U:q}$ be the winning user strategy for~$M_\deter$ in~$\sigma_{U:q}$.
    Then for all attacker strategy families~$S_A$, schedulers~$\gamma = (\id,\ord,v_\gamma)$ and random tapes~$v$, there exists a security parameter~$\lambda_\suc$ such that for all~$\lambda>\lambda_\suc$, the user wins the execution~$\exec_{U:q}^\lambda = (M_\deter^\lambda, \sigma_{U:q}, S_{U:q}^\lambda, S_A^\lambda, \gamma, v)$.
    This includes the attacker strategy that does not send any messages~$S_A^\lambda = \bot$.
    Then during the execution~$\exec_{U:q}^\lambda$'s main loop there exists a time~$\tau<T(\lambda)$ when~$M_\deter^\lambda$ decides the identifier of the user.  

    Now consider the execution of~$M_\deter^\lambda$ in~$\sigma$, in which the attacker uses the same strategy as the users'~$S_{U:q}^\lambda$ from~$\sigma_{U:q}$.
    This is possible as the attacker has access to all credentials the user had in~$\sigma_{U:q}$.
    Let~$\gamma'$ be a scheduler in which the user and the attacker's identifiers are opposite to the ones in~$\gamma$, the attacker's message arrives first and the user's message is delayed beyond~$\tau$.
    And let~$v'$ be any random tape.
    Denote this execution by~$\exec^\lambda$. 

    At any time step before~$\tau$, the mechanism~$M_\deter^\lambda$ sees the same execution prefix whether it is in~$\exec^\lambda$ or in~$\exec_{U:q}^\lambda$, 
    and thus, it cannot distinguish between the case where it is in~$\exec^\lambda$ or~$\exec_{U:q}^\lambda$. 
    As~$M_\deter^\lambda$ decides the user in~$\exec_{U:q}^\lambda$'s main loop at time~$\tau$, then it must decide the attacker at the same time~$\tau$ in~$\exec^\lambda$'s main loop (as the identifiers are reversed, and the mechanism sees the exact same execution prefix). 
    Therefore,~$M_\deter^\lambda$ decides the attacker in~$\exec^\lambda$ contradicting the assumption that~$M_\deter^\lambda$ succeeds in~$\sigma$.
    We conclude that~$f(\sigma_A)=0$ and~$M_f^\lambda$ decides the user also in this case. 
    Therefore,~$M_{\textit{f}}$ is successful in~$\sigma$.
    Overall we showed that the mechanism family~$M_{\textit{f}}$ a monotonic Boolean mechanism family and dominates~$M_{\textit{det}}$.
\end{proof}

\section{Success Equivalence}
\label{app:proof_success_equivalence}

\begin{lemma}
    Let~$f$ be a monotonic Boolean function, and let~$M_f$ be the mechanism family of~$f$.
    $M_f$ is successful in a scenario~$\sigma$ if and only if~$f(\sigma_U)=1$ and~$f(\sigma_A)=0$.
\end{lemma}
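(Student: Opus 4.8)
The plan is to prove the two directions separately, with the whole argument turning on two facts: the monotonicity of~$f$, and the observation that any player can only present (prove availability of) a subset of the credentials she actually holds, so the availability vector that~$M_f$ extracts from her message is always component-wise below her own availability vector.

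For the forward direction, I would assume~$f(\sigma_U)=1$ and~$f(\sigma_A)=0$ and exhibit a winning user strategy~$S_U$: in its first step it sends a single message proving availability of all of the user's credentials, so the extracted vector is exactly~$\sigma_U$. Fixing an arbitrary attacker strategy, scheduler, and random tape, I split on which player's first message~$M_f$ receives first (recall~$M_f$ is one-shot and decides immediately on the first message from any player). If the user's message arrives first,~$M_f$ extracts~$\sigma_U$, and since~$f(\sigma_U)=1$ it decides the user. If the attacker's message arrives first, it is either invalid---in which case~$M_f$ decides the other player, the user---or it encodes some vector~$q \le \sigma_A$; by monotonicity together with~$f(\sigma_A)=0$ we get~$f(q)=0$, so again~$M_f$ decides~$1-i$, the user. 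In the family model I would additionally note that for all sufficiently large~$\lambda$ the user's message arrives before the timeout, and that for smaller~$\lambda$ the only alternative to a user win is a timeout, matching Definition~\ref{app:success_family}.

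For the converse I would argue the contrapositive: if~$f(\sigma_U)=0$ or~$f(\sigma_A)=1$, then~$M_f$ fails, i.e.\ no user strategy wins against all attacker strategies and schedulers. Given an arbitrary user strategy~$S_U$, consider two sub-cases. If~$f(\sigma_U)=0$, pick the scheduler that delivers the user's first message before any other message; any vector~$q \le \sigma_U$ the user can present satisfies~$f(q)=0$ by monotonicity, so~$M_f$ decides the attacker (and if the user never sends a valid message,~$M_f$ never decides in her favor and times out). If instead~$f(\sigma_A)=1$, let the attacker present her full credential set, so the extracted vector is~$\sigma_A$, and let the scheduler deliver the attacker's message first; then~$M_f$ evaluates~$f(\sigma_A)=1$ and decides the attacker. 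In both sub-cases the chosen user strategy loses, so~$M_f$ fails.

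The case splits are routine; the only real care needed is (i)~consistently using that a presented availability vector is dominated by the presenter's true vector, so that monotonicity is applied in the correct direction in each branch, and (ii)~in the security-parameter model, the bookkeeping ensuring the scheduler can always realize the desired message ordering before the timeout while preserving the win-or-timeout guarantee. This last point is precisely the asynchrony argument already established in Lemma~\ref{app:lem:M_os_attacker_cant_win}, so I would reuse that reasoning rather than reprove it.
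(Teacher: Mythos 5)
Your proposal is correct and follows essentially the same route as the paper: both directions rest on the one-shot credential-based structure of~$M_f$, the fact that a presented availability vector is dominated by the presenter's true vector, and monotonicity of~$f$, with the same case split on whose message arrives first and the same security-parameter bookkeeping. The only cosmetic difference is that you phrase the ``only if'' direction as a contrapositive (exhibiting a bad scheduler/attacker when~$f(\sigma_U)=0$ or~$f(\sigma_A)=1$), whereas the paper argues it directly from the existence of a winning user strategy; the underlying reasoning is identical.
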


\begin{proof}
    For the first direction, assume that~$M_f$ is successful in a scenario~$\sigma$.
    Then, there exists a winning user strategy family~$S_U$ for~$M_f$ in~$\sigma$ that sends her set of credentials in a single message on the first step.
    For every attacker strategy~$S_A$, scheduler~$\gamma$ and random tape~$v$, there exists a security parameter~$\lambda_\suc$ such that for all~$\lambda>\lambda_\suc$ the user wins the execution~$(M_f^\lambda, \sigma, S_U, S_A ,\gamma, v)$.
    
    Because~$S_U^\lambda$ sends a message in the first step, there exists a security parameter~$\lambda_0^f$ such that for all~$\lambda>\lambda_0^f$, the user's message eventually arrive before~$T(\lambda)$. 
    It might be the case that an attacker's message arrives first.
    If the mechanism does not receive any messages, it times out.
    This is possible only if~$\lambda<\lambda_0^f$, as the user strategy we chose always send a message.

    Let~$\lambda> \lambda_0^f$ and consider a scheduler~$\gamma$ for which the user's message arrives first.
    Consider the execution~$\exec = (M_f^\lambda, \sigma, S_U, S_A ,\gamma, v)$.
    As the user's message arrives first, with her set of credentials,~$M_f$ will extract the availability vector~$\sigma_U$ of the credentials. 
    Then, as~$M_f$ is successful in~$\sigma$, it will return the identifier of the user.
    This can happen only if~$f(\sigma_U)=1$.
    
    Now let~$S_U$ be a winning user strategy and~$S_A$ an attacker strategy that sends any subset of the attacker's credentials.
    Let~$\gamma'$ and~$v'$ be a scheduler and random tape such that the attacker's message arrives first, and consider the execution~${\exec' = (M_f^\lambda, \sigma, S_U, S_A ,\gamma', v')}$.
    In that case, as the attacker sends a subset of her credentials,~$M_f^\lambda$ will extract the availability vector~$\sigma_A'$ of the credentials. 
    Then, as~$M_f$ is successful in~$\sigma$, it will return the identifier of the user.
    This is possible only if~$f(\sigma_A')=0$.
    This is true for any subset of credentials the attacker chooses, including the set of all her credentials. 
    Thus,~$f(\sigma_A)=0$.
    Therefore, we showed that if~$M_f$ is successful in a scenario~$\sigma$, then~$f(\sigma_U)=1$ and~$f(\sigma_A)=0$.
    
    For the second direction, assume that~$f(\sigma_U)=1$ and~$f(\sigma_A)=0$.
    Let~$S_U$ be a user strategy family that sends the set of credentials corresponding to~$\sigma_U$ in the first step.
    And let~$S_A$,$\gamma$ and~$v$ be any attacker strategy, scheduler and random tape.
    We show that~$S_U$ is a winning user strategy for~$M_f$ in~$\sigma$.

    Consider the execution~${(M_f^\lambda, \sigma, S_U, S_A ,\gamma, v)}$.
    If no message was received, then~$M_f^\lambda$ times out.
    This is possible only if~$\lambda<\lambda_0^f$, as the user strategy we chose sends a message on the first step.
    Otherwise,~$M_f^\lambda$ receives a message.

    If the user's message arrives first to~$M_f^\lambda$ with the credentials corresponding to~$\sigma_U$, then~$M_f^\lambda$ extracts the availability vector~$\sigma_U$ of the credentials. 
    Then, as~$f(\sigma_U)=1$, the mechanism~$M_f^\lambda$ returns the identifier of the user, thus the user wins the execution.
    Otherwise, the attacker's message~$\msg_A$ arrives first to~$M_f^\lambda$.
    If~$\msg_A$ does not contain a valid set of credentials, then~$M_f^\lambda$ returns the identifier of the user, and she wins the execution.
    Otherwise,~$M_f^\lambda$ extracts the availability vector~$q \leq \sigma_A$ of the credentials in~$\msg_A$ (the attacker might send any subset of credentials she knows). 
    Then, because~$f$ is monotonic and~$f(\sigma_A)=0$, also~$f(q)=0$ and $M_f^\lambda$ returns the identifier of the user, thus the user wins the execution.
    In both cases, if~$\lambda>\lambda_0^f$, the execution terminates and the user wins.
    And if~$\lambda < \lambda_0^f$, either the user wins or the execution times out.
    Thus,~$M_f$ is successful in~$\sigma$, concluding the proof that~$M_f$ is successful in a scenario~$\sigma$ if and only if~$f(\sigma_U)=1$ and~${f(\sigma_A)=0}$.
\end{proof}

\section{Partial Boolean Mechanisms}
\label{app:proof_partial_Boolean_function}
\begin{restate}{Observation~\ref{clm:partial_Boolean_function}}
    \partialBooleanfunction
\end{restate}

\begin{proof}
    Let~$M_1$,~$M_2$,~$T_1$,~$T_2$,~$F_1$ and~$F_2$ be as in the claim.
    We show that,~$T_1 \subseteq T_2$ and~$F_1 \subseteq F_2$.
    First, note that as~$\Prof{M_1} \neq \emptyset$, then~$T_1$ and~$F_1$ are both not empty.
    Because the profile of a mechanism is the Cartesian product of the set of user availability vectors~($T_1$) with the set of attacker availability vectors~($F_1$)~(Observation~\ref{cor:profile_of_M_f}).
    Similarly,~$T_2$ and~$F_2$ are both not empty.
    Assume for contradiction this  at least one of the following holds:~$T_1 \not\subseteq T_2$ or~$F_1 \not\subseteq F_2$.
    
    Without loss of generality, assume that~$T_1 \not\subseteq T_2$.
    As~$\Prof{M_1}\neq \emptyset$, there exists~${q\in T_1}$ such that~$q\notin T_2$.
    Consider a scenario~$\sigma$ such that~$\sigma_U=q$ and~$\sigma_A\in F_1$.
    Then,~$\sigma\in \Prof{M_1}$ and~$\sigma\notin \Prof{M_2}$, contradicting the fact $\Prof{M_1} \subseteq \Prof{M_2}$.
    Thus,~$T_1 \subseteq T_2$.
    Similarly,~$F_1 \subseteq F_2$.
\end{proof}

\begin{restate}{Lemma~\ref{lem:profile_partial_func}}
    \profilePartialFnc
\end{restate}

\begin{proof}
    Let~$M$ be a mechanism with~$n$ credentials.
    If~$M$'s profile is empty, then~$M$ is equivalent to the Boolean mechanism of constant Boolean function~$f(x)=0$.
    A constant function is monotonic, and thus we are done.
    
    Otherwise, by Theorem~\ref{thm:M_to_inc}, there exists a monotonic Boolean function~$f$ such that the Boolean mechanism~$M_f$ dominates~$M$.
    That is,~$\Prof{M} \subseteq \Prof{M_f}$.
    Let~$T$ be the set of all user availability vectors in~$\Prof{M}$ and~$F$ be the set of all attacker availability vectors in~$\Prof{M}$.

    By definition, To prove that~$(T,F)$ is a partial Boolean function, we must show that~$T\cap F = \emptyset$.
    Let~$T_f$ and~$F_f$ be the sets of all user and attacker availability vectors in~$\Prof{M_f}$ respectively.
    As~$M$ and~$M_f$'s profiles are not empty, by Observation~\ref{clm:partial_Boolean_function}, we get that~$T\subseteq T_f$ and~$F\subseteq F_f$.

    We show that~$T\cap F = \emptyset$.
    If there exists a vector ${q\in T\cap F\subseteq T_f \cap F_f}$, then for~$q$, it holds that~$f(q)=1$ and~$f(q)=0$, contradicting the fact that~$f$ is a well-defined function.
    Thus,~$T \cap F=\emptyset$ and~$(T,F)$ is a partial Boolean function.
    It is monotonic because~$f$ is an extension of~$(T,F)$ and~$f$ is monotonic.
    And the mechanism~$M$ is equivalent to the monotonic partial Boolean mechanism of~$(T,F)$.
\end{proof}

\section{Profile Size Bounds}
\label{app:profile_size_bound}

We calculate the number of viable scenarios.
\begin{restate}{Observation~\ref{obs:non_viable_scenarios_num}}
    \label{app:obs:non_viable_scenarios_num}
    \numViableScenarios
\end{restate}

\begin{proof}
    The number of scenarios for~$n$ credentials is~$4^n$.
    The number of viable scenarios is the number of scenarios in which there exists at least one safe credential.
    The number of scenarios in which there are no safe credentials is the number of scenarios in which all credentials are either lost, leaked, or stolen.
    That is,~$3^n$ scenarios.
    Therefore, the number of viable scenarios is~$4^n-3^n$.
\end{proof}

We prove an upper bound on the numbe of scenarios that can be added to a profile of a partial Boolean function.
\begin{restate}{Observation~\ref{obs:scenarios_addition}}
    \profileSizeBound
\end{restate}

\begin{proof}
    Let~$(T,F)$ be a partial Boolean function of~$n$ credentials and let~$s = max(|T|,|F|)$.
    For~$n$ credentials, the number of Boolean vectors is~$2^n$.
    By Observation~\ref{cor:profile_of_M_f}, we have~$|\Prof{M_{(T,F)}}| = |T| \cdot |F|$.
    There are three separate cases:
    
    (1) If~$s \geq 2^{n-1}$ and~$s=|T|$ then the maximal size of a profile is reachable if all vectors~$v\notin T\cup F$ are added to~$F$.
    That is, the number of scenarios that can be added to the profile is at most~${|T| \cdot (2^n-|T|)- |\Prof{M_{(T,F)}}|}$
    
    (2) If~$s \geq 2^{n-1}$ and~$s=|F|$ the case is symmetric to the previous one and the number of scenarios that can be added to the profile is at most~${|F| \cdot (2^n-|F|)- \Prof{M_{(T,F)}}}$
    
    (3) Otherwise, we get that $s < 2^{n-1}$, then the maximal size of a profile is~$4^n-3^n$, the same as the number of viable scenarios (Observation~\ref{obs:non_viable_scenarios_num}) as by \cite{cryptoeprint:2022/1682}, non-viable scenarios are not in the profile.
    In this case, the number of scenarios that can be added to the profile is at most $4^n-3^n - |\Prof{M_{(T,F)}}|$.
\end{proof}

\section{Algorithm Analysis}
\label{app:algorithm_analysis}

\subsection{Algorithm Correctness}
\label{app:algorithm_correctness}
We first prove that our algorithm keeps the (possibly partial) truth table monotonic after each update.
\begin{claim}
    \label{app:claim:monotonicity_update}
    After each update, the partial monotonic truth table remains a (possibly partial) monotonic truth table, and the above update does not contradict previous ones.
\end{claim}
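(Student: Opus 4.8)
The plan is to prove the claim by induction on the number of times \emph{updateTable} is invoked, maintaining an invariant slightly stronger than bare monotonicity: that the set $T \subseteq \{0,1\}^n$ of entries currently mapped to $1$ is an up-set (closed upward under the coordinatewise order $\geq$), the set $F$ of entries mapped to $0$ is a down-set (closed downward), and $T \cap F = \emptyset$. I would first record the standard fact that these three properties are exactly what is needed: if some $a \in T$ and $b \in F$ satisfied $a \leq b$, then upward-closedness of $T$ would force $b \in T$, contradicting $b \in F$ together with $T \cap F = \emptyset$; hence no $1$-entry lies below a $0$-entry, which is precisely the condition under which a monotonic extension exists (Definition~\ref{def:monotonic_partial_Boolean_function}). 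Establishing the invariant therefore yields \emph{both} conclusions at once: monotonicity, and "no contradiction with previous updates," the latter being literally the assertion $T \cap F = \emptyset$. The base case is immediate, since the initial table has $T = \{1^n\}$ (a singleton at the top, trivially an up-set), $F = \{0^n\}$ (a singleton at the bottom, a down-set), and $1^n \neq 0^n$, so the invariant holds.

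For the inductive step, suppose the invariant holds before a call \emph{updateTable}$(\truthTable, \sigma)$. Such a call is reached only when the guard on line~\ref{alg:line:include_scenario_cond} passes, i.e.\ $\prevUserVal \neq 0$ and $\prevAttVal \neq 1$. The update produces $T' = T \cup \{x : x \geq \sigma_U\}$ and $F' = F \cup \{y : y \leq \sigma_A\}$. A union of up-sets is an up-set and a union of down-sets is a down-set, so $T'$ and $F'$ retain closedness for free; the substantive task is to show $T' \cap F' = \emptyset$, and this is where the guard, the invariant, and the viability of $\sigma$ all enter. I would rule out the three ways a conflict could arise. First, a newly added $1$-entry $x \geq \sigma_U$ cannot already lie in $F$: downward-closedness of $F$ would then force $\sigma_U \in F$, i.e.\ $\prevUserVal = 0$, contradicting the guard. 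Symmetrically, a newly added $0$-entry $y \leq \sigma_A$ cannot lie in $T$, since upward-closedness of $T$ would give $\sigma_A \in T$, i.e.\ $\prevAttVal = 1$. Finally, a new $1$-entry and a new $0$-entry cannot coincide: any such $z$ would satisfy $\sigma_U \leq z \leq \sigma_A$, hence $\sigma_U \leq \sigma_A$, contradicting the fact that $\sigma$ is viable ($\sigma_U \not\leq \sigma_A$). This closes the induction.

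The main obstacle, and the step I would treat most carefully, is the third conflict case, since it is the only one not reducible to the data-structure bookkeeping encoded in the guard; it rests instead on the genuine structural fact that the algorithm enumerates only viable scenarios, so I would invoke the definition of viability explicitly there. One further loose end to tie up: the "arbitrary completion" on line~\ref{alg:line:complete_table_arbitrarly} also mutates the table, but it adds scenarios one at a time under the same monotonicity-preserving rule, so the identical invariant argument applies verbatim, and the claim extends to every (partial or complete) table the algorithm ever produces.
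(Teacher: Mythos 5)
Your proof is correct and follows essentially the same route as the paper's: both rest on the guard conditions ($\prevUserVal \neq 0$, $\prevAttVal \neq 1$) to rule out conflicts between the newly forced entries and the existing ones, and on viability ($\sigma_U \nleq \sigma_A$) to rule out a conflict between the two newly forced sets. Your version is somewhat more explicit --- framing the argument as an induction with an up-set/down-set/disjointness invariant and covering the arbitrary-completion step on line~\ref{alg:line:complete_table_arbitrarly} --- but the underlying case analysis is the same as the paper's.
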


\begin{proof}
    Given a monotonic partial truth table, it holds that for all~$x,y\in\{0,1\}^n$ such that~$x>y$ we have that if~$f(y)=1$ then~$f(x)=1$, and if~$f(x) = 0$ then~$f(y)=0$.
    According to the algorithm, in every step where we add a viable scenario~$\sigma$, we check that~$\sigma_U$ is not set to~$0$ and~$\sigma_A$ is not set to~$1$.
    If so, we set~$\sigma_U$ to~$1$ and~$\sigma_A$ to~$0$.
    As~$\sigma$ is viable, we have that~$\sigma_U \nleq \sigma_A$, so this update preserves~monotonicity.
    
    Because~$\sigma_U$ was not set to~$0$ before, for all~$x>\sigma_U$ we have that~$f(x) \neq 0$, and we can set them to~$1$.
    Similarly, because~$\sigma_A$ was not set to~$1$ before, for all~${y<\sigma_A}$,~$f(y) \neq 1$, and we can set them to~$0$.
    Therefore, the truth table can still be completed into a monotonic Boolean function.
\end{proof}

We now prove our algorithm correctness, including termination, the truth table validity, and bound the distance of the returned mechanism from the optimal one.

\begin{lemma}
    \label{app:lemma:algorithm_correctness}
    Let~$n$ be the number of credentials and let~$0 \leq \delta < 1$.
    Then when Algorithm~\ref{alg:scenario_based_search} returns, the variable \maxTable contains a truth table of a mechanism that is at least~$\delta$ close to the optimal mechanism.
\end{lemma}

\begin{proof}
    We divide the proof into three parts.
    First, we show that the algorithm always stops.
    Second, we show that when the algorithm returns, \maxTable contains a valid complete monotonic truth table.
    Finally, we show that the mechanism defined by the truth table is at least~$\delta$ close to the optimal mechanism.
    
    To show that the algorithm always stops, we note that the number of scenarios that can be added to a profile of a partial Boolean mechanism is bounded.
    Thus, in the worst case, the algorithm explores all possible scenarios, which is a finite number. 
    Once no scenarios are left, the condition in line~\ref{alg:line:zero_additions} is met, and the algorithm stops.
    
    To show that the algorithm always results in a valid complete monotonic truth table, 
    we note that by Claim~\ref{app:claim:monotonicity_update}, the truth table remains a (possibly partial) monotonic truth table after each update.
    Thus, it is sufficient to show that the algorithm updates \maxTable to a valid complete monotonic truth table at least once, 
    and never updates it to a non-complete truth table.

    The algorithm stops the recursive exploration only when one of the three stopping conditions is met.
    In the first case (line~\ref{alg:line:complete_table}), the truth table is complete, and the algorithm updates \maxTable to this complete table.
    In the second case (line~\ref{alg:line:zero_additions}), the algorithm completes the truth table arbitrarily, and \maxTable gets the completed table.
    Finally, in the third case (line~\ref{alg:line:delta_condition}), the algorithm prunes the branch, and returns, without updating \maxTable. 
    Thus, if the algorithm updates \maxTable, it always updates it to a complete valid truth table.

    To show that the algorithm updates \maxTable at least once, 
    we note that the third stopping condition (line~\ref{alg:line:delta_condition}) is the only one that does not update \maxTable.
    However, this condition is met only if \maxSuccessProb is greater than 0.
    And as \maxSuccessProb is initialized to 0, and \maxTable and \maxSuccessProb are updated only together (lines~\ref{alg:line:max_update_1}-\ref{alg:line:max_update_1_2} and~\ref{alg:line:max_update_2}-\ref{alg:line:max_update_2_2}),
    the algorithm must return at least once because one of the first two stopping conditions is met.
    Thus, \maxTable is updated to a valid complete monotonic truth table.

    To complete the proof, we show that the mechanism defined by the truth table's success probability is at most~$\delta$ lower than the optimal mechanism.
    Denote by~$M$ the mechanism defined by the truth table \maxTable after the algorithm returns, with success probability \maxSuccessProb, and by~$M^*$ the optimal mechanism.
    Assume for contradiction that the probability $\maxSuccessProb +\delta$ is smaller than the success probability of~$M^*$.
    By Theorem~\ref{thm:M_to_inc}, the optimal mechanism~$M^*$ can be represented by a (complete) monotonic Boolean function.
    Therefore, there exists a path along the recursive exploration tree that leads to the optimal mechanism, which the algorithm pruned (in line~\ref{alg:line:delta_pruning}).
    
    Consider the function call in which the algorithm pruned the path to the optimal mechanism.
    From Observation~\ref{obs:scenarios_addition}, 
    we know that~$\textit{\numPossibleAdditions}$ (line~\ref{alg:line:num_possible_additions}) is an upper bound on the number of scenarios that can be added to the current truth table without contradicting it.
    As \potentialSuccessProb is calculated by summing the probabilities of the current partial truth table and the next~$\textit{\numPossibleAdditions}$ scenarios with the highest probabilities,
    then in this branch, $\potentialSuccessProb \geq M^*$'s success probability.
    However, as the algorithm pruned this branch, we know that $\maxSuccessProb > \potentialSuccessProb - \delta$ (line~\ref{alg:line:delta_condition}).
    Therefore, $\maxSuccessProb + \delta > M^*$'s success probability.
    Contradicting the assumption that $\maxSuccessProb + \delta$ is smaller than the success probability of~$M^*$.
    Therefore, the algorithm returns a mechanism that is at most~$\delta$ away from the optimal mechanism.
   
    Overall, we showed that the algorithm returns a valid complete monotonic truth table representing a mechanism whose success probability is at most~$\delta$ lower than the optimal mechanism.
\end{proof}

\subsection{Algorithm Complexity}
\label{app:sec:algo_complexity}
We first discuss the difficulty of calculating the exact complexity of the algorithm in a general case~(\S\ref{sec:disscusion_complexity}).
Then we evaluate the complexity of the algorithm empirically~(\S\ref{sec:empirical_complexity}), by measuring the runtime of our algorithm for different numbers of credentials and fault probabilities.

\subsubsection{Bound}
\label{app:sec:disscusion_complexity}
To give a loose upper bound on a single function call of the algorithm's complexity, we analyze the complexity of each step separately.
First, to calculate the current profile (line~\ref{alg:line:get_profile}), we use a Cartesian product of the availability vectors of the user and the attacker as described in Observation~\ref{cor:profile_of_M_f}.
By~\cite{cryptoeprint:2022/1682} and Observation~\ref{obs:non_viable_scenarios_num}, this is bounded by~$O(4^n-3^n)$. 
Calculating the success probability of the current truth table (line~\ref{alg:line:curr_success}) requires summing the probabilities of the scenarios in the profile ($O(4^n-3^n)$).
Finding and sorting the possible additional scenarios probabilities (line~\ref{alg:line:sorted_additional_probabilities}) is~$O((4^n-3^n)\cdot n)$.
Finding the number of possible additions (line~\ref{alg:line:num_possible_additions}) is calculated based on Observation~\ref{obs:scenarios_addition} and requires~$O(1)$.
The bound itself is~$O(4^n-3^n)$ in the worst case, if we could add all of the viable scenarios.
Therefore, calculating the sum of the probabilities of the next~$\textit{\numPossibleAdditions}$ scenarios (line~\ref{alg:line:max_possible_additions}) is~$O(4^n-3^n)$.
Excluding the function calls, all other steps are~$O(1)$.
Thus, the complexity of a single recursive function call is~$O((4^n-3^n)\cdot n)$.
And in the worst case, the recursion depth may reach~$O(2^{4^n-3^n})$.
Resulting in a total complexity of~$O(2^{4^n-3^n}\cdot (4^n-3^n)\cdot n)$.

But this analysis assumes the worst case for every step of the algorithm.
However, in practice, many steps' worst case cannot happen simultaneously.
Thus, the actual complexity of the algorithm is much lower than the calculated upper bound.

Calculating the exact complexity of the algorithm depends on the number of credentials, their specific probabilities, and the parameter~$\delta$.
While an upper bound on the recursion depth is given by 2 to the power of the number of viable scenarios, the actual depth explored is much smaller.
This is due to multiple factors, for example (1) the exponential drop in the probability of different scenarios, combined with the fact that the algorithm prunes branches with negligible advantage;
(2) the fact that the number of scenarios that can be added to a profile of a partial Boolean mechanism is limited~(Observation~\ref{obs:scenarios_addition})
in a non-trivial way depending on which do result in a monotonic mechanism;
and (3)~each scenario adds not a single row to the truth table, but possibly many rows for keeping monotonicity 
(depending on the specific scenario and the current truth table).
While all those factors contribute to the algorithm's efficiency, 
it remains an open question whether the algorithm's exact complexity can be calculated theoretically (cf. the lack of a closed form expression for the number of monotinic Boolean functions~\cite{Dedekind1897}).

\subsubsection{Empirical Complexity}
\label{app:sec:empirical_complexity}

\begin{figure}[t]
    \centering
    \includegraphics[width=0.47\textwidth]{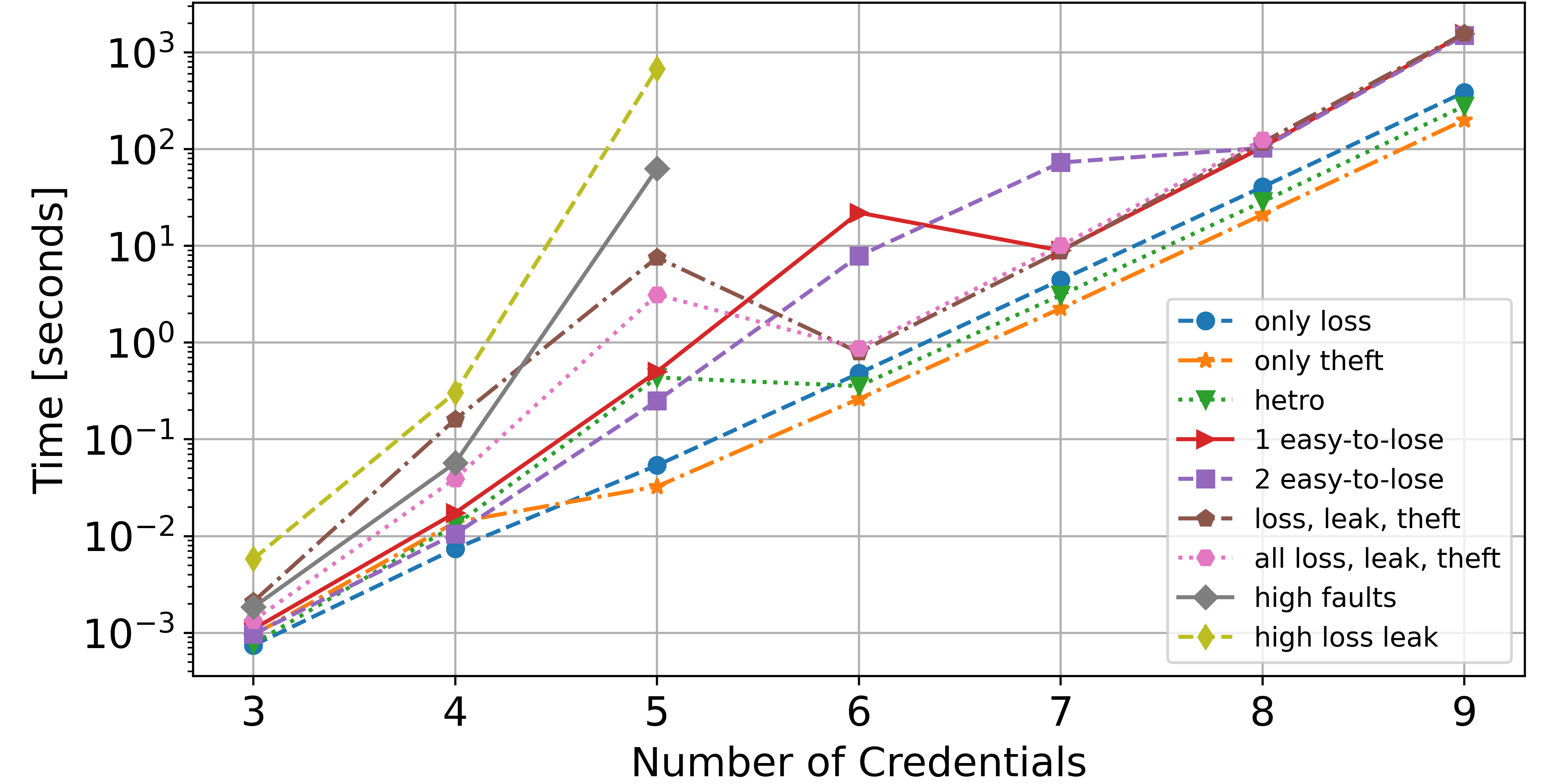}
    \caption{Runtime of the algorithm as a function of the number of credentials for different fault probabilities with $\delta=10^{-5}$.}
    \label{fig:runtime_full}
\end{figure}



We provide additional examples of our algorithm runtime behavior in Table~\ref{app:tab:time_functions} and Figure~\ref{fig:runtime_full}. 

Now we describe each plot in Figure~\ref{fig:runtime_full} and the corresponding fitted functions in Table~\ref{app:tab:time_functions}, categorized by the runtime function behavior.

\textbf{Exponential Growth:}
When one credential can suffer from up to a single type of fault and the rest can have up to two types of faults,
or when all credentials can suffer from all types of faults with low probability,
the algorithm's runtime grows exponentially ($O(4^n)$) with the number of credentials (all fits with~${R^2 > 0.97}$, exact values are in Table~\ref{app:tab:time_functions}).

For example, when all credentials can suffer from loss with~$P^\loss = 0.01, P^\leak=P^\theft=0$ 
(\emph{only loss} in Figure~\ref{fig:runtime_full}), 
or all credentials can suffer from theft with~$P^\theft = 0.01, P^\loss=P^\leak=0$ 
(\emph{only theft} in Figure~\ref{fig:runtime_full}).

When one key is prune to loss and leak with~$P^\loss=P^\leak=0.01, P^\theft=0$ and the rest can only be stolen with~$P^\theft=0.01$ 
(\emph{hetro} in Figure~\ref{fig:runtime_full}).
Or when one credential can be easily lost, but not leaked or stolen, with~$P_1^\loss=0.3, P_1^\leak=P_1^\theft=0$ and the rest can be either lost or leaked with~$j>1$,~$P_j^\loss=P_j^\leak=0.01,P_j^\theft=0$. 
(\emph{1~easily-to-lose} in Figure~\ref{fig:runtime_full}).
Similarly, when 2 credentials can be easily lost, but not leaked or stolen 
(\emph{2~easily-to-lose} in Figure~\ref{fig:runtime_full}).

We observe a similar trend when each credential can have all three types of faults, where at least two with low fault probabilities, $P^\loss=0.01, P^\leak=P^\theft=0.001$
(\emph{loss, leak, theft} in Figure~\ref{fig:runtime_full}),
or when one credential can be only lost with~$P_1^\loss=0.01, P_1^\leak=P_1^\theft=0$ and the rest can have all three types of faults with~$i>1$,~$P_i^\loss=0.01, P_i^\leak=P_i^\theft=0.001$ 
(\emph{all loss, leak, theft} in Figure~\ref{fig:runtime_full}).

\textbf{Super-Exponential Growth:}
When all credentials can suffer from two or more types of faults with a high probability, the algorithm's runtime grows super-exponentially with the number of credentials.
E.g., if all three fault types are possible with high probabilities, say $P^\loss=0.1, P^\leak=0.3, P^\theft=0.4$ (\emph{high faults} in Figure~\ref{fig:runtime_full}). 
Or when all credentials are pruned to two or more types of faults with high probabilities (e.g., $P^\loss=P^\leak=0.01, P^\theft=0$) (\emph{high loss leak} in Figure~\ref{fig:runtime_full}),
the algorithm's runtime grows too rapidly to obtain sufficient data points for regression analysis.

\begin{table*}[t]
    \centering
    \caption{Runtime Analysis}
    \begin{tabular}{|c|c|c|c|c|c|}
    \hline
    \textbf{Name} & \textbf{Loss} & \textbf{Leak} & \textbf{Theft} & \textbf{Formula} & \textbf{$R^2$} \\ \hline
    only loss & {$10^{-2}$} & {$0$} & {$0$} & $0.0014 \cdot 4^n  -12.01$ & 0.979 \\ \hline
    only theft & {$0$} & {$0$} & {$10^{-2}$} & $0.0007 \cdot 4 ^ n -6.20$ & 0.979 \\ \hline
    hetro & {$P_1=10^{-2}$, $i >1: P_i =0$} & {$P_1= 10^{-2}$, $i>1, P_i =0$} & {$P_1 = 0$, $i>1, P_i =10^{-2}$} & $0.001 \cdot 4 ^ n-8.68$ & 0.979 \\ \hline
    1easy-to-lose & {$P_1$, $i >1: P_i =10^{-2}$} & {$P_1 = 0$, $i>1, P_i =10^{-2}$} & {$0$} & $0.0059 \cdot 4 ^ n + -53.54$ & 0.967 \\ \hline
    2easy-to-lose & {$P_1=P_2=0.3$, $i >2: P_i =10^{-2}$} & {$P_1 = P_2 = 0$, $i>2, P_i =10^{-2}$} & {$0$} & $0.0056 \cdot 4^ n -42.99$ & 0.968 \\ \hline
    loss, all-leak & {$P_1=0$, $i >1: P_i =10^{-2}$} & {$10^{-2}$} & {$0$} & $0.0004 \cdot 4 ^ n -0.41$ & 0.977\\ \hline
    loss, leak, theft & {$10^{-2}$} & {$10^{-3}$} & {$10^{-3}$} & $0.006 \cdot 4^n -55.44$ & 0.97 \\ \hline
    all loss, leak, theft & {$10^{-2}$} & {$P_1= 0$, $i>1, P_i =10^{-3}$} & {$P_1= 0$, $i>1, P_i =10^{-3}$} & $0.0019 \cdot 4^n -4.69$ & 0.97 \\ \hline
\end{tabular}
\label{app:tab:time_functions}
\end{table*}

\end{document}